\makeatletter\@addtoreset{equation}{section}\makeatother
\renewcommand{\theequation}{\arabic{section}.\arabic{equation}}
\newtheorem*{rep@theorem}{\rep@title}
\newcommand{\newreptheorem}[2]{%
\newenvironment{rep#1}[1]{%
 \def\rep@title{#2 \ref*{##1}}%
 \begin{rep@theorem}}%
 {\end{rep@theorem}}}
\newtheorem{thm}{Theorem}[section] 
\newtheorem{lem}[thm]{Lemma}  
\newtheorem{cor}[thm]{Corollary} 
\newtheorem{hyp}{Hypothesis}
\newtheorem*{hyp*}{Hypotheses}
\newtheorem{rmk}[thm]{Remark}
\newtheorem{rslt}{Result}
\begin{document}

\title{Global bifurcation of localised 2D patterns emerging from spatial heterogeneity}
\author[1,2]{Dan J. Hill}
\author[3]{David J.B. Lloyd}
\author[3]{Matthew R. Turner}
\affil[1]{\small Mathematical Institute, University of Oxford, Oxford, OX2 6GG, UK}
\affil[2]{\small Fachrichtung Mathematik, Universit\"at des Saarlandes, Postfach 151150, 66041 Saarbr\"ucken, Germany}
\affil[3]{\small School of Mathematics and Physics, University of Surrey, Guildford, GU2 7XH, UK}

\maketitle
\begin{abstract}

\noindent We present a general approach to prove the existence, both locally and globally in amplitude, of fully localised multi-dimensional patterns in partial differential equations containing a compact spatial heterogeneity. While one-dimensional localised patterns induced by spatial heterogeneities have been well-studied, proving the existence of fully localised patterns emerging from a Turing instability in higher dimensions remains a key open problem in pattern formation. In order to demonstrate the approach, we consider the two-dimensional Swift--Hohenberg equation, whose linear bifurcation parameter is perturbed by a radially-symmetric potential function. In this case, the trivial state is unstable in a compact neighbourhood of the origin and linearly stable outside. We prove the existence of local bifurcation branches of fully localised patterns, characterise their stability and bifurcation structure, and then rigorously continue solutions to large amplitude via analytic global bifurcation theory. Notably, the primary bifurcating branch in the Swift--Hohenberg equation alternates between an axisymmetric spot and a non-axisymmetric `dipole' pattern, depending on the width of the spatial heterogeneity.


\end{abstract}

\section{Introduction}

In this paper we prove the global existence of fully localised two-dimensional patterns close to a pattern-forming or Turing instability. The key breakthrough is to use a compact region of spatial heterogeneity to induce a pattern-forming bifurcation away from the Turing point, where the linearised operator remains Fredholm. Although there are many proofs---particularly in the area of water waves---of global bifurcations of waves and patterns, the extension to fully localised multi-dimensional patterns remains elusive. As such, this work provides a significant  novel direction for future research.

The emergence of patterns with spatially localised extent are important for understanding phenomena in a wide range of applications from the formation of tornadoes~\cite{Navarro2015vortices-instab,Castano2018vortices-chaos}, patches and fairy circles in dryland vegetation~\cite{Byrnes2023Spots,Jaibi2020localised,Hill2024Predict,Mau2013}, and spikes on the surface of a magnetic fluid~\cite{Lloyd2015ferro-snaking,Richter2005ferro-spikes}; see~\cite{Bramburger2025review} for a review of localised pattern formation. The key bifurcation of these localised patterns from quiescence is due to a pattern-forming instability~\cite{Bramburger2025review}. While there is extensive theory for the emergence of one-dimensional localised patterns, to date there remains no existence proof for fully localised 2D patterns emerging from a Turing instability beyond the axisymmetric case~\cite{lloyd2009localized,mccalla2013spots,mcquighan2014oscillons,Hill2024n-dimension}. Fundamentally, this is due to the fact that the kernel of the linear operator at a 1D pattern-forming instability is finite dimensional but in 2D it is infinite-dimensional. We note that there have been recent advances in approximating these patterns~\cite{Hill2023dihedral-spot,Hill2024dihedral-ring,Hill2024amplitude}, as well as in proving their existence at certain parameter regimes~\cite{Aalst2025,Cadiot2025,Boissoniere2022} via rigorous numerics.

The present work is inspired by the use of spatial heterogeneity to induce the emergence of localised patterns in experiments. Two examples are the localised temperature hotspots that can lead to the nucleation of tornadoes in numerical simulations~\cite{Navarro2015vortices-instab,Castano2018vortices-chaos}, and the local amplification of a magnetic field inducing axisymmetric spikes to form on the surface of a magnetic fluid~\cite{Richter2005ferro-spikes}. The latter example directly informs our approach in this work. The experiments of Richter and Barashenkov~\cite{Richter2005ferro-spikes} utilise a probe coil to destabilise the flat state in a compact region to generate an axisymmetric soliton (see~\cite{Hill2020Localised} for a previous study of axisymmetric ferrofluid spikes in the absence of spatial heterogeneity). The induced spike persists following the removal of the probe coil, allowing the authors to study the behaviour of localised solutions to the spatially homogeneous problem. The Swift--Hohenberg equation we study in this work is taken to be a toy model for the probe coil problem, where we introduce a compactly-supported potential function to mimic the linear stability problem in the ferrofluid experiment. We note that there has been significant recent interest in the effects of spatial heterogeneity on pattern formation (see~\cite{Jaramillo2019striped-phase,Avery2024GL-fronts,Goh2024slow-ramp,Avery2019growing-stripes,Goh2018quenching-stripes,Chen2021stripe-growth,Goh2023review,Jaramillo2015inhom-stripes,Jaramillo2023inhom-targets,Jaramillo2018} for stripe/target deformation, \cite{Bastiaansen2020inhom-pulse,Doelman2018inhom-localized,Doelman2016defect,kamphuis2025pattern,Kao2014heterogeneous,Ponedel2016ForcedSnaking} for pattern formation in parabolic equations, and~\cite{Brooks_2019,Knight2013,Derks2012} in hyperbolic equations); the present work fits well within this expanding literature, while also developing a new direction for the rigorous study of localised multi-dimensional patterns.

In this paper, we develop a new theory for the existence of large amplitude localised patterns with dihedral symmetry, where the spatial heterogeneity in the linear operator is linearly destabilising in a compact disc centred at the origin and stabilising everywhere else; see Figure~\ref{fig:Heterogeneity}. The key advantage of this approach is that for this type of spatial heterogeneity the essential spectrum is bounded away from the imaginary axis and allows the use of Fredholm operator theory, overcoming the fundamental difficulty in the spatially homogeneous case. One is then left with the problem of finding bifurcating point spectra; we show how this can be done where the linear operator is isotropic, though we believe that this requirement can be relaxed. Provided one can then locate bifurcating point spectra, one can then employ Crandall--Rabinowitz theory~\cite{Crandall1971Bifurcation} and global bifurcation theory~\cite{Rabinowitz1971Global,BuffoniToland2003Global,Constantin2016} to prove the existence of branches of large-amplitude, nonlinear fully localised patterns. 

\begin{figure}[t!]
    \centering
    \includegraphics[width=0.7\linewidth]{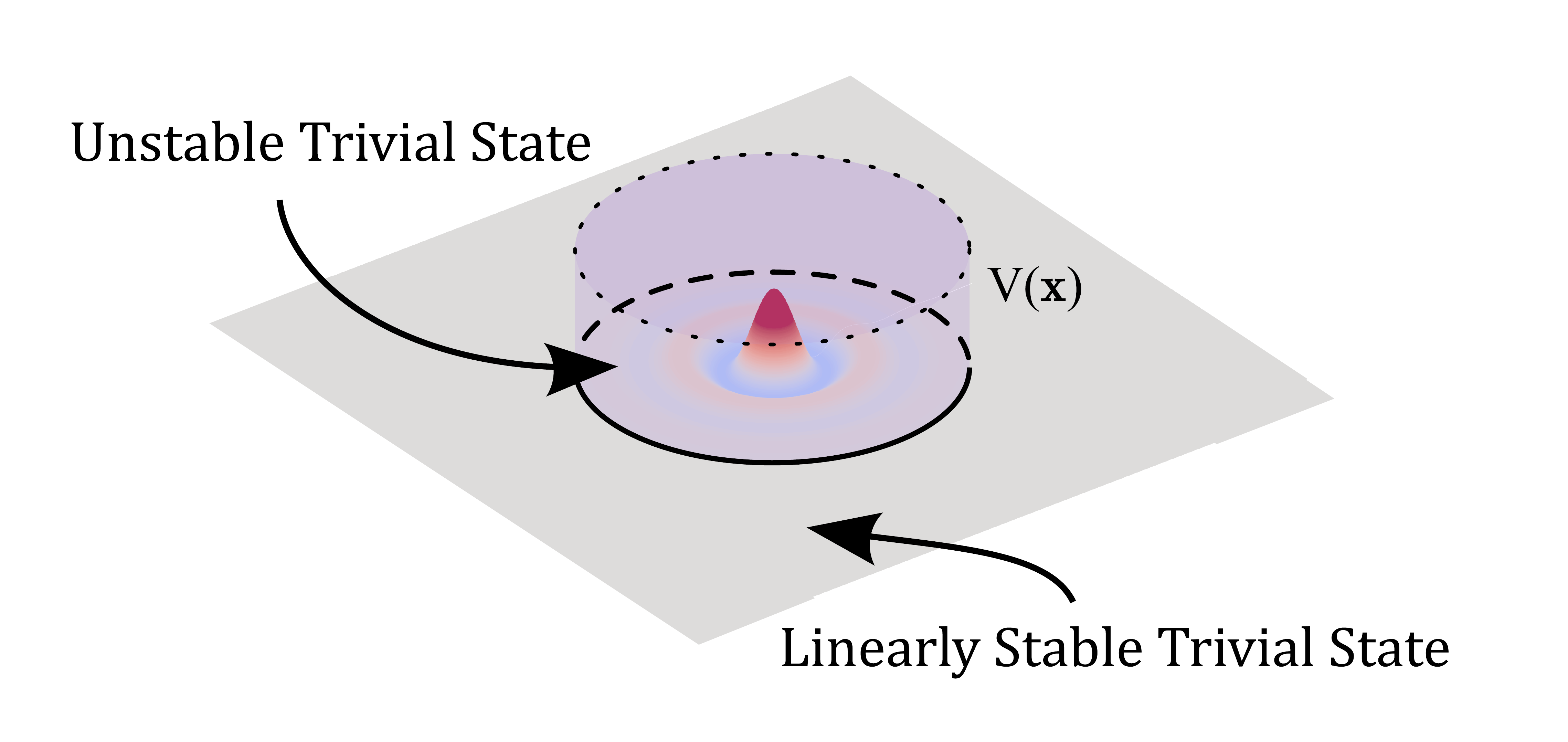}
    \caption{
    We consider a PDE system appended with a localised potential $V(\mathbf{x})$, causing the otherwise-stable trivial state to destabilise in a compact region and inducing the emergence of localised patterns.}
    \label{fig:Heterogeneity}
\end{figure}

The Crandall--Rabinowitz theorem allows one to prove the existence of small amplitude solutions bifurcating from a given base state while analytic global bifurcation theory---originally due to Dancer~\cite{Dancer1973Global} and later developed by Buffoni and Toland \cite{BuffoniToland2003Global} and Constantin, Strauss and V\u{a}rv\u{a}ruc\u{a}~\cite{Constantin2016}---allows one to continue small amplitude solutions to large amplitude. The Crandall--Rabinowitz theorem is typically applied to bounded domains or spatially periodic solutions; while localised solutions have been considered in some cases~\cite{Wheeler2015Solitary,Truong2022Global}, the application of Crandall--Rabinowitz theory is more difficult due to a lack of standard embedding and compactness properties in infinitely-extended spatial domains. While analytic global bifurcation theory has become invaluable in proving the existence of large amplitude water waves, including some recent applications to solitary waves (see~\cite[Section 5.6]{Haziot2022} and references therein), very few results exist in the pattern formation literature (see for instance~\cite{Nishiura1982Global,Fujii1983Global,Baltaev2008Global,garciaazpeitia2025globalbifurcationspiralwave} which use the global bifurcation theory of Rabinowitz~\cite{Rabinowitz1971Global}, and the more recent work of \cite{Bruell_2024,boehmer2025patternformationfilmrupture} which use analytic global bifurcation theory). A key difficulty with applying the theory of Buffoni and Toland~\cite{BuffoniToland2003Global} to localised solutions is a possible lack of compactness of solution sets. While in bounded domains (or, equivalently, for periodic solutions) local compactness of the solution set is guaranteed by a combination of Schauder estimates and Rellich--Kondrachev-type embeddings, these results no longer hold in infinitely extended domains. A crucial change in this approach was provided by Chen, Walsh and Wheeler in \cite{Chen2024Global}, where the authors adapted the analytic global continuation theorem of Buffoni and Toland so that local compactness was no longer an assumption, but rather a property of the solution set that could fail. To the best of the authors' knowledge, there are no global bifurcation results for localised patterns in pattern-forming systems, and so this paper presents a new application of the theory for 2D fully localised patterns with dihedral symmetry.

Our approach is to consider the following semi-linear PDE system. Fix $d\in\mathbb{N}$ with $d>1$, $N\in\mathbb{N}$, and consider 
\begin{equation}\label{e:geneq}
\partial_{t} u = \mathcal{L}_{\varepsilon}(\Delta)u + V_{\varepsilon}(\mathbf{x})\,u + \mathcal{N}_{\varepsilon}(u), \qquad\qquad u\in H^{2m}(\mathbb{R}^d;\mathbb{R}^N),  \qquad \varepsilon\in \Lambda\subseteq\mathbb{R},
\end{equation}
with $\mathcal{L}_{\varepsilon}(\Delta): H^{2m}(\mathbb{R}^d;\mathbb{R}^N) \to L^2(\mathbb{R}^d;\mathbb{R}^N)$ an isotropic linear operator (where $\mathcal{L}_{\varepsilon}(\cdot)$ is given by an $m$'th order polynomial, $\Delta$ denotes the $d$-dimensional Laplace operator, and $m>d/4$), $V_{\varepsilon}$ a localised potential, and $\mathcal{N}_\varepsilon$ a real-analytic function with $\mathcal{N}_{\varepsilon}(0) = \mathcal{N}_{\varepsilon}'(0) = 0$ for all $\varepsilon\in\Lambda$. We additionally assume that $\mathcal{L}_{\varepsilon}, V_{\varepsilon}, \mathcal{N}_{\varepsilon}$ are each analytic in $\varepsilon\in\Lambda$, and that the following hypotheses are satisfied.

\begin{hyp*} Let $X \subseteq H^{2m}(\mathbb{R}^{d};\mathbb{R}^{N})$\footnote{One must restrict to some subspace $X$ of $H^{2m}(\mathbb{R}^{d};\mathbb{R}^{N})$ in order to remove innate symmetries of the problem and obtain a one-dimensional kernel in hypothesis $(ii)$.} be an invariant subspace of \eqref{e:geneq} and (with slight abuse of notation) replace $\mathcal{L}_{\varepsilon}(\Delta)$ with its restriction onto $X$. We then suppose that

    \begin{enumerate}[label=(\roman*)]
    \item $\mathcal{L}_{\varepsilon}(-|\mathbf{k}|^2)<0$ for all $\mathbf{k}\in\mathbb{R}^d$, $\varepsilon\in \Lambda$
    ;
    \item there exists some $\varepsilon_0 \in \Lambda$ such that $\mathrm{ker}\,(\mathcal{L}_{\varepsilon_0}(\Delta) + V_{\varepsilon_0}) = \mathrm{span}\,\{v_0\}$ with $v_0\in X$; and
    \item 
    $\partial_\varepsilon(\mathcal{L}_{\varepsilon}(\Delta) + V_\varepsilon)_{\varepsilon = \varepsilon_0}v_0 \notin\mathrm{Im}(\mathcal{L}_{\varepsilon_0}(\Delta) + V_{\varepsilon_0})$. 
\end{enumerate}
\end{hyp*}

Hypothesis (i) supposes that the operator $\mathcal{L}_{\varepsilon}(\Delta)$ and the linearisation of \eqref{e:geneq} about some $u=\phi\in H^{2m}(\mathbb{R}^d;\mathbb{R}^N)$ are Fredholm operators with index zero. In particular, the essential spectrum of each operator is bounded away from zero and so any change in the linear stability of solutions to \eqref{e:geneq} is driven entirely by eigenvalues crossing the imaginary axis. Hypothesis (ii) supposes that, for some parameter value $\varepsilon = \varepsilon_0$, there is one eigenvalue lying on the imaginary axis with a single corresponding eigenfunction $v_0$. Finally, hypothesis (iii) supposes that the equation \eqref{e:geneq} has nondegenerate dependence on the parameter $\varepsilon$ at $\varepsilon = \varepsilon_0$. 

These hypotheses together tell us that the change in stability at $\varepsilon=\varepsilon_0$ corresponds to a transverse intersection between the trivial state $u\equiv0$ and some non-trivial bifurcating solution $u = u_*(\mathbf{x})$. We are then able to apply local and global bifurcation results to prove the existence of non-trivial fully localised patterns along a continuous bifurcation curve
\begin{equation*}
    \mathcal{C} = \{(\varepsilon,u) = (\mathcal{E}(s),v(s)):\; s\in\mathbb{R}\}.
\end{equation*}
Here, $\mathcal{E}(0) = \varepsilon_0$, $v(0)=0$, $v'(0) = v_0$, and the curve $\mathcal{C}$ possesses a local analytic parametrisation around each of its points. The local and global bifurcation results are proven in an open subspace $(\varepsilon,u)\in U\subseteq \Lambda\times X$, and so a solution $u=v(s)$ of \eqref{e:geneq} will remain spatially localised along the curve $\mathcal{C}$.

To illustrate our method, we consider the two-dimensional Swift--Hohenberg equation given by
\begin{equation}\label{e:SH}
\partial_t u = -[(1+\Delta)^2 + \varepsilon^2] u + 2\,\varepsilon^2\,V(|\mathbf{x}|)\,u + f(u), \qquad\quad \text{with}\qquad V(r) = \begin{cases}
    1, & r<R,\\
    0, & r>R.
\end{cases}
\end{equation}
where $u = u(t,\mathbf{x})\in\mathbb{R}$, $\mathbf{x}\in\mathbb{R}^2$, $R>0$ denotes the width of the potential $V$, and $f$ is real-analytic with $f(0) = f'(0) = 0$. This equation naturally destabilises the trivial state in a finite circle of radius $R$ while outside this region the trivial state is linearly stable. While \eqref{e:SH} is a considerably simpler model than the general PDE system \eqref{e:geneq}, the theory presented in later sections either follows identically or can be extended to \eqref{e:geneq} (under the assumption that hypotheses (i)-(iii) are satisfied); see Remark~\ref{rmk:generality} at the end of this section for further discussion of the generality of this approach.

We seek steady solutions $u = u(\mathbf{x})$ in $H^4_{\mathrm{e}}$---the subspace of functions in $H^4(\mathbb{R}^2)$ which are also even in $y$, i.e. $u(x,y) = u(x,-y)$---where we solve the steady equation
\begin{equation}\label{e:SH-steady}
0 = -[(1+\Delta)^2 + \varepsilon^2] u + 2\,\varepsilon^2\,V(|\mathbf{x}|)\,u + f(u),
\end{equation}
posed in the analogous even subspace $L^2_{\mathrm{e}}$ of $L^2(\mathbb{R}^2)$. We categorise possible solutions to \eqref{e:SH-steady} by introducing the following \textit{rotational subspaces}
\begin{equation}\label{def:Xmk}
\begin{split}
    X^{m}_{k} :={}& \left\{u\in H^{m}_{\mathrm{e}}\,:\; u(r\cos(\theta+\tfrac{2\pi}{k}),r\sin(\theta+\tfrac{2\pi}{k})) = u(r\cos\theta,r\sin\theta) \quad \forall \,r>0, \;\theta\in\mathbb{T}^1\right\},\\
\end{split}
\end{equation}
for any $k\in\mathbb{N}$ and $m\in\mathbb{N}_{0}$ (see Remark~\ref{rmk:Xmk} for the case when $k=0$). We note that we can equivalently define the spaces $X^m_k$
as the intersections of $H^m(\mathbb{R}^2;\mathbb{R})$ Sobolev spaces and $\mathbb{D}_{k}$ dihedral symmetry spaces, where we fix the dihedral line of reflection symmetry to be the x-axis.
\begin{rmk}\label{rmk:Xmk}
    We define, with a slight abuse of notation, the space $X^m_{0}$ to be the subspace of $H^m(\mathbb{R}^2;\mathbb{R})$ containing axisymmetric functions $u(\mathbf{x}) = u(|\mathbf{x}|)$; we likewise define $\mathbb{D}_0$ to be the set of axisymmetric functions in this notation. We also note that $X^m_{1}$ is the entire space $H^m_{\mathrm{e}}$ but we maintain the vacuous notation $X^m_1$ and $\mathbb{D}_{1}$ in order to be consistent with other values of $k\in\mathbb{N}_0$.
\end{rmk}

\begin{rmk}
    The rotational subspaces form chains of subspaces
\begin{equation*}
    X^{m+n}_{k} \subseteq X^{m}_k, \qquad\qquad\qquad X^{m}_{n\,k} \subseteq X^{m}_k
\end{equation*}
for any $k,m,n\in\mathbb{N}_0$. Furthermore, the dihedral space $\mathbb{D}_{k}$ represents an invariant subspace of the equation \eqref{e:SH} for each $k\in\mathbb{N}_0$.
\end{rmk}

For our particular choice of spatial heterogeneity, we are able to derive explicit characterisations for the point spectrum of the linearisation of \eqref{e:SH} about the trivial solution. In particular, we determine that eigenvalues $\lambda=\varepsilon^2\widetilde{\lambda}$ solve the implicit relation
\begin{equation*}
F_k(R,\varepsilon,\tilde{\lambda}) = \mathrm{Re}\left[\mathrm{e}^{\mathrm{i}\sin^{-1}(\widetilde{\lambda})}\, W[J_{k}(\alpha_{+}r), H^{(1)}_{k}( \beta r)](R)\,W[J_{k}(\alpha_{-}r), \overline{H^{(1)}_{k}( \beta r)}](R)\right]=0
\end{equation*}
for some $k\in\mathbb{N}_0$ and fixed $\varepsilon, R>0$. Here, $\alpha_{\pm} := \sqrt{1 \pm \varepsilon\,\sqrt{1 - \widetilde{\lambda}}}$, $\beta := \sqrt{1 + \mathrm{i}\,\varepsilon\,\sqrt{1 + \widetilde{\lambda}}}$; $J_k, H_k^{(1)}$ are the $k$'th order Bessel functions of the first and third kind; and $W[\cdot,\cdot]$ denotes the weighted Wronskian function given by
\begin{equation*}
    W[u,v](r) := r\left( u(r)\,v'(r) - u'(r)\,v(r)\right).
\end{equation*}
We numerically observe that there are infinitely many bifurcation points $\varepsilon = \varepsilon_{k,n}$ for $k\in\mathbb{N}_0$ and $n\in\mathbb{N}$, and that the parameter $R$ (corresponding to the width of the potential $V$) can be chosen such that the above hypotheses are satisfied.

\begin{figure}[htb!]
    \centering
    \includegraphics[width=\linewidth]{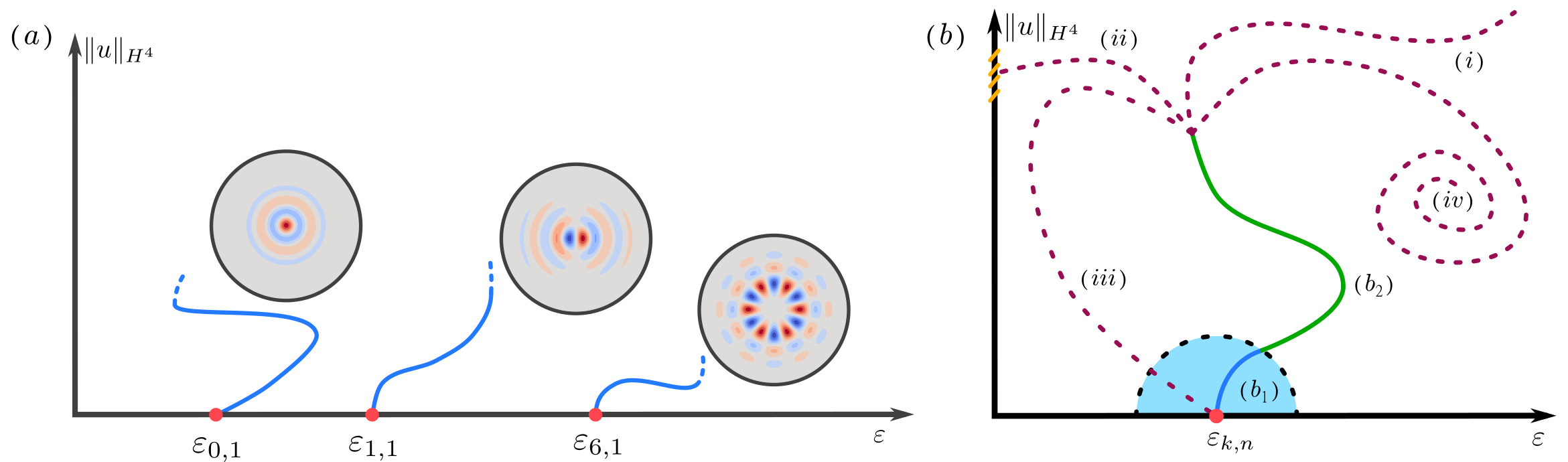}
    \caption{(a) A schematic bifurcation diagram showing different types of localised dihedral patterns bifurcating off the trivial state. (b) A local bifurcation curve ($b_1$) can be extended to a global bifurcation curve ($b_2$) using analytic bifurcation theory. In particular, the analytic curves that emanate from the local bifurcation either: $(i)$ blow up in their norm, $(ii)$ collide with the boundary $\varepsilon=0$, $(iii)$ form closed loops, or $(iv)$ lose compactness.}
    \label{fig:Bif_Schematic}
\end{figure}

Our aim is then to prove the schematic picture shown in Figure~\ref{fig:Bif_Schematic}(a). We note that we prove the existence of fully localised patterns with $\mathbb{D}_k$ symmetry for any $k\in\mathbb{N}_0$, and not just the $\mathbb{D}_0$, $\mathbb{D}_1$ and $\mathbb{D}_6$ localised patterns shown in Figure~\ref{fig:Bif_Schematic}(a). We employ analytic bifurcation theory in order to prove the existence of the bifurcating localised patterns, as illustrated in Figure~\ref{fig:Bif_Schematic}(b); we first prove the existence of a unique non-trivial solution branch in a neighbourhood of the bifurcation point (Figure~\ref{fig:Bif_Schematic}($b_1$)), which we call the \textit{local branch}, before extending solutions to large amplitude (Figure~\ref{fig:Bif_Schematic}($b_2$)), which we call the \textit{global branch}. The global solution branch is either unbounded, approaches the $\varepsilon=0$ axis, forms a closed loop, or loses compactness, as shown in Figure~\ref{fig:Bif_Schematic}($i$)-($iv$). In the case of radially symmetric patterns, we are able to prove that the global solution branch never loses compactness; an equivalent result for non-radially symmetric patterns remains an open problem.

We present two main theorems for the existence and stability of fully localised 2D pattern solutions to~\eqref{e:SH}. We begin by informally stating our existence result, where we refer to Theorems~\ref{thm:Local} and \ref{thm:Global} for the precise statements. 

\begin{rslt}[Existence]\label{rslt:1}
    Fix $k\in\mathbb{N}_0$ and $R>0$. For any $\varepsilon_0$ that is a simple root of $F_k(R,\varepsilon_0,0)=0$ (and is not a root for any other $k\in\mathbb{N}_0$), there exists a localised solution $u\in X^4_k$ which bifurcates from $(\varepsilon, u) = (\varepsilon_0, 0)$ and continues to a global solution branch.
\end{rslt}

Amongst the many bifurcating solutions we are considering, we are particularly interested in the initial destabilisation of the trivial state. We call this point the \textit{primary bifurcation}, corresponding to the the minimum value of $\varepsilon$ that solves $F_k(R,\varepsilon,0)=0$ for some $k\in\mathbb{N}_0$. We will presently observe that the primary branch consists of qualitatively different solutions for different parameter values, alternating between axisymmetric and non-axisymmetric profiles as $R$ increases. 

We prove the linear (in)stability of localised solutions to \eqref{e:SH} near their bifurcation, where we note qualitative differences between the primary and non-primary bifurcation points. We again present an informal version of our stability result here and refer the reader to Lemma~\ref{lem:Stab} for more details.

\begin{rslt}[Stability]\label{rslt:2}
Near the primary bifurcation point, the bifurcating solution is linearly stable if $\varepsilon>\varepsilon_0$ and unstable if $\varepsilon<\varepsilon_0$. Subsequent bifurcation points only yield locally unstable branches.
\end{rslt}

We briefly discuss how Results~\ref{rslt:1} and \ref{rslt:2}---proven for the Swift--Hohenberg equation \eqref{e:SH}---can be extended to more general PDE systems of the form \eqref{e:geneq}.
\begin{rmk}\label{rmk:generality}
Suppose we consider fully localised solutions to \eqref{e:geneq}, where we assume
\begin{itemize}
    \item $\mathcal{L}_{\varepsilon}(\cdot)$ is given by an $m$'th order polynomial, so that $\mathcal{L}_{\varepsilon}(\Delta): H^{2m}(\mathbb{R}^d;\mathbb{R}^N) \to L^2(\mathbb{R}^d;\mathbb{R}^N)$;
    \item $\mathcal{N}_\varepsilon$ is a real-analytic function with $\mathcal{N}_{\varepsilon}(0) = \mathcal{N}_{\varepsilon}'(0) = 0$ for all $\varepsilon\in\Lambda$, so that the linearisation of \eqref{e:geneq} is given by $(\mathcal{L}_\varepsilon(\Delta) + V_\varepsilon)\,u$;
    \item $m>d/4$, so that $H^{2m}(\mathbb{R}^d;\mathbb{R}^N)$ is a Banach algebra and hence $\mathcal{N}_\varepsilon:H^{2m}(\mathbb{R}^d;\mathbb{R}^N) \to L^{2}(\mathbb{R}^d;\mathbb{R}^N)$ analytically;
    \item $V_\varepsilon$ is chosen so that $\lim_{|\mathbf{x}|\to\infty}V_\varepsilon(\mathbf{x}) = 0$, and the multiplication operator $u\mapsto V_\varepsilon\,u$ maps from $H^{2m}(\mathbb{R}^d;\mathbb{R}^N)$ to $L^{2}(\mathbb{R}^d;\mathbb{R}^N)$; and
    \item $\mathcal{L}_{\varepsilon}, V_{\varepsilon}, \mathcal{N}_{\varepsilon}$ are each analytic in $\varepsilon\in\Lambda$.
\end{itemize}
Then, hypotheses (i-iii) guarantee that \eqref{e:geneq} satisfies the conditions for a Crandall-Rabinowitz local bifurcation. Since any solution $\phi\in H^{2m}(\mathbb{R}^d;\mathbb{R}^N)$ decays to zero as $|\mathbf{x}|\to\infty$, the linearisation about $\phi$ possesses the same Fredholm properties as $\mathcal{L}_{\varepsilon}(\Delta)$ which, coupled with the analytic dependence of \eqref{e:geneq} on $(\varepsilon,u)$, allows one to also globally continue the local bifurcation curves to large amplitude. For radially-symmetric patterns, our proof that global branches do not lose compactness also applies directly to \eqref{e:geneq} under the assumptions above.

The key difficulty in considering the general PDE system \eqref{e:geneq} is verifying hypotheses (ii) and (iii). Our choice of a piecewise-constant, radially-symmetric potential $V$ in \eqref{e:SH} allows for the explicit analysis of the point spectrum of $\mathcal{L}$ in Lemma~\ref{lem:eig-2}; even extending our analysis to a potential without radial symmetry results in a significantly more difficult eigenvalue problem, which may not be analytically tractable. 
\end{rmk}

    The paper is outlined as follows. In \S\ref{s:spec} we characterise the linear spectrum of \eqref{e:SH} and prove certain properties required for our bifurcation results. We then present our local and global bifurcation theorems in \S\ref{s:bif}, where we also prove the type of bifurcation and local stability of the bifurcating solutions. We finally conclude and discuss open problems in \S\ref{s:discuss}.

\section{Spectral properties of the linear operator}\label{s:spec}
We begin by considering the linear stability of solutions to \eqref{e:SH}. In particular, we analyse the stability of a general solution $\phi\in H^{4}_{\mathrm{e}}$---including the trivial solution $\phi\equiv0$---with respect to continuous perturbations, thus characterising the \textit{essential spectrum} of the linear problem (\S\ref{s:spec-ess}). We then characterise the \textit{point spectrum} of the linear problem by constructing the eigenvalues and critical eigenfunctions that determine the stability of the trivial solution (\S\ref{s:spec-point}).

We consider the linear operator $\mathcal{L}$, given by
\begin{equation}\label{def:L}
        \mathcal{L}u = -(1+\Delta)^2 u - \varepsilon^2\,u + 2\,\varepsilon^2\,V(|\mathbf{x}|)\,u
\end{equation}
as well as the linearisation $\mathcal{L}_{\phi}$ of \eqref{e:SH} about a function $\phi\in H^4_{\mathrm{e}}$, given by
\begin{equation}\label{def:L-phi}
        \mathcal{L}_{\phi}u = -(1+\Delta)^2 u - \varepsilon^2\,u + 2\,\varepsilon^2\,V(|\mathbf{x}|)\,u + f'(\phi(\mathbf{x}))\,u.
\end{equation}

\subsection{Essential spectrum}\label{s:spec-ess}
We note that \eqref{def:L} is a special case of \eqref{def:L-phi} with $\phi\equiv0$, and so we present the following result for the general linear operator $\mathcal{L}_{\phi}$ with $\phi\in H^4_{\mathrm{e}}$.

\begin{lem}\label{lem:Fredholm;L}
    Fix $\varepsilon,R>0$ and suppose that $\phi\in H^4_{\mathrm{e}}$. The linear operator $\mathcal{L}_{\phi}:H^{4}_{\mathrm{e}}\to L^2_{\mathrm{e}}$ given by \eqref{def:L-phi} is a Fredholm operator with index $0$. In particular, $\sigma_{\mathrm{ess}}(\mathcal{L}_\phi) =(-\infty,-\varepsilon^2]$, where $\sigma_{\mathrm{ess}}(\mathcal{L}_\phi)$ denotes the essential spectrum of $\mathcal{L}_\phi$.
\end{lem}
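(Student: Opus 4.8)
The plan is to establish the Fredholm property by treating $\mathcal{L}_\phi$ as a compact, relatively bounded perturbation of the constant-coefficient operator $\mathcal{L}_0 u := -(1+\Delta)^2 u - \varepsilon^2 u$, whose essential spectrum can be computed exactly via the Fourier transform. First I would diagonalise $\mathcal{L}_0$ on $L^2(\mathbb{R}^2)$: under $\widehat{\cdot}$, the symbol is $-(1-|\mathbf{k}|^2)^2 - \varepsilon^2$, a multiplication operator. Its range as $|\mathbf{k}|$ varies over $[0,\infty)$ is exactly $(-\infty,-\varepsilon^2]$, so $\sigma(\mathcal{L}_0) = \sigma_{\mathrm{ess}}(\mathcal{L}_0) = (-\infty,-\varepsilon^2]$, and $\mathcal{L}_0 - \lambda$ is boundedly invertible from $H^4$ to $L^2$ for every $\lambda\notin(-\infty,-\varepsilon^2]$. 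Restricting to the even subspace is harmless since $\mathcal{L}_0$ commutes with the reflection $y\mapsto -y$ and thus preserves $H^4_{\mathrm e}$ and $L^2_{\mathrm e}$.

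The key step is to show that the remaining terms, namely $Bu := 2\varepsilon^2 V(|\mathbf{x}|)\,u + f'(\phi(\mathbf{x}))\,u$, form a \emph{relatively compact} perturbation of $\mathcal{L}_0$, i.e.\ $B(\mathcal{L}_0-\lambda)^{-1}:L^2_{\mathrm e}\to L^2_{\mathrm e}$ is compact. Both multipliers are localised: $V$ has compact support, and since $\phi\in H^4(\mathbb{R}^2)\hookrightarrow C^0$ (as $4>2/2$) with $\phi$ decaying at infinity, real-analyticity of $f$ with $f'$ controlled near $0$ gives $f'(\phi(\mathbf{x}))\to f'(0)$; I would arrange the splitting so that the coefficient of $B$ decays to zero at spatial infinity (absorbing any constant $f'(0)$ into a shifted reference operator, or noting $f'(0)=0$ if the nonlinearity is chosen so that $\mathcal{N}'_\varepsilon(0)=0$ as in the general hypotheses). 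Compactness of multiplication by a decaying, bounded potential composed with the smoothing resolvent $(\mathcal{L}_0-\lambda)^{-1}$ (which maps $L^2\to H^4$) then follows from the standard Rellich--Kondrachov argument applied on a large ball together with a tail estimate; this is the technical heart of the proof. By Weyl's theorem on the stability of the essential spectrum under relatively compact perturbations, $\sigma_{\mathrm{ess}}(\mathcal{L}_\phi)=\sigma_{\mathrm{ess}}(\mathcal{L}_0)=(-\infty,-\varepsilon^2]$.

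Finally, to pin down the Fredholm index as $0$, I would argue that $\mathcal{L}_\phi - \lambda$ is Fredholm of index $0$ for $\lambda$ in the resolvent set of $\mathcal{L}_0$ (where it is a compact perturbation of an invertible operator, hence index $0$), and then invoke the constancy of the Fredholm index on connected components of the complement of the essential spectrum to conclude the index is $0$ at $\lambda=0$, which lies in $\mathbb{C}\setminus(-\infty,-\varepsilon^2]$ since $\varepsilon>0$.

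\textbf{The main obstacle} I anticipate is rigorously verifying relative compactness of $B(\mathcal{L}_0-\lambda)^{-1}$ on the infinitely-extended domain: decay of $f'(\phi)$ alone is not quite enough for operator-norm compactness without an explicit tail cutoff, so the careful part is the two-scale estimate (Rellich--Kondrachov on $\{|\mathbf{x}|\le \rho\}$ plus a uniform smallness bound on $\{|\mathbf{x}|>\rho\}$ coming from the decay of the multiplier) that upgrades pointwise decay to a genuine compactness statement.
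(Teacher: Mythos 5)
Your proposal is correct and follows essentially the same route as the paper: decompose $\mathcal{L}_\phi$ into the constant-coefficient operator $-[(1+\Delta)^2+\varepsilon^2]$ (whose Fourier symbol gives invertibility and the essential spectrum $(-\infty,-\varepsilon^2]$) plus the two multiplication operators by $2\varepsilon^2 V$ and $f'(\phi)$, which are compact from $H^4_{\mathrm{e}}$ to $L^2_{\mathrm{e}}$ --- the paper's Lemma~\ref{lem:compact} is precisely the two-scale Rellich--Kondrachov-plus-tail-estimate argument you identify as the technical heart, and your observation that $f'(0)=0$ is needed for the multiplier to decay matches the paper's implicit use of this fact. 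Your relative-compactness formulation $B(\mathcal{L}_0-\lambda)^{-1}$ is equivalent to the paper's direct compactness statement (compose with the resolvent isomorphism $L^2_{\mathrm{e}}\to H^4_{\mathrm{e}}$), and both conclude Fredholmness of index zero and the essential spectrum by stability under compact perturbation.
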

\begin{proof}
We first note that the operator $\mathcal{L}_\phi$ can be decomposed into three operators $\mathcal{T},\mathcal{K}_V,\mathcal{K}_{\phi}: H^{4}_{\mathrm{e}}\to L^2_{\mathrm{e}} $ by $\mathcal{L}_\phi = \mathcal{T} + \mathcal{K}_V + \mathcal{K}_\phi$, where
\begin{equation*}
    \mathcal{T}u = -[(1+\Delta)^2 + \varepsilon^2]\,u,\qquad\qquad \mathcal{K}_V\,u = 2\,\varepsilon^2\,V(|\mathbf{x}|)\,u, \qquad\qquad \mathcal{K}_{\phi}u = f'(\phi(\mathbf{x}))\,u.
\end{equation*}
Using the fact that $V(|\mathbf{x}|)\,u \in H^{4}(B_{C})$, where $B_{C}$ denotes the open ball of radius $C$ about the origin, and applying the Rellich--Kondrachov theorem, we conclude that $\mathcal{K}_V:H^{4}_{\mathrm{e}} \to L^2_{\mathrm{e}}$ is a compact operator. Likewise, since $f$ is real analytic and $\phi\in H^{4}_{\mathrm{e}}$, we conclude that $f'(\phi)\in H^{4}_{\mathrm{e}}$ and thus, by Lemma~\ref{lem:compact}, $K_{\phi}:H^{4}_{\mathrm{e}}\to L^2_{\mathrm{e}}$ is also compact. It follows that $\mathcal{L}_\phi$ is Fredholm if and only if $\mathcal{T}$ is Fredholm. Using the two-dimensional Fourier transform, we note that the operator $\mathcal{T}$ is unitarily equivalent to multiplication by its symbol
\begin{equation*}
    \widehat{\mathcal{T}} = -(1-|\mathbf{k}|^2)^2 - \varepsilon^2, \qquad\qquad \mathbf{k}\in\mathbb{R}^2
\end{equation*}
and so we conclude that the operators $\mathcal{T}$ and $\mathcal{L}_\phi$ are Fredholm with index $0$ for all $\varepsilon>0$. Since the essential spectrum is preserved under compact perturbations, it follows that $\sigma_{\mathrm{ess}}(\mathcal{L}_\phi) = \sigma_{\mathrm{ess}}(\widehat{\mathcal{T}})=(-\infty,-\varepsilon^2]$.
\end{proof}

\begin{rmk}
    The linear operator $\mathcal{L}_2 : H^{2}_{\mathrm{e}}\times H^{2}_{\mathrm{e}}\to L^2_{\mathrm{e}} \times L^2_{\mathrm{e}}$, given by
\begin{equation*}
    \begin{split}
        \mathcal{L}_{2}\mathbf{u} ={}& \begin{pmatrix} \Delta +1 & -1 \\   \varepsilon^2 - 2\,\varepsilon^2\,V(|\mathbf{x}|) & \Delta + 1\end{pmatrix}\mathbf{u},
    \end{split}
\end{equation*}
    is isomorphic to $\mathcal{L}$ by the mapping $\mathcal{S}: H^{4}_{\mathrm{e}}\to H^{2}_{\mathrm{e}} \times H^{2}_{\mathrm{e}},\, u\mapsto (u, (1+\Delta)u)$, and thus is also a Fredholm operator with index zero. One would thus expect the results presented herein to extend to reaction-diffusion systems without significant changes.
\end{rmk}

\subsection{Point spectrum}\label{s:spec-point}
In the previous section, we determined that the linear operator $\mathcal{L} : H^{4}_{\mathrm{e}} \to L^2_{\mathrm{e}}$ is a Fredholm operator with index $0$ for $\varepsilon>0$ and, in particular, that $\sigma_{\mathrm{ess}}(\mathcal{L})=(-\infty,-\varepsilon^2]$. This tells us that the essential spectrum of $\mathcal{L}$ is bounded away from $0$, and so any changes in the linear stability of $\mathcal{L}$ must be driven by eigenvalues crossing the imaginary axis. 

We now characterise the point spectrum of $\mathcal{L}$ by considering the linear eigenvalue problem $\lambda u = \mathcal{L}\,u$, where $u\in H^{4}_{\mathrm{e}}$ is an eigenfunction of $\mathcal{L}$ with associated eigenvalue $\lambda\in\mathbb{C}$. For notational simplicity, we introduce some $\tilde{\lambda}\in\mathbb{C}$, with $\lambda = \tilde{\lambda}\,\varepsilon^2$, and consider the scaled eigenvalue problem
\begin{equation}\label{e:Stab}
    \varepsilon^2\,\tilde{\lambda}\, u = -(1+\Delta)^2 u - \varepsilon^2 u + 2\,\varepsilon^2\,V(|\mathbf{x}|)\,u.
\end{equation}

\begin{lem}\label{lem:eig}
    Fix $R>0$ and $\varepsilon>0$. The eigenvalue problem \eqref{e:Stab} possesses a separable solution $u = u_k \in H^{4}_{\mathrm{e}}$ given by $u_k(r\cos\theta,r\sin\theta) = v_k(r)\,\cos(k\theta)$ with
\begin{equation*}
v_k(r) = \begin{cases}
\displaystyle \cos(\phi + \psi)\frac{J_{k}(\alpha_{+}\,r)}{J_{k}(\alpha_{+}\,R)} + \cos(\phi - \psi)\frac{J_{k}(\alpha_{-}\,r)}{J_{k}(\alpha_{-}\,R)}, & r<R,\\
\displaystyle \sqrt{2}\,\sqrt{1 - \tilde{\lambda}}\, \mathrm{Re}\left(\mathrm{e}^{\mathrm{i}\phi}\, \frac{H^{(1)}_{k}( \beta r)}{H^{(1)}_{k}(\beta R)}\right), & r>R,\\
\end{cases}  
\end{equation*}
where $\alpha_{\pm} := \sqrt{1 \pm \varepsilon\,\sqrt{1 - \tilde{\lambda}}}$, $\beta := \sqrt{1 + \mathrm{i}\,\varepsilon\,\sqrt{1 + \tilde{\lambda}}}$; $J_k, H_k^{(1)}$ are the $k$'th order Bessel functions of the first and third kind; $\psi$ is given by $\psi = \frac{1}{2}\sin^{-1}(\tilde{\lambda}) + \frac{\pi}{4}$; $\phi$ is given by
    \begin{equation*}
\mathrm{e}^{2\mathrm{i}\phi } = \mathrm{i}\,\mathrm{e}^{-\mathrm{i}\sin^{-1}(\tilde{\lambda})}\frac{W[J_{k}(\alpha_{+}r), \overline{H^{(1)}_{k}( \beta r)}](R)\,H^{(1)}_{k}( \beta R)}{W[J_{k}(\alpha_{+}r), H^{(1)}_{k}( \beta r)](R)\,\overline{H^{(1)}_{k}( \beta R)}};
    \end{equation*}
    $W[\cdot,\cdot]$ denotes the weighted Wronskian function given by
\begin{equation*}
    W[u,v](r) := r\left( u(r)\,v'(r) - u'(r)\,v(r)\right);
\end{equation*}
    and $\tilde{\lambda} = \tilde{\lambda}(R, \varepsilon)$ satisfies
\begin{equation*}
F_k(R,\varepsilon,\tilde{\lambda}) = \mathrm{Re}\left[\mathrm{e}^{\mathrm{i}\sin^{-1}(\tilde{\lambda})}\, W[J_{k}(\alpha_{+}r), H^{(1)}_{k}( \beta r)](R)\,W[J_{k}(\alpha_{-}r), \overline{H^{(1)}_{k}( \beta r)}](R)\right]=0
\end{equation*}
with $\tilde{\lambda}\notin(-\infty,-1]$.
\end{lem}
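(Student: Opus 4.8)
The plan is to treat \eqref{e:Stab} as a piecewise constant-coefficient eigenvalue problem, since $V$ is piecewise constant, and to build the eigenfunction by separation of variables followed by interface matching at $r=R$. First I would substitute the ansatz $u_k(r\cos\theta,r\sin\theta)=v_k(r)\cos(k\theta)$ and use the polar form $\Delta[v(r)\cos k\theta]=(v''+r^{-1}v'-k^2r^{-2}v)\cos(k\theta)$ so that the angular factor $\cos(k\theta)$ separates out; note that evenness of $\cos(k\theta)$ in $\theta$ automatically places $u_k$ in the even subspace, so $u_k\in H^4_{\mathrm e}$ once radial regularity at the origin and decay at infinity are confirmed. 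With $V\equiv1$ inside and $V\equiv0$ outside, the radial problem reduces to the two constant-coefficient equations $(1+\Delta)^2 v_k=\varepsilon^2(1-\tilde\lambda)\,v_k$ for $r<R$ and $(1+\Delta)^2 v_k=-\varepsilon^2(1+\tilde\lambda)\,v_k$ for $r>R$.

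Next I would factor each fourth-order operator into a product of Helmholtz operators. Inside, $(1+\Delta)^2-\varepsilon^2(1-\tilde\lambda)=[(1+\Delta)-\varepsilon\sqrt{1-\tilde\lambda}\,][(1+\Delta)+\varepsilon\sqrt{1-\tilde\lambda}\,]$, whose solutions bounded at the origin are $J_k(\alpha_\pm r)$ with $\alpha_\pm^2=1\pm\varepsilon\sqrt{1-\tilde\lambda}$; outside, the analogous factorization yields the Helmholtz wavenumbers $\beta,\bar\beta$ with $\beta^2=1+\mathrm i\,\varepsilon\sqrt{1+\tilde\lambda}$, and the condition $\tilde\lambda\notin(-\infty,-1]$ lets me choose $\beta$ with positive imaginary part so that $H^{(1)}_k(\beta r)$ and its conjugate $\overline{H^{(1)}_k(\beta r)}$ both decay exponentially, guaranteeing the exterior solution lies in $H^4$. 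Requiring $v_k$ to be real forces the two exterior coefficients to be complex conjugates, which is exactly why the exterior profile is written as a single $\sqrt2\sqrt{1-\tilde\lambda}\,\mathrm{Re}(\cdots)$ and, ultimately, why $F_k$ is the real part of a product.

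The heart of the argument is the matching at $r=R$. Since $u\in H^4(\mathbb{R}^2)$ forces $v_k\in C^3$ across the interface, I would impose continuity of $v_k,v_k',v_k'',v_k'''$; the cleanest route is to set $w:=(1+\Delta)u$ and observe that, as $r^{-1}$ and $k^2r^{-2}$ are smooth at $r=R>0$, these four conditions are equivalent to continuity of $u,u',w,w'$, i.e.\ matching two coupled Helmholtz fields. On this reduced system the continuity of $v_k$ itself, using the normalizations $J_k(\alpha_\pm R)$ and $H^{(1)}_k(\beta R)$, reads $2\cos\phi\cos\psi=\sqrt2\sqrt{1-\tilde\lambda}\cos\phi$ and thus fixes $\cos\psi=\sqrt{(1-\tilde\lambda)/2}$, equivalently $\cos2\psi=-\tilde\lambda$, which is precisely the stated $\psi=\tfrac12\sin^{-1}\tilde\lambda+\tfrac\pi4$. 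The continuity of $w$ then pins down the phase $\phi$ through the displayed formula for $\mathrm e^{2\mathrm i\phi}$ (whose right-hand side one checks has modulus one, so $\phi$ is real), while the final derivative condition is the solvability constraint. Eliminating the coefficients, the $r$-weighted Wronskians $W[\cdot,\cdot](R)$ arise as the conserved Sturm--Liouville bilinear form $r(uv'-u'v)$ of the radial operator, and the matching determinant collapses to $F_k(R,\varepsilon,\tilde\lambda)=0$.

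I expect the phase bookkeeping in this final reduction to be the main obstacle: one must track how the interior wavenumbers $\alpha_\pm$ pair with the exterior wavenumbers $\beta,\bar\beta$ --- the combination $-\sqrt{1-\tilde\lambda}+\mathrm i\sqrt{1+\tilde\lambda}=\sqrt2\,\mathrm e^{\mathrm i(\pi-\psi)}$ is what produces both the angle $\psi$ and the factor $\mathrm e^{\mathrm i\sin^{-1}\tilde\lambda}$ --- and then recognise that the $4\times4$ matching determinant factors into the product of the two Wronskians $W[J_k(\alpha_+ r),H^{(1)}_k(\beta r)](R)$ and $W[J_k(\alpha_- r),\overline{H^{(1)}_k(\beta r)}](R)$ plus its complex conjugate, giving the real part in $F_k$. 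The remaining checks --- boundedness of $J_k(\alpha_\pm r)$ at the origin, exponential decay of the Hankel functions from their large-argument asymptotics, and hence $u_k\in H^4_{\mathrm e}$ --- are routine but should be recorded to confirm that the constructed $v_k$ is a genuine eigenfunction.
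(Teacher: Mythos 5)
Your skeleton coincides with the paper's proof in Appendix B: angular separation, factorisation of the fourth-order operator into two Helmholtz operators, $J_k(\alpha_\pm r)$ inside by regularity at the origin, the conjugate pair $H^{(1)}_k(\beta r)$, $\overline{H^{(1)}_k(\beta r)}$ outside with $\mathrm{Im}\,\beta>0$ guaranteed by $\tilde{\lambda}\notin(-\infty,-1]$, and $C^3$ matching at $r=R$ organised through $u$ and $w=(1+\Delta)u$ (the paper's $u,\,\mathcal{D}_k u,\,\Delta_k u,\,\mathcal{D}_k\Delta_k u$). However, your assignment of constraints to matching conditions is wrong, and that step as described would fail. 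The stated formula for $\mathrm{e}^{2\mathrm{i}\phi}$ contains the Wronskians $W[J_k(\alpha_+ r),H^{(1)}_k(\beta r)](R)$, i.e.\ derivative data at the interface, so it cannot be produced by continuity of the \emph{value} of $w$. Indeed, with the lemma's normalisations, continuity of $u$ reads $2\cos\phi\cos\psi=\sqrt{2}\sqrt{1-\tilde{\lambda}}\cos\phi$, while continuity of $w$ reads (after cancelling the common factor $\varepsilon\sqrt{1-\tilde{\lambda}}$) $2\sin\phi\sin\psi=\sqrt{2}\sqrt{1+\tilde{\lambda}}\sin\phi$: the two value conditions jointly fix $\cos\psi$ and $\sin\psi$ --- both are needed, since your relation $\cos2\psi=-\tilde{\lambda}$ alone does not select the correct branch of $\psi$ --- and they hold identically in $\phi$. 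Neither constrains $\phi$ at all.

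The phase $\phi$ and the solvability condition both come from the two \emph{derivative} conditions, which you compress into a single ``final derivative condition''. Writing $u_1=J_k(\alpha_+ r)$, $u_2=J_k(\alpha_- r)$, $u_3=H^{(1)}_k(\beta r)$ and $A_3$ for the complex exterior amplitude, eliminating the interior amplitudes between the value and derivative conditions leaves the homogeneous pair
\begin{equation*}
0=\mathrm{e}^{\mathrm{i}\psi}A_3\,W[u_1,u_3](R)+\mathrm{e}^{-\mathrm{i}\psi}\overline{A_3}\,W[u_1,\overline{u_3}](R),\qquad
0=\mathrm{e}^{-\mathrm{i}\psi}A_3\,W[u_2,u_3](R)+\mathrm{e}^{\mathrm{i}\psi}\overline{A_3}\,W[u_2,\overline{u_3}](R),
\end{equation*}
in $(A_3,\overline{A_3})$: either equation yields the stated $\mathrm{e}^{2\mathrm{i}\phi}$, and the vanishing of the $2\times2$ determinant of this pair, combined with $\mathrm{e}^{2\mathrm{i}\psi}=\mathrm{i}\,\mathrm{e}^{\mathrm{i}\sin^{-1}(\tilde{\lambda})}$, is exactly $F_k(R,\varepsilon,\tilde{\lambda})=0$. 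Note also the counting problem in your plan: having spent $\psi$ and $\phi$ on the two value conditions, you would be left with \emph{two} derivative conditions and hence two independent constraints on $\tilde{\lambda}$, one more than the single equation $F_k=0$ in the statement --- a sign that the attribution cannot be right. The error is repairable, since the algebra forces the correct structure once carried out, and your endgame (the matching determinant factoring into a Wronskian product plus its conjugate, giving the real part in $F_k$) agrees with the paper; but the matching step needs to be redone as above.
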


\begin{figure}[t!]
    \centering
    \includegraphics[width=\linewidth]{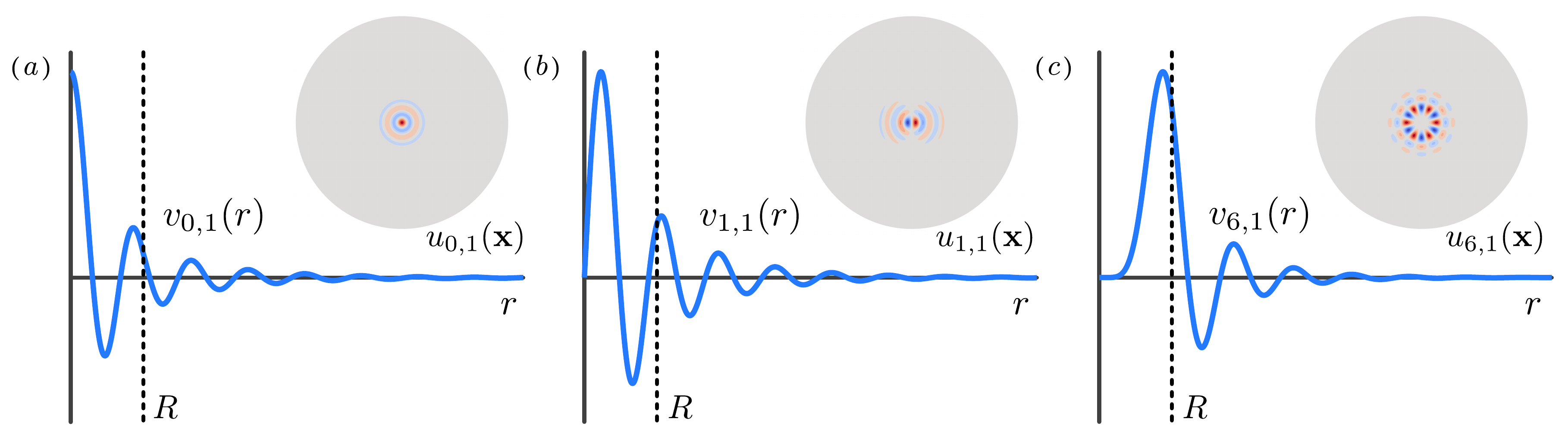}
    \caption{Radial profiles (resp. planar profiles) of linear eigenfunctions $v_{k,n}(r)$ ($u_{k,n}(\mathbf{x})$) are plotted for $R=8$ and (a) $\varepsilon = \varepsilon_{0,1}$, (b) $\varepsilon = \varepsilon_{1,1}$, and (c) $\varepsilon = \varepsilon_{6,1}$.}
    \label{fig:linear_eigs}
\end{figure}

The proof of this lemma is given in Appendix~\ref{app:spec}; we illustrate examples of the linear eigenfunctions $u = u_{k,n}$ in Figure~\ref{fig:linear_eigs}. We highlight the respective $\mathbb{D}_0$ and $\mathbb{D}_1$ eigenfunctions $u_{0,1}$ and $u_{1,1}$, plotted in Figures~\ref{fig:linear_eigs} (a) and (b), which we refer to as the `axisymmetric spot' and `dipole' patterns. The equation $F_k(R, \varepsilon, \tilde{\lambda})=0$ implicitly defines eigenvalues $\tilde{\lambda} = \tilde{\lambda}(R,\varepsilon)$ for fixed $R,\varepsilon>0$, and the associated bifurcation equation $F_k(R, \varepsilon, 0)=0$ equivalently defines bifurcation points $\varepsilon = \varepsilon(R)$ for a fixed $R>0$. While we do not obtain an explicit form for the eigenvalues $\tilde{\lambda}$, we are able prove the following qualitative properties.

\begin{lem}\label{lem:eig-2}
    An eigenvalue $\tilde{\lambda}\in\mathbb{C}$ of \eqref{e:Stab} satisfies the following properties:
    \begin{enumerate}[label=(\roman*)]
        \item $\tilde{\lambda}\in\mathbb{R}$ and $-1 < \tilde{\lambda} \leq 1$ for all $R,\varepsilon>0$;
        \item $\tilde{\lambda}\to -1$ as $R\to0$ for any fixed $\varepsilon>0$; and
        \item for each $k\in\mathbb{N}_0$ and sufficiently large $R$ with $\varepsilon = \tilde{\varepsilon}\,R^{-1}$, there are countably many eigenvalues $\{\tilde{\lambda}_{k,n}\}_{n=1}^{\infty}$ which solve the implicit equation
        \begin{equation*}
            \tilde{\varepsilon} = \frac{2\sin^{-1}(\tilde{\lambda}_{k,n}) + (2n - 1)\pi}{2\,\sqrt{1 - \tilde{\lambda}_{k,n}}}.
        \end{equation*}
        The eigenvalue $\tilde{\lambda}_{k,n}$ is a monotone increasing function of $\tilde{\varepsilon}$ for each $n\in\mathbb{N}$.
    \end{enumerate}
\end{lem}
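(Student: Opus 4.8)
The plan is to treat the three parts with different tools: self-adjointness and a variational bound for (i), a small-argument Bessel analysis of the dispersion relation for (ii), and a large-argument asymptotic reduction of $F_k$ for (iii).

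For (i) I would first note that $\mathcal{L}$ in \eqref{def:L} is self-adjoint on $L^2_{\mathrm{e}}$: it is the sum of the self-adjoint operator $-(1+\Delta)^2$ and the bounded real multiplier $2\varepsilon^2 V - \varepsilon^2$. Hence $\sigma(\mathcal{L})\subset\mathbb{R}$, so every eigenvalue $\lambda=\varepsilon^2\tilde\lambda$ is real and $\tilde\lambda\in\mathbb{R}$. For the upper bound I would use the quadratic form: for $u\in H^4_{\mathrm{e}}$ one has $\langle\mathcal{L}u,u\rangle = -\|(1+\Delta)u\|_{L^2}^2 - \varepsilon^2\|u\|_{L^2}^2 + 2\varepsilon^2\!\int V|u|^2 \le \varepsilon^2\|u\|_{L^2}^2$ since $0\le V\le 1$, so the min--max principle gives $\lambda\le\varepsilon^2$, i.e. $\tilde\lambda\le1$. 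The lower bound is then immediate from Lemma~\ref{lem:Fredholm;L}: the essential spectrum is $(-\infty,-\varepsilon^2]$, and $L^2$-decay of the eigenfunction of Lemma~\ref{lem:eig} (equivalently the admissibility constraint $\tilde\lambda\notin(-\infty,-1]$, under which $\beta$ keeps $H^{(1)}_k(\beta r)$ decaying) forces any genuine eigenvalue strictly above the band edge, giving $\tilde\lambda>-1$.

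For (ii) I would work directly with $F_k(R,\varepsilon,\tilde\lambda)=0$ in the limit $R\to0$ at fixed $\varepsilon$, where all Bessel arguments $\alpha_\pm R,\beta R\to0$. Substituting $J_k(z)\sim (z/2)^k/\Gamma(k+1)$ together with the singular small-argument behaviour of $H^{(1)}_k$ (algebraic $z^{-k}$ for $k\ge1$, logarithmic for $k=0$) into the two weighted Wronskians and tracking the dominant balance, one finds $F_k=0$ cannot be met for $\tilde\lambda$ bounded away from the edge of its range, and the balance degenerates precisely as $\sqrt{1+\tilde\lambda}\to0$, i.e. $\tilde\lambda\to-1$. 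This is consistent with the operator-theoretic picture: the destabilising well $2\varepsilon^2 V$ has form-strength $2\varepsilon^2\!\int_{B_R}|u|^2\to0$ as $R\to0$, so by min--max the top eigenvalue converges to the band edge $-\varepsilon^2$ of the limiting free operator $-(1+\Delta)^2-\varepsilon^2$, again giving $\tilde\lambda\to-1$.

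The heart of the argument is (iii). Fixing $\varepsilon=\tilde\varepsilon/R$ and letting $R\to\infty$, the arguments satisfy $\alpha_\pm R = R\pm\tfrac{\tilde\varepsilon}{2}\sqrt{1-\tilde\lambda}+O(R^{-1})$ and $\beta R = R+\tfrac{i\tilde\varepsilon}{2}\sqrt{1+\tilde\lambda}+O(R^{-1})$, all of large modulus and bounded imaginary part, so I would insert the large-argument asymptotics $J_k(z)\sim\sqrt{2/\pi z}\,\cos(z-\tfrac{k\pi}{2}-\tfrac{\pi}{4})$ and $H^{(1)}_k(z)\sim\sqrt{2/\pi z}\,e^{i(z-\frac{k\pi}{2}-\frac{\pi}{4})}$ into each Wronskian. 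The oscillatory derivative terms dominate and, after cancelling the common phase shift, each Wronskian collapses to a single exponential, $W[J_k(\alpha_+ r),H^{(1)}_k(\beta r)](R)\sim\tfrac{2i}{\pi}e^{i(\beta-\alpha_+)R}$ and $W[J_k(\alpha_- r),\overline{H^{(1)}_k(\beta r)}](R)\sim-\tfrac{2i}{\pi}e^{i(\alpha_--\overline\beta)R}$. Multiplying and using $\beta-\overline\beta\approx i\varepsilon\sqrt{1+\tilde\lambda}$, $\alpha_--\alpha_+\approx-\varepsilon\sqrt{1-\tilde\lambda}$ with $\varepsilon R=\tilde\varepsilon$, the product reduces to $\tfrac{4}{\pi^2}e^{-\tilde\varepsilon\sqrt{1+\tilde\lambda}}e^{-i\tilde\varepsilon\sqrt{1-\tilde\lambda}}$, so that $F_k=0$ becomes $\cos\!\big(\sin^{-1}\tilde\lambda-\tilde\varepsilon\sqrt{1-\tilde\lambda}\big)=0$, which rearranges (selecting the positive branches) to the stated relation $\tilde\varepsilon=[2\sin^{-1}\tilde\lambda+(2n-1)\pi]/(2\sqrt{1-\tilde\lambda})$, $n\in\mathbb{N}$. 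Writing $\tilde\varepsilon=G_n(\tilde\lambda)$, monotonicity follows by implicit differentiation: $G_n'(\tilde\lambda)$ has the sign of $\tfrac{4}{\sqrt{1+\tilde\lambda}}+\tfrac{2\sin^{-1}\tilde\lambda+(2n-1)\pi}{\sqrt{1-\tilde\lambda}}$, which is positive on $(-1,1)$ for $n\ge1$ because $2\sin^{-1}\tilde\lambda+(2n-1)\pi>-\pi+\pi=0$; since each $G_n$ maps $(-1,1)$ monotonically onto $\big(\tfrac{(n-1)\pi}{\sqrt2},\infty\big)$, the relation yields one admissible root $\tilde\lambda_{k,n}$ per index $n$ and makes it an increasing function of $\tilde\varepsilon$.

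The main obstacle is making the reduction in (iii) rigorous rather than formal: one must control the $O(R^{-1})$ remainders in the Bessel asymptotics together with the prefactor ratios $\sqrt{\alpha_\pm/\beta}$ \emph{uniformly} in $\tilde\lambda$ on compact subsets of $(-1,1)$, and then argue by a Rouch\'e or implicit-function perturbation that each zero of the reduced cosine equation persists as a true zero $\tilde\lambda_{k,n}$ of $F_k$ for $R$ large. The secondary delicate point is the singular small-argument analysis in (ii), where the $k=0$ (logarithmic) and $k\ge1$ (algebraic) cases must be handled separately to confirm the degeneration forcing $\tilde\lambda\to-1$.
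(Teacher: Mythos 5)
Your proposal is correct and follows essentially the same route as the paper's proof: part (i) via symmetry of $\mathcal{L}$ and the quadratic-form identity plus the admissibility constraint $\tilde\lambda\notin(-\infty,-1]$ from Lemma~\ref{lem:eig}; part (ii) by combining the small-$R$ degeneration of $F_k$ with a sign/band-edge argument; and part (iii) by reducing $F_k$ to exactly the paper's expansion $\tfrac{4}{\pi^2}\,\mathrm{e}^{-\tilde\varepsilon\sqrt{1+\tilde\lambda}}\cos\bigl(\sin^{-1}(\tilde\lambda)-\tilde\varepsilon\sqrt{1-\tilde\lambda}\bigr)$ and establishing monotonicity of the same function $g_{k,n}$ by the same derivative computation. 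The only differences are cosmetic: you explicitly derive the Wronskian/Bessel asymptotics that the paper merely states, and you flag the uniform-remainder and root-persistence issues that the paper's proof passes over silently.
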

\begin{proof}
We begin by deriving simple \textit{a priori} estimates for an eigenvalue of \eqref{e:Stab}. The operator $\mathcal{L}$ is symmetric with respect to the complex $L^2$-inner product and so it follows that $\tilde{\lambda}\in\mathbb{R}$. Assuming that $u$ satisfies \eqref{e:Stab} for some $\tilde{\lambda}\in\mathbb{R}$, we take the $L^2$-inner product of \eqref{e:Stab} with $u$ and, applying Green's identities, obtain
\begin{equation}\label{id:apriori}
    \tilde{\lambda}  = 2\,\frac{\| V\,u\|^2}{\|u\|^2} -\frac{\|(1 + \Delta) u\|^2}{\varepsilon^2\,\|u\|^2}  - 1.
\end{equation}
It follows from the simple estimate $\| V\,u\|^2\leq\| u\|^2$ that $\tilde{\lambda} \leq 1$, where we recall that $\tilde{\lambda}>-1$ follows from Lemma~\ref{lem:eig}. Furthermore, we note that
\begin{equation*}
    \| V\,u\|^2 = \int_{B_{R}}|u(\mathbf{x})|^2\,\mathrm{d}\mathbf{x}  
    \to 0 \qquad \text{as $R\to0$}
\end{equation*}
and so there exists some $R_0(\varepsilon)>0$ such that $\tilde{\lambda} < 0$ for all $R\leq R_0$ and fixed $\varepsilon>0$.

\begin{figure}[t!]
    \centering
    \includegraphics[width=\linewidth]{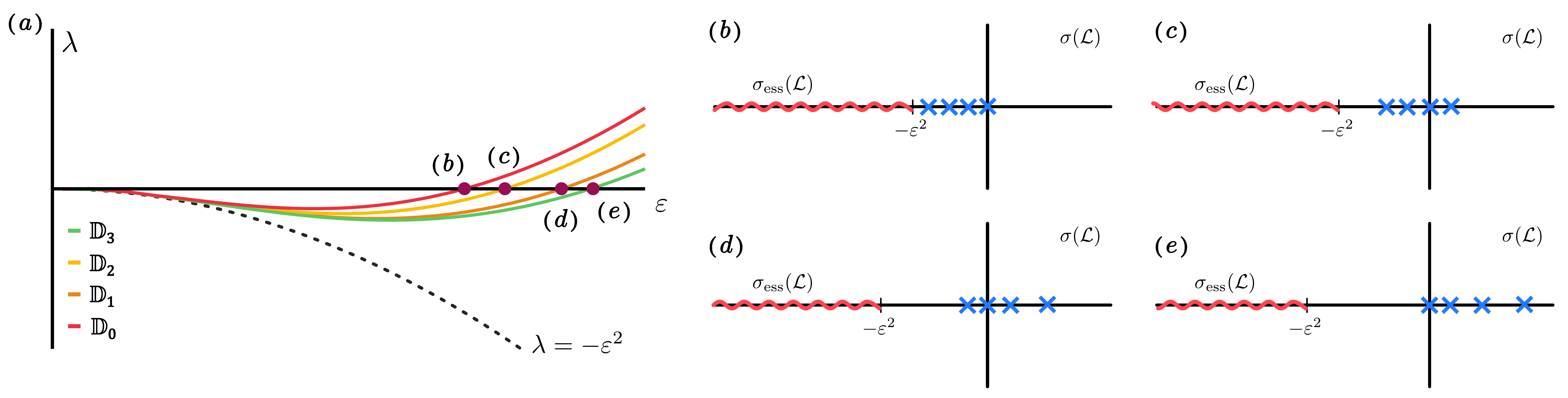}
    \caption{(a) The most unstable eigenvalues $\lambda$ of $\mathcal{L}$ for $k=0,1,2,3$ are numerically computed and plotted for $R=5$ and $\varepsilon>0$; (b) - (e) a cartoon of the spectrum of the linear operator $\mathcal{L}$ is presented for critical values of $\varepsilon$, where only the essential spectrum and the four most unstable eigenvalues are shown.}
    \label{fig:Stab_Plot}
\end{figure}

We recall from Lemma~\ref{lem:eig} that an eigenvalue $\tilde{\lambda}$ of \eqref{e:Stab} must satisfy
\begin{equation*}
    F_k(R,\varepsilon,\tilde{\lambda}) = \mathrm{Re}\left[\mathrm{e}^{\mathrm{i}\sin^{-1}(\tilde{\lambda})}\, W[J_{k}(\alpha_{+}r), H^{(1)}_{k}( \beta r)](R)\,W[J_{k}(\alpha_{-}r), \overline{H^{(1)}_{k}( \beta r)}](R)\right]=0
\end{equation*}
for some $k\in\mathbb{N}_0$, and we note the following asymptotic expansions:
\begin{enumerate}[label=(\alph*)]
    \item for fixed $\varepsilon>0$
\begin{equation*}
    F_k(R,\varepsilon,\tilde{\lambda}) = \frac{4}{\pi^2}\left(\frac{\alpha_{+}\,\alpha_{-}}{|\beta|^2}\right)^{k}\,\sqrt{1 - \tilde{\lambda}^2} + \mathcal{O}(R)
\end{equation*}
    as $R\to0$; and
    \item for fixed $\varepsilon=\tilde{\varepsilon}\,R^{-1}$ with fixed $\tilde{\varepsilon}>0$
\begin{equation*}
    F_k(R,\tilde{\varepsilon}\,R^{-1},\tilde{\lambda}) = \frac{4}{\pi^2}\,\mathrm{e}^{-\sqrt{1 + \tilde{\lambda}}\,\tilde{\varepsilon}}\,\cos\left(\sin^{-1}(\tilde{\lambda}) - \sqrt{1 - \tilde{\lambda}}\,\tilde{\varepsilon}\right) + \mathcal{O}(R^{-1})
\end{equation*}
    as $R\to\infty$.
\end{enumerate}
It follows that $\tilde{\lambda}^2 \to 1$ as $R\to0$ for fixed $\varepsilon>0$, where the \textit{a priori} estimates tell us that in fact $\tilde{\lambda}\to-1$.

In the case when $\varepsilon = \tilde{\varepsilon}\,R^{-1}$ and $R\to\infty$, the eigenvalue $\tilde{\lambda}$ satisfies
\begin{equation*}
    \tilde{\varepsilon} = g_{k,n}(\tilde{\lambda}) := \frac{2\,\sin^{-1}(\tilde{\lambda}) + (2n-1)\pi}{2\,\sqrt{1 - \tilde{\lambda}}}
\end{equation*}
for any $n\in\mathbb{N}$. We note that $\sin^{-1}(\tilde{\lambda})>-\frac{\pi}{2}$ for $\tilde{\lambda}\in(-1,1)$, and so $g_{k,n}(\tilde{\lambda})>0$ for each $n\in\mathbb{N}$. Furthermore, 
\begin{equation*}
    g_{k,n}'(\tilde{\lambda}) = \frac{1}{\sqrt{1 - \tilde{\lambda}^2}\,\sqrt{1 - \tilde{\lambda}}} + \frac{1}{2(1 - \tilde{\lambda})}\,g_{k,n}(\tilde{\lambda}) > \frac{1}{\sqrt{1 - \tilde{\lambda}^2}\,\sqrt{1 - \tilde{\lambda}}} > 0
\end{equation*}
and so $g_{k,n}$ is a strictly monotone increasing positive function. It follows that as $R\to\infty$ there is a countably infinite family of eigenvalues $\tilde{\lambda}_{k,n} = f_{k,n}(\tilde{\varepsilon}) := g_{k,n}^{-1}(\tilde{\varepsilon})$ for $n\in\mathbb{N}$. Additionally, the function $f_{k,n}$ is also strictly monotone increasing, and so each $\tilde{\lambda}_{k,n}$ increases as $\tilde{\varepsilon}$ increases.
\end{proof}

Property $(iii)$ in Lemma~\ref{lem:eig-2} states that, for $R$ sufficiently large, the eigenvalues of \eqref{e:SH-stab} are monotone increasing in $\varepsilon$; we also numerically observe this behaviour for moderate values of $R$, as shown in Figure~\ref{fig:Stab_Plot}(a). We note that, while $\tilde{\lambda}_{k,n}$ appears independent of the value of $k\in\mathbb{N}_0$ in the asymptotic regime, the eigenvalues will vary in both $n$ and $k$ for moderate values of $R$. The qualitative behaviour of the spectrum of $\mathcal{L}$ is illustrated in Figure~\ref{fig:Stab_Plot}(b-e), where the essential spectrum (given by $\sigma_{\mathrm{ess}}(\mathcal{L}) = (-\infty, -\varepsilon^2]$) moves to the left as $\varepsilon$ increases and point spectra move to the right, causing additional bifurcations as each eigenvalue crosses the imaginary axis.

We introduce the label $\varepsilon_{k,n}$ to be the value of $\varepsilon$ where $\tilde{\lambda}_{k,n}=0$. For sufficiently large values of $R$, we obtain that $\varepsilon_{k,n} = \frac{(2n-1)\pi}{2\,R}$ for all $k\in\mathbb{N}_0$ and $n\in\mathbb{N}$; we again numerically observe that each $\varepsilon_{k,n}$ persists for moderate values of $R$ and varies in both $n$ and $k$, as shown in Figure~\ref{fig:Roots}. In particular, we plot in Figure~\ref{fig:Roots}(a) the implicit curves for $F_k(R,\varepsilon,0)=0$ and observe that, while each implicit curve converges to a monotonic curve $\varepsilon_{k,n} = \frac{(2n-1)\pi}{2\,R}$ as $R\to\infty$, the curves exhibit greater oscillations for moderate values of $R$. 

In order to use Crandall--Rabinowitz bifurcation theory, we require that the kernel of $\mathcal{L}$ is one dimensional at the bifurcation point. This is equivalent to saying that, for a given value of $R$, the curves in Figure~\ref{fig:Roots}(a-b) do not intersect one another for all $k\in\mathbb{N}_0$. While it may be possible to prove this holds for certain parameter regimes, we instead present the following hypothesis which we motivate by Figure~\ref{fig:Roots}.

\begin{hyp}\label{hyp:ker}
We suppose that, for our choice of $R>0$ and $k\in\mathbb{N}_{0}$, the kernel of $\mathcal{L}:H^4_{\mathrm{e}}\to L^2_{\mathrm{e}}$ at $\varepsilon=\varepsilon_{k,n}$ is entirely spanned by the eigenfunction $u_{k,n}\in X^4_k$ given in Lemma~\ref{lem:eig}.
\end{hyp}

We present the following analytic predictions based on Lemma~\ref{lem:eig-2} and Figure~\ref{fig:Roots}, which we will verify numerically in Section~\ref{s:bif-loc}. We first note that as $R$ passes from $R=2.6$ to $R=2.8$, the primary bifurcating eigenfunction alternates from the axisymmetric spot $u_{0,1}$ to the dipole pattern $u_{1,1}$, plotted in Figures~\ref{fig:linear_eigs}(a) and (b), respectively. Furthermore, we note that Figure~\ref{fig:Roots}(b) suggests that one can choose a value of $R$ such that at a given bifurcation point $\varepsilon=\varepsilon_{k,n}$ the linear operator $\mathcal{L}$ has a one-dimensional kernel; in particular, we predict that $u_{6,1}$ (plotted in Figure~\ref{fig:linear_eigs}(c)) appears to satisfy this condition for $R=7.5$, as shown in Figure~\ref{fig:Roots}(d).

\begin{figure}[t!]
    \centering
    \includegraphics[width=\linewidth]{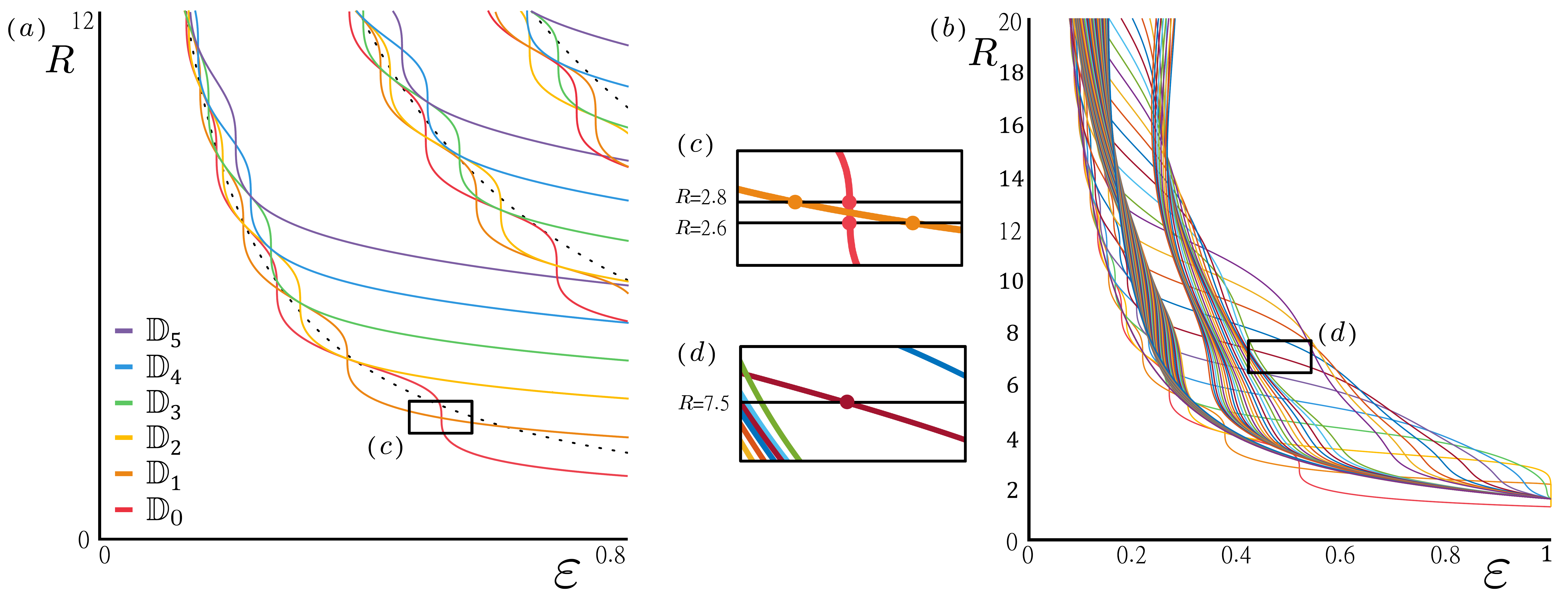}
    \caption{Implicit curves of $F_k(R,\varepsilon,0)=0$ for (a) $k=0,1,\dots,5$ and (b) $k=0,1,\dots,150$. The label $\mathbb{D}_k$ indicates the smallest dihedral group in which the linear eigenfunction $u = u_{k,n}$ lies, and the dotted black lines indicate the asymptotic curve $\varepsilon = \frac{(2n-1)\pi}{2\,R}$. (c) As $R$ passes between $2.6$ and $2.8$, the primary bifurcation swaps from the $\mathbb{D}_{0}$ pattern to the $\mathbb{D}_1$ pattern. (d) At $R=7.5$ the $\mathbb{D}_6$ pattern is an isolated root of $F_k(R,\varepsilon,0)=0$.}
    \label{fig:Roots}
\end{figure}

\section{Bifurcation results}\label{s:bif}

We now prove the existence of fully localised patterns bifurcating from trivial state. We formulate \eqref{e:SH} as the bifurcation problem
\begin{equation}\label{def:Fu}
0 = \mathscr{F}(\varepsilon, u) := -(1+\Delta)^2 u - \varepsilon^2 u + 2\,\varepsilon^2\, V(|\mathbf{x}|)\,u + f(u) 
\end{equation}
and prove the existence of a non-trivial solution curve $(\varepsilon, u) = (\mathcal{E}(s), v(s))$ parametrised by some $s\in\mathbb{R}$, first in a local neighbourhood of the bifurcation point (with $|s|\ll1$) in \S\ref{s:bif-loc}, before extending to a global solution curve of large amplitude patterns (with $s\in\mathbb{R}$) in \S\ref{s:bif-glob}.

\subsection{Local bifurcation}\label{s:bif-loc}
We begin by proving the existence of a unique bifurcation branch in a neighbourhood of the point $(\varepsilon, u)=(\varepsilon_{k,n},0)$, which we call the \textit{local bifurcation branch}. We will use an analytic version of the Crandall--Rabinowitz theorem (cf.\ Buffoni and Toland \cite[Theorem 8.3.1]{BuffoniToland2003Global}), which we state as Theorem~\ref{thm:CR} in the appendix. In order to prove that a local bifurcation occurs, we need to verify the following conditions:
\begin{enumerate}[label=$(\roman*)$]
    \item $\mathcal{L}: H^{4}_{\mathrm{e}} \to L^{2}_{\mathrm{e}}$ is a Fredholm operator of index zero at $\varepsilon=\varepsilon_{k,n}$,
    \item $\mathrm{ker}(\mathcal{L})= \mathrm{span}\{ u_{k,n} \}$ at $\varepsilon=\varepsilon_{k,n}$, and
    \item the transversality condition $\mathcal{P}(\mathrm{d}_1 \mathrm{d}_2 \mathscr{F}[\varepsilon_{k,n},0](1,u_{k,n})) \neq 0$ holds, where $\mathcal{P} : L^{2}_{\mathrm{e}} \to L^{2}_{\mathrm{e}}$ is a projection with $\mathrm{Im}(\mathcal{L})= \mathrm{ker}(\mathcal{P}).$
\end{enumerate}

Conditions $(i)$ and $(ii)$ were considered in Section~\ref{s:spec} and so it remains to prove that condition $(iii)$ holds. We begin by noting the following characterisation of the projection operator $\mathcal{P}$.
\begin{rmk}\label{rmk:Proj}
Suppose $\mathcal{L}$ is a symmetric operator with respect to some inner product $\langle\cdot,\cdot\rangle$; i.e,
\begin{equation*}
        \langle \mathcal{L} u_1, u_2 \rangle = \langle u_1, \mathcal{L} u_2 \rangle,\qquad\qquad \forall u_1,u_2\in X,
    \end{equation*}
    and $\mathrm{ker}(\mathcal{L}) = \mathrm{span}(v_0)$. There exists a projection $\mathcal{P}$ with $\mathrm{Im}(\mathcal{L}) = \mathrm{ker}(\mathcal{P})$ given by 
\begin{equation*}
        \mathcal{P}\,u = \frac{\langle v_0, u\rangle}{\langle v_0, v_0\rangle}\,v_0.
    \end{equation*}
\end{rmk}

We are now able to prove the following lemma regarding the transversality condition $(iii)$. 
\begin{lem}\label{lem:transversal}
Suppose that $\mathrm{ker}\,(\mathcal{L}) = \mathrm{span}\{v_0\}$ for some $\varepsilon_0>0$ and $v_0\in H^{4}_{\mathrm{e}}$. It follows that
\begin{equation*}
    \mathcal{P}(\mathrm{d}_1 \mathrm{d}_2 \mathscr{F}[\varepsilon_0,0](1,v_0)) = 0 \qquad\iff\qquad v_0 = 0.
\end{equation*}
\end{lem}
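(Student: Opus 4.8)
The plan is to compute the quantity inside the projection explicitly, identify $\mathcal{P}$ using the symmetry of $\mathcal{L}$, and then reduce the vanishing condition to the \emph{a priori} identity \eqref{id:apriori}. First I would differentiate $\mathscr{F}$: since $f'(0)=0$ one has $\mathrm{d}_2\mathscr{F}[\varepsilon,0]=\mathcal{L}$, and differentiating the $\varepsilon$-dependent coefficients gives
\begin{equation*}
\mathrm{d}_1\mathrm{d}_2\mathscr{F}[\varepsilon_0,0](1,v_0)=2\,\varepsilon_0\,(2V(|\mathbf{x}|)-1)\,v_0.
\end{equation*}
Because $\mathcal{L}$ is symmetric with respect to the $L^2_{\mathrm{e}}$ inner product and Fredholm of index zero (Lemma~\ref{lem:Fredholm;L}), its image is the orthogonal complement of $\mathrm{ker}(\mathcal{L})=\mathrm{span}\{v_0\}$, so by Remark~\ref{rmk:Proj} the operator $\mathcal{P}$ is the $L^2$-projection onto $\mathrm{span}\{v_0\}$. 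Applying it,
\begin{equation*}
\mathcal{P}\big(\mathrm{d}_1\mathrm{d}_2\mathscr{F}[\varepsilon_0,0](1,v_0)\big)=\frac{2\,\varepsilon_0\,\langle v_0,(2V-1)v_0\rangle}{\langle v_0,v_0\rangle}\,v_0,
\end{equation*}
so, since $\varepsilon_0>0$, the left-hand side vanishes precisely when $\langle v_0,(2V-1)v_0\rangle=2\|Vv_0\|^2-\|v_0\|^2=0$.

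The key step is to relate this inner product to \eqref{id:apriori}. As $v_0\in\mathrm{ker}(\mathcal{L})$ is an eigenfunction with $\lambda=0$, i.e. $\tilde\lambda=0$, evaluating \eqref{id:apriori} at $\tilde\lambda=0$ and clearing denominators yields
\begin{equation*}
2\,\|Vv_0\|^2-\|v_0\|^2=\frac{\|(1+\Delta)v_0\|^2}{\varepsilon_0^2}\;\geq\;0,
\end{equation*}
so the condition $\langle v_0,(2V-1)v_0\rangle=0$ is equivalent to $(1+\Delta)v_0=0$.

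It then remains to show that $(1+\Delta)v_0=0$ forces $v_0=0$, which is the only non-algebraic point and is exactly where the unboundedness of the domain enters. Taking the Fourier transform turns $(1+\Delta)v_0=0$ into $(1-|\mathbf{k}|^2)\widehat{v_0}=0$, so $\widehat{v_0}$ is supported on the unit circle $\{|\mathbf{k}|=1\}$, a set of Lebesgue measure zero in $\mathbb{R}^2$; hence $\widehat{v_0}=0$ almost everywhere and $v_0=0$. This gives the forward implication, while the reverse is immediate because the bilinear form $\mathrm{d}_1\mathrm{d}_2\mathscr{F}[\varepsilon_0,0]$ is linear in its second argument and therefore annihilates $v_0=0$. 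I expect the main (indeed only) obstacle to be this last step: on a bounded domain $-\Delta$ could admit the eigenvalue $1$ and the conclusion would fail, so the argument relies essentially on the absence of $L^2$ point spectrum of $-\Delta$ on $\mathbb{R}^2$, i.e. $\sigma_{\mathrm{pp}}(-\Delta)=\varnothing$.
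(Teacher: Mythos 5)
Your proof is correct, and up to the last step it mirrors the paper's: you identify $\mathcal{P}$ as the $L^2$-orthogonal projection onto $\mathrm{span}\{v_0\}$ via the symmetry of $\mathcal{L}$ (Remark~\ref{rmk:Proj}), compute $\mathrm{d}_1\mathrm{d}_2\mathscr{F}[\varepsilon_0,0](1,v_0)=2\varepsilon_0(2V-1)v_0$ (your sign is the correct one --- the paper's first expression carries a stray minus sign, though its equivalent form $\tfrac{2}{\varepsilon_0}(\mathcal{L}+(1+\Delta)^2)v_0$ agrees with yours), and reduce the vanishing of the projection to $(1+\Delta)v_0=0$; your detour through the identity \eqref{id:apriori} at $\tilde\lambda=0$, using $V^2=V$, is the same inner-product computation that the paper performs inline, so there is no difference of substance there. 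The genuine divergence is how you dispose of $(1+\Delta)v_0=0$. The paper invokes Pokhozhaev's identity (Berestycki--Lions), $\tfrac{n-2}{2}\|\nabla v_0\|_{L^2}^2=\tfrac{n}{2}\|v_0\|_{L^2}^2$, which for $n=2$ collapses to $\|v_0\|_{L^2}^2=0$. You instead pass to the Fourier side: $(1-|\mathbf{k}|^2)\widehat{v_0}=0$ forces $\widehat{v_0}$ to be supported on the unit circle, a Lebesgue-null set, so $\widehat{v_0}=0$ in $L^2$ and hence $v_0=0$. Both arguments are valid, but yours is more elementary and more robust: it needs only $v_0\in L^2$ and the equation in the distributional sense, it works verbatim in every dimension (the Pokhozhaev route is special to $n=2$, where the prefactor $(n-2)/2$ vanishes; for $n>2$ it would have to be combined with the energy identity $\|\nabla v_0\|^2=\|v_0\|^2$), and it makes the underlying mechanism explicit --- the absence of $L^2$ point spectrum of $-\Delta$ on the whole plane --- which, as you observe, is precisely why the conclusion would fail on a bounded domain.
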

\begin{proof}
We prove this statement in two steps. We first show that $\mathcal{P}(\mathrm{d}_1 \mathrm{d}_2 \mathscr{F}[\varepsilon_0,0](1,v_0)) = 0$ is equivalent to $(1+\Delta)v_0 = 0$ for $v_0\in\mathrm{ker}\,(\mathcal{L})$, before then proving that $(1+\Delta)v_0 = 0$ if and only if $v_0=0$ for any $v_0\in H^4_{\mathrm{e}}$.

We note that $\mathcal{L}$ is symmetric with respect to the standard real-valued $L^2$ inner product; i.e.,
    \begin{equation*}
        \langle \mathcal{L} u_1, u_2 \rangle_{L^2} = \langle u_1, \mathcal{L} u_2 \rangle_{L^2},\qquad\qquad \forall u_1,u_2\in H^{4}_{\mathrm{e}}
    \end{equation*}
    and so it follows that we can define $\mathcal{P}$ as
    \begin{equation*}
        \mathcal{P}\,u = \frac{\langle v_0, u \rangle_{L^2}}{\| v_0 \|^2_{L^2}}\,v_0,
    \end{equation*}
    as stated in Remark~\ref{rmk:Proj}. Noting that $\mathrm{d}_1 \mathrm{d}_2 \mathscr{F}[\varepsilon_0,0](1,v) = -2\varepsilon_0\,[-1 + 2\,V(|\mathbf{x}|)]\,v = \frac{2}{\varepsilon_0}(\mathcal{L} + (1+\Delta)^2)v$, we obtain
    \begin{equation*}
    \begin{split}
        \mathcal{P}\left(\mathrm{d}_1 \mathrm{d}_2 \mathscr{F}[\varepsilon_0,0](1,v_0)\right) ={}& \frac{\langle v_0, \frac{2}{\varepsilon_0}(\mathcal{L} + (1+\Delta)^2)v_0 \rangle_{L^2}}{\| v_0 \|^2_{L^2}}\,v_0 =\frac{2}{\varepsilon_0}\frac{\langle v_0, (1+\Delta)^2\,v_0 \rangle_{L^2}}{\| v_0 \|^2_{L^2}}\,v_0 = \frac{2}{\varepsilon_0}\frac{\|(1+\Delta)\,v_0 \|^2_{L^2}}{\| v_0 \|^2_{L^2}}\,v_0,\\
    \end{split}
    \end{equation*}
    and so $\mathcal{P}\left(\mathrm{d}_1 \mathrm{d}_2 \mathscr{F}[\varepsilon_0,0](1,v_0)\right)=0$ if and only if $v_0\,(1+\Delta)v_0 = 0$. For any function $v\in H^1(\mathbb{R}^n; \mathbb{R})$ satisfying $(1+\Delta)v=0$, we note that Pokhozhaev's identity (cf.\ Berestycki and Lions \cite[(2.1)]{Berestycki1983GroundState})
\begin{equation*}
\frac{n-2}{2}\|\nabla v\|_{L^2(\mathbb{R}^n)}^2 = \frac{n}{2}\|v\|_{L^2(\mathbb{R}^n)}^2
\end{equation*}
holds for each $n\in\mathbb{N}$ and thus, for $v_0\in H^4_{\mathrm{e}}$ satisfying $(1+\Delta)v_0=0$ we obtain
\begin{equation*}
\|v_0\|_{L^2}^2 = 0
\end{equation*}
which implies that $v_0 = 0$.
\end{proof}

We now present our existence theorem for the local bifurcation curve.
\begin{thm}[Local bifurcation]\label{thm:Local}
Fix $R>0$, $k\in\mathbb{N}_0$ and $n\in\mathbb{N}$ such that Hypothesis~\ref{hyp:ker} holds for $\varepsilon = \varepsilon_{k,n}$. A bifurcation occurs at $\varepsilon=\varepsilon_{k,n}$; that is, for some $\delta>0$, there exist neighbourhoods $\mathcal{N}\subset \mathbb{R}$ of $\varepsilon_{k,n}$ and $\mathcal{V}\subset H^4_{\mathrm{e}}$ of $0$, and analytic functions $\mathcal{E}:(-\delta,\delta)\to \mathcal{N}$, $v:(-\delta,\delta)\to \mathcal{V}$
such that all nontrivial solutions to \eqref{def:Fu} in $\mathcal{U} := \mathcal{N}\times\mathcal{V}$ lie on the solution branch
    \begin{equation}\label{def:C-loc}
        \mathcal{C}_{\delta} := \left\{ (\mathcal{E}(s), v(s)) : s\in(-\delta,\delta)\right\},
    \end{equation}
with $\mathcal{E}(0)=\varepsilon_{k,n}$, $v(0)=0$ and $v'(0)= u_{k,n}$.
\end{thm}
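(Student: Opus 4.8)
The plan is to verify the three hypotheses of the analytic Crandall--Rabinowitz theorem (Theorem~\ref{thm:CR}) at the point $(\varepsilon,u)=(\varepsilon_{k,n},0)$ and then read off its conclusion directly. Before checking the bifurcation conditions, I would first confirm that $\mathscr{F}$ from \eqref{def:Fu} is a well-defined real-analytic map from an open neighbourhood of $(\varepsilon_{k,n},0)$ in $\mathbb{R}\times H^4_{\mathrm{e}}$ into $L^2_{\mathrm{e}}$. The linear and potential terms are affine in $u$ and polynomial in $\varepsilon$, hence jointly analytic, so the only nontrivial point is the substitution term $u\mapsto f(u)$. Since $d=2$ and $m=2>d/4$ (equivalently $2m=4>d/2$), the space $H^4(\mathbb{R}^2)$ is a Banach algebra continuously embedded in $L^\infty\cap C^0$; combined with the real-analyticity of $f$ and the normalisations $f(0)=f'(0)=0$, this yields that $u\mapsto f(u)$ is real-analytic as a map $H^4_{\mathrm{e}}\to H^4_{\mathrm{e}}\hookrightarrow L^2_{\mathrm{e}}$, the even-in-$y$ symmetry being preserved because $f$ acts pointwise. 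I would also record that $\mathscr{F}(\varepsilon,0)=0$ for all $\varepsilon$ and that $\mathrm{d}_2\mathscr{F}[\varepsilon,0]=\mathcal{L}$, so that the trivial branch and its linearisation are exactly the objects analysed in \S\ref{s:spec}.

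With $\mathscr{F}$ in hand, the three conditions listed before Lemma~\ref{lem:transversal} are verified as follows. Condition $(i)$, that $\mathcal{L}:H^4_{\mathrm{e}}\to L^2_{\mathrm{e}}$ is Fredholm of index zero at $\varepsilon=\varepsilon_{k,n}$, is exactly the content of Lemma~\ref{lem:Fredholm;L}. Condition $(ii)$, that $\ker(\mathcal{L})=\mathrm{span}\{u_{k,n}\}$ at $\varepsilon=\varepsilon_{k,n}$, is precisely Hypothesis~\ref{hyp:ker}, assumed to hold for our chosen $R$, $k$ and $n$; here the restriction to the even subspace $H^4_{\mathrm{e}}$ is what reduces the kernel to one dimension, by discarding the $\sin(k\theta)$ rotational partner of the eigenfunction that would otherwise arise from the $\mathrm{O}(2)$ symmetry. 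Condition $(iii)$, the transversality condition $\mathcal{P}(\mathrm{d}_1\mathrm{d}_2\mathscr{F}[\varepsilon_{k,n},0](1,u_{k,n}))\neq0$, follows from Lemma~\ref{lem:transversal}: since $u_{k,n}$ is a nonzero eigenfunction, the equivalence established there forces the projected mixed derivative to be nonzero.

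Having verified $(i)$--$(iii)$ together with the analyticity of $\mathscr{F}$, I would invoke the analytic Crandall--Rabinowitz theorem (Theorem~\ref{thm:CR}), whose hypotheses are exactly these. This produces $\delta>0$, neighbourhoods $\mathcal{N}$ and $\mathcal{V}$, and analytic curves $\mathcal{E}$ and $v$ such that the only nontrivial zeros of $\mathscr{F}$ in $\mathcal{U}=\mathcal{N}\times\mathcal{V}$ lie on the branch $\mathcal{C}_\delta$, with the normalisations $\mathcal{E}(0)=\varepsilon_{k,n}$, $v(0)=0$ and $v'(0)=u_{k,n}$, which completes the proof.

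The genuinely delicate step is the first one: establishing that the substitution operator $u\mapsto f(u)$ is well-defined and real-analytic between the fixed spaces $H^4_{\mathrm{e}}$ and $L^2_{\mathrm{e}}$, which is where the standing assumption $m>d/4$ does its work through the Banach-algebra and $L^\infty$-embedding structure of $H^4(\mathbb{R}^2)$. Once this mapping property is in place, every remaining hypothesis of Theorem~\ref{thm:CR} has already been supplied by Lemma~\ref{lem:Fredholm;L}, Hypothesis~\ref{hyp:ker} and Lemma~\ref{lem:transversal}, so the theorem reduces to an assembly of those results rather than any new analysis.
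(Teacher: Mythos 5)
Your proposal is correct and follows essentially the same route as the paper: verify conditions $(i)$--$(iii)$ via Lemma~\ref{lem:Fredholm;L}, Lemma~\ref{lem:eig} with Hypothesis~\ref{hyp:ker}, and Lemma~\ref{lem:transversal}, then invoke the analytic Crandall--Rabinowitz theorem (Theorem~\ref{thm:CR}) with $X=H^4_{\mathrm{e}}$, $Y=L^2_{\mathrm{e}}$, $v_0=u_{k,n}$. Your additional verification that the substitution operator $u\mapsto f(u)$ is real-analytic $H^4_{\mathrm{e}}\to L^2_{\mathrm{e}}$ (via the Banach-algebra and $L^\infty$-embedding properties of $H^4(\mathbb{R}^2)$) is a point the paper leaves implicit, and is a welcome addition rather than a deviation.
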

\begin{proof}
The existence of a (unique) local bifurcation curve follows directly from Theorem~\ref{thm:CR} with $X = H^{4}_{\mathrm{e}}$, $Y = L^2_{\mathrm{e}}$, $\varepsilon_0 = \varepsilon_{k,n}$, $v_0 = u_{k,n}$, and $\mathscr{F}(\varepsilon, u)$ given by \eqref{def:Fu}. Condition $(i)$ follows from Lemma~\ref{lem:Fredholm;L}, $(ii)$ from Lemma~\ref{lem:eig} and Hypothesis~\ref{hyp:ker}, and $(iii)$ from Lemma~\ref{lem:transversal}.
\end{proof}

Having proved the local existence of bifurcating solutions to \eqref{e:SH}, we now present additional results regarding the characterisation of the local solution branch $\mathcal{C}_{\delta}$. We begin by considering the rotational invariance of solutions along each local bifurcation branch; we recall the rotational subgroups $X_{k}^4$ defined in~\eqref{def:Xmk}, referring to Remark~\ref{rmk:Xmk} for the cases $k=0,1$, and prove the following result.
\begin{cor}\label{cor:Xmk}
    Fix $R>0$, $k\in\mathbb{N}_0$ and $n\in\mathbb{N}$ as in Theorem~\ref{thm:Local}. There exists some $0<\delta^*\leq\delta$ such that any solution $(\varepsilon,u)$ to \eqref{def:Fu} along the local bifurcation curve $\mathcal{C}_{\delta^*}\subseteq\mathcal{C}_{\delta}$ lies in the rotational subspace $\mathbb{R}\times X^{4}_{k}$.
\end{cor}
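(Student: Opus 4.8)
The plan is to show that the local bifurcation curve $\mathcal{C}_{\delta^*}$ is forced, by the uniqueness clause of Theorem~\ref{thm:Local}, to lie inside the invariant subspace $X^4_k$. The key structural fact I would exploit is that the rotational subspace $X^4_k$ is an invariant subspace for the full nonlinear operator $\mathscr{F}$: since $\mathcal{L}$ is isotropic (commutes with rotations), $V$ is radially symmetric, and $f$ acts pointwise, the operator $\mathscr{F}(\varepsilon,\cdot)$ maps $X^4_k$ into the corresponding subspace $X^0_k$ of $L^2_{\mathrm{e}}$. Concretely, if $\rho_k$ denotes rotation by $2\pi/k$ (and, for the dihedral structure, reflection across the $x$-axis), then $\mathscr{F}(\varepsilon,u)\circ\rho_k = \mathscr{F}(\varepsilon,u\circ\rho_k)$, so the fixed-point subspace of these symmetries is preserved.

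First I would restrict the bifurcation problem \eqref{def:Fu} to the closed subspace $X^4_k\subseteq H^4_{\mathrm{e}}$ and verify that all three Crandall--Rabinowitz hypotheses continue to hold there: $\mathcal{L}|_{X^4_k}:X^4_k\to X^0_k$ is still Fredholm of index zero (it is the restriction of a Fredholm operator to an invariant subspace on which the essential spectrum is unchanged), the kernel is still $\mathrm{span}\{u_{k,n}\}$ since by construction $u_{k,n}\in X^4_k$, and the transversality computation of Lemma~\ref{lem:transversal} is unaffected by the restriction because the projection $\mathcal{P}$ commutes with the symmetry and $u_{k,n}$ already lives in $X^4_k$. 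Applying Theorem~\ref{thm:CR} \emph{within} $X^4_k$ then yields a local bifurcation curve $\widetilde{\mathcal{C}}\subseteq \mathbb{R}\times X^4_k$ emanating from $(\varepsilon_{k,n},0)$ with the same data $v'(0)=u_{k,n}$.

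The decisive step is to match this restricted curve with the full curve $\mathcal{C}_\delta$. The curve $\widetilde{\mathcal{C}}$ consists of nontrivial solutions of \eqref{def:Fu} sitting inside $X^4_k\subseteq H^4_{\mathrm{e}}$, and for $s$ small enough it lies in the neighbourhood $\mathcal{U}=\mathcal{N}\times\mathcal{V}$ of Theorem~\ref{thm:Local}. But that theorem asserts that \emph{all} nontrivial solutions in $\mathcal{U}$ lie on $\mathcal{C}_\delta$; hence $\widetilde{\mathcal{C}}$ and $\mathcal{C}_\delta$ coincide on a sufficiently small parameter interval. Choosing $\delta^*\le\delta$ so that both local parametrisations are valid, I conclude that $\mathcal{C}_{\delta^*}\subseteq \mathbb{R}\times X^4_k$, as claimed.

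I expect the main obstacle to be the bookkeeping around the \emph{two} separate applications of Crandall--Rabinowitz (one in $H^4_{\mathrm{e}}$, one in $X^4_k$) and the need to confirm that the restricted problem genuinely satisfies a one-dimensional kernel condition \emph{relative to} $X^4_k$ rather than merely inheriting it; in particular one must check that no additional kernel directions appear or disappear under restriction, which follows because $\mathrm{ker}(\mathcal{L})\subseteq X^4_k$ already by Hypothesis~\ref{hyp:ker}. The genuinely technical point is verifying that $X^4_k$ is a closed, $\mathscr{F}$-invariant Banach subspace with $\mathcal{L}:X^4_k\to X^0_k$ Fredholm of index zero, but this is immediate from the isotropy of $\mathcal{L}$ and the radial symmetry of $V$ together with the chain-of-subspaces structure noted after Remark~\ref{rmk:Xmk}.
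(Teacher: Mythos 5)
Your proposal is correct and takes essentially the same route as the paper: the paper's proof likewise re-runs Theorem~\ref{thm:Local} with $H^4_{\mathrm{e}}$ and $L^2_{\mathrm{e}}$ replaced by $X^4_k$ and $X^0_k$, and then identifies the resulting curve with $\mathcal{C}_\delta$ via the uniqueness of nontrivial solutions in the $H^4_{\mathrm{e}}$-neighbourhood $\mathcal{U}$. Your write-up simply makes explicit the invariance of $\mathscr{F}$ under the $\mathbb{D}_k$ action and the verification of the Crandall--Rabinowitz hypotheses in the restricted spaces, which the paper leaves implicit.
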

\begin{proof}
    The existence of a local bifurcation curve $\mathcal{C}_{\delta^*}\subseteq \mathbb{R}\times X^{4}_{k}$ follows from respectively replacing $H^{4}_{\mathrm{e}}$ and $L^2_{\mathrm{e}}$ with $X^{4}_{k}$ and $X^{0}_{k}$ in the proof of Theorem~\ref{thm:Local}. Since $X^{4}_{k}$ is a subspace of $H^{4}_{\mathrm{e}}$, we conclude that this bifurcation curve and the curve obtained in Theorem~\ref{thm:Local} are equivalent, by uniqueness in $H^{4}_{\mathrm{e}}$.
\end{proof}

Utilising the analytic properties of the local bifurcation curve, we also apply a supplementary theorem presented in Groves and Horn \cite[Theorem 4.1]{groves2018ferrofluid} to characterise the local bifurcation curve $\mathcal{C}_{\delta}$. We state this result as Theorem~\ref{thm:CR-pitch} in the appendix, which we apply in the following corollary.
\begin{cor}\label{thm:Local-Char}
Fix $R>0$, $k\in\mathbb{N}_0$ and $n\in\mathbb{N}$ as in Theorem~\ref{thm:Local}, so that a local bifurcation occurs at $\varepsilon=\varepsilon_{k,n}$, with Taylor expansions
\begin{equation*}
\varepsilon=\varepsilon_{k,n} + s\,\varepsilon_1 + s^2\,\varepsilon_2 + \mathcal{O}(s^3),\qquad\qquad  u(\mathbf{x}) = s\,u_{k,n}(\mathbf{x}) + s^2\,v_1(\mathbf{x}) + \mathcal{O}(s^3)
\end{equation*}
for $|s|<\delta$, $\varepsilon_1,\varepsilon_2\in\mathbb{R}$, and $v_1\in \ker(\mathcal{Q})$, where $\mathcal{Q}:L^2_{\mathrm{e}}\to L^2_{\mathrm{e}}$ is a projection with $\mathrm{Im}(\mathcal{Q}) = \ker(\mathcal{L})$. 

The coefficient $\varepsilon_1$ is given by
\begin{equation*}
\begin{aligned}
    \varepsilon_1 ={}& - \frac{\varepsilon_{k,n}}{2}\,f''(0)\,\frac{\langle u_{k,n}, u_{k,n}^2\rangle_{L^2}}{\|(1+\Delta)\,u_{k,n}\|^2_{L^2}}.\\
\end{aligned}
\end{equation*}
and the bifurcation is transcritical if $\varepsilon_1\neq0$. If $\varepsilon_1=0$, the coefficient $\varepsilon_2$ is given by
\begin{equation*}
\begin{aligned}
    \varepsilon_2 ={}&  - \frac{\varepsilon_{k,n}}{2}\,f''(0)\,\frac{\langle u_{k,n}, u_{k,n}\,v_1\rangle_{L^2}}{\|(1+\Delta)\,u_{k,n}\|_{L^2}^2} - \frac{\varepsilon_{k,n}}{4}\,\,f'''(0)\,\frac{\| u_{k,n} \|_{L^4}^4}{\|(1+\Delta)\,u_{k,n}\|_{L^2}^2}
\end{aligned}
\end{equation*}
and the bifurcation is a subcritical (supercritical) pitchfork if $\varepsilon_2<0$ ($\varepsilon_2>0$). In particular, $\varepsilon_1=0$ for all $k\neq0$.
\end{cor}
\begin{proof}
The characterisation of the local bifurcation curve follows directly from Theorem~\ref{thm:CR-pitch}, where we note that
for $k\neq0$ we obtain $u_{k,n}^2(r\cos(\theta),r\sin(\theta)) = \frac{1}{2}v_{k,n}^2(r) + \frac{1}{2}v_{k,n}^2(r)\,\cos(2k\theta)$ and so, by the orthogonality of the angular Fourier modes, it follows that
\begin{equation*}
\begin{aligned}
    \varepsilon_1 ={}& - \frac{\varepsilon_{k,n}}{2}\,f''(0)\,\frac{\langle u_{k,n}, u_{k,n}^2\rangle_{L^2}}{\|(1+\Delta)\,u_{k,n}\|^2_{L^2}} = 0\\
\end{aligned}
\end{equation*}
for any $k,n\in\mathbb{N}$.
\end{proof}

We now consider the stability of solutions along the local branch $\mathcal{C}_{\delta}$. In particular, we prove the spectral (in)stability of local bifurcating solutions for transcritical, subcritical, or supercritical bifurcations.

\begin{lem}\label{lem:Stab}
    Fix $R>0$, $k\in\mathbb{N}_0$ and $n\in\mathbb{N}$ as in Theorem~\ref{thm:Local}, so that a local bifurcation occurs at $\varepsilon=\varepsilon_{k,n}$, with bifurcating profile $u(\mathbf{x}) = s\,u_{k,n}(\mathbf{x}) + \mathcal{O}(s^2)$ for $|s|<\delta$. If $\varepsilon=\varepsilon_{k,n}$ is the primary bifurcation point, then the bifurcation curve consists of 
    \begin{enumerate}[label=(\roman*)]
        \item 1 stable and 1 unstable branch if the bifurcation is transcritical;
        \item 2 stable branches if the bifurcation is a supercritical pitchfork; and
        \item 2 unstable branches if the bifurcation is a subcritical pitchfork.
    \end{enumerate}
    Each subsequent local bifurcation curve consists of solely unstable branches.
\end{lem}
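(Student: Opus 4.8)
The plan is to prove spectral stability by tracking the single eigenvalue that crosses the origin at $\varepsilon=\varepsilon_{k,n}$ through the Crandall--Rabinowitz principle of exchange of stability, handling the primary and non-primary cases by means of the spectral gap provided by Lemma~\ref{lem:Fredholm;L}. First I would reduce the question of stability to the point spectrum: by Lemma~\ref{lem:Fredholm;L} the linearisation $\mathcal{L}_{v(s)}:H^4_{\mathrm{e}}\to L^2_{\mathrm{e}}$ about a bifurcating solution $(\mathcal{E}(s),v(s))$ is Fredholm of index zero with essential spectrum $(-\infty,-\mathcal{E}(s)^2]$, so every element of the spectrum to the right of $-\mathcal{E}(s)^2$ is an isolated eigenvalue of finite multiplicity. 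Since $\mathcal{L}_{v(s)}$ is symmetric with respect to the real $L^2$ inner product, these eigenvalues are real and spectral stability is equivalent to the supremum of the point spectrum being negative.

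Next I would show that the crossing eigenvalue moves transversally. Writing $\mu(\varepsilon)$ for the eigenvalue of $\mathcal{L}=\mathrm{d}_2\mathscr{F}(\varepsilon,0)$ with $\mu(\varepsilon_{k,n})=0$ and eigenfunction $u_{k,n}$, differentiating the relation $\mathcal{L}(\varepsilon)\psi(\varepsilon)=\mu(\varepsilon)\psi(\varepsilon)$ at $\varepsilon_{k,n}$ and pairing with $u_{k,n}$---using self-adjointness and $\mathcal{L}(\varepsilon_{k,n})u_{k,n}=0$---gives $\mu'(\varepsilon_{k,n})\,\|u_{k,n}\|_{L^2}^2=\langle u_{k,n},\mathrm{d}_1\mathrm{d}_2\mathscr{F}[\varepsilon_{k,n},0](1,u_{k,n})\rangle_{L^2}$. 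The computation already performed in the proof of Lemma~\ref{lem:transversal} identifies the right-hand side with $\tfrac{2}{\varepsilon_{k,n}}\|(1+\Delta)u_{k,n}\|_{L^2}^2>0$, so $\mu'(\varepsilon_{k,n})>0$ and the trivial state becomes unstable as $\varepsilon$ increases through $\varepsilon_{k,n}$.

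With transversality established I would invoke the principle of exchange of stability. Because $\mathcal{L}$ at $\varepsilon_{k,n}$ has a simple zero eigenvalue separated from the rest of its spectrum, for small $|s|$ the operator $\mathcal{L}_{v(s)}$ possesses exactly one eigenvalue $\gamma(s)$ near the origin, analytic in $s$ with $\gamma(0)=0$, while its remaining eigenvalues stay a fixed distance from $0$; the exchange-of-stability formula then yields $\mathrm{sign}\,\gamma(s)=-\mathrm{sign}\big(s\,\mathcal{E}'(s)\,\mu'(\varepsilon_{k,n})\big)=-\mathrm{sign}\big(s\,\mathcal{E}'(s)\big)$ for small $s\neq0$. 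Substituting the local forms of $\mathcal{E}$ from Corollary~\ref{thm:Local-Char} produces the three cases: for a transcritical branch $\mathcal{E}'(0)=\varepsilon_1\neq0$, so $s\,\mathcal{E}'(s)$ changes sign with $s$ and one branch is stable while the other is unstable; for a pitchfork $\mathcal{E}(s)=\varepsilon_{k,n}+\varepsilon_2 s^2+\mathcal{O}(s^4)$, so $s\,\mathcal{E}'(s)=2\varepsilon_2 s^2+\mathcal{O}(s^4)$ has the sign of $\varepsilon_2$, giving two stable branches when $\varepsilon_2>0$ (the supercritical case, with branches in the region $\varepsilon>\varepsilon_{k,n}$ where the trivial state is unstable) and two unstable branches when $\varepsilon_2<0$. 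At the primary bifurcation the trivial state has no non-negative eigenvalue apart from the simple zero, so this gap persists to $\mathcal{L}_{v(s)}$ and the sign of $\gamma(s)$ alone decides stability, proving (i)--(iii).

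For a non-primary bifurcation I would instead exhibit an inherited instability. As $\varepsilon_{k,n}$ is not the first root of any $F_{k'}(R,\varepsilon,0)=0$, some eigenvalue has already crossed the origin at a smaller parameter value and, by the monotonicity of the eigenvalues in $\varepsilon$ (Lemma~\ref{lem:eig-2}(iii) and Figure~\ref{fig:Stab_Plot}), is strictly positive at $\varepsilon_{k,n}$; hence the trivial state already has a positive eigenvalue, and by continuity of the isolated eigenvalues $\mathcal{L}_{v(s)}$ retains one for all small $s$, so every bifurcating branch is unstable regardless of the sign of $\gamma(s)$. I expect the principal obstacle to lie precisely here: making rigorous the assertion that a once-crossed eigenvalue remains positive up to $\varepsilon_{k,n}$ requires \emph{global} monotonicity of the eigenvalues in $\varepsilon$, which Lemma~\ref{lem:eig-2}(iii) guarantees only in the asymptotic regime $R\to\infty$ and which is otherwise supported numerically. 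A secondary point is justifying the exchange-of-stability perturbation argument on the unbounded domain---that the eigenvalues above the essential spectrum stay isolated and depend analytically on $s$---which follows from the Fredholm property of Lemma~\ref{lem:Fredholm;L} and analytic perturbation theory but replaces the compact-resolvent reasoning available on bounded domains.
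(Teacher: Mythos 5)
Your proposal is correct and reaches the same conclusions as the paper, but by a genuinely different mechanism for the primary branch. The paper's proof is a hands-on perturbation expansion: it writes the critical eigenvalue as $\lambda^* = s\,\Lambda(s)$ (using Kato to get analyticity of the simple eigenvalue along $\mathcal{C}_\delta$), Taylor-expands the eigenvalue problem \eqref{e:SH-stab} order by order in $s$, and computes the coefficients $\lambda_0$ and $\lambda_1$ explicitly in terms of $f''(0)$, $f'''(0)$ and inner products of $u_{k,n}$, identifying the degenerate cases $\lambda_0=0$, $\lambda_1=0$ with the transcritical/pitchfork criticality conditions of Theorem~\ref{thm:CR-pitch}. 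You instead verify the crossing speed $\mu'(\varepsilon_{k,n})=\tfrac{2}{\varepsilon_{k,n}}\|(1+\Delta)u_{k,n}\|_{L^2}^2/\|u_{k,n}\|_{L^2}^2>0$ (reusing the computation and the Pokhozhaev argument of Lemma~\ref{lem:transversal}) and then cite the Crandall--Rabinowitz exchange-of-stability principle, $\mathrm{sign}\,\gamma(s)=-\mathrm{sign}\bigl(s\,\mathcal{E}'(s)\,\mu'(\varepsilon_{k,n})\bigr)$, whose hypotheses (index-zero Fredholmness, simple kernel, transversal crossing) are exactly what Lemmas~\ref{lem:Fredholm;L}, \ref{lem:eig} and \ref{lem:transversal} provide. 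Your route is shorter and makes the structural reason for cases (i)--(iii) transparent; one can check it is consistent with the paper, since eliminating $f''(0)\langle u_{k,n},u_{k,n}^2\rangle_{L^2}$ via the relation defining $\varepsilon_1$ collapses the paper's $\lambda_0$ to $-s^{-1}\,s\,\mathcal{E}'(0)\,\mu'(\varepsilon_{k,n})$ at leading order. What the paper's explicit expansion buys, and yours does not, are the concrete coefficient formulas (e.g.\ $f''(0)\langle u_{k,n},u_{k,n}v_1\rangle_{L^2}+\tfrac{1}{2}f'''(0)\|u_{k,n}\|_{L^4}^4$) that feed the later remarks, such as every bifurcation being a supercritical stable pitchfork when $f(u)=-u^3$. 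For the non-primary case both arguments are the same Kato perturbation, and your flagged obstacle is a fair reading: the paper's one-line treatment equally presupposes that the trivial state is genuinely unstable at every subsequent bifurcation point, which rests on eigenvalues not re-crossing the origin --- a monotonicity property proven only in the asymptotic regime of Lemma~\ref{lem:eig-2}(iii) and otherwise observed numerically in Figure~\ref{fig:Stab_Plot}; so this caveat is shared by the paper rather than being a defect specific to your argument.
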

    \begin{proof}
We only consider the primary bifurcation branch where the trivial state first destabilises; the eigenvalues of the bifurcating solution will be a small perturbation of the eigenvalues of the trivial state (cf.\ Kato \cite[Theorem VIII-2.6]{Kato1995}), and so each bifurcating solution will be unstable for an unstable trivial state.

We consider the eigenvalue problem for the linearisation of \eqref{e:SH} about the solution $u = v(s)$
\begin{equation}\label{e:SH-stab}
    \lambda w = \mathcal{L}_{v(s)}\,w = -(1+\Delta)^2 w - \varepsilon^2 w + 2\varepsilon^2\,V(|\mathbf{x}|)\,w + f'(v(s))\,w
\end{equation}
where we take $\lambda=\lambda^*$ to be the most unstable eigenvalue; i.e.,
\begin{equation*}
\mathrm{Re}(\lambda^*) > \mathrm{Re}(\lambda)\qquad  \text{for all \;$\lambda\in\sigma(\mathcal{L}_{v(s)})\backslash\{\lambda^*\}$}.    
\end{equation*}
Since we have assumed that the trivial state destabilises at the bifurcation point, it follows that $\lambda^*|_{s=0} = 0$. We note that $\lambda^*|_{s=0} = 0$ is simple, by Hypothesis~\ref{hyp:ker}, and so there exists an analytic function $\Lambda:(-\delta,\delta)\to\mathbb{R}$ such that $\lambda^* = s\,\Lambda(s)$ in a neighbourhood of the curve $\mathcal{C}_{\delta}$ (see Kato \cite[VII-\S3.2]{Kato1995}). We consider the following Taylor expansions
\begin{equation*}
v(s) = u_{k,n}\,s + \sum_{n=1}^{\infty} v_n\,s^{n+1}, \qquad \mathcal{E}(s) = \varepsilon_{k,n} + \sum_{n=1}^{\infty} \varepsilon_n\,s^n, \qquad \Lambda(s) = \lambda_0 + \sum_{n=1}^{\infty} \lambda_n\,s^n,
\end{equation*}
where $v_1, v_2, \dots\in\mathrm{ker}(\mathcal{Q})$, $\varepsilon_1,\varepsilon_2,\dots\in\mathbb{R}$ and $\lambda_0,\lambda_1,\dots\in\mathbb{R}$, such that
\begin{equation*}
 u = v(s),\qquad  \varepsilon = \mathcal{E}(s), \qquad \lambda = s\,\Lambda(s),\qquad s\in(-\delta,\delta)
\end{equation*}
for some $\delta>0$. Expanding \eqref{e:SH-stab} in powers of $s\in(-\delta,\delta)$, we obtain
\begin{equation*}
\begin{aligned}
    \mathcal{O}(1):& &\qquad 0 ={}& -(1+\Delta)^2 w - \varepsilon_{k,n}^2 w + 2\,\varepsilon_{k,n}^2\,V(|\mathbf{x}|)\,w,\\
    \mathcal{O}(s):& &\qquad \lambda_0 w ={}& - 2\,\varepsilon_{k,n}\,\varepsilon_1 w + 4\,\varepsilon_{k,n}\,\varepsilon_1\,V(|\mathbf{x}|)\,w + f''(0)\,u_{k,n}\,w,\\
    \mathcal{O}(s^2): & & \lambda_1 w ={}& -\left(\varepsilon_1^2 + 2 \varepsilon_{k,n} \,\varepsilon_2\right)\,[-1 + 2 V(|\mathbf{x}|)]\,w + f''(0)\,v_1\,w + \frac{1}{2!}\,f'''(0)\,u_{k,n}^2\,w,\\
\end{aligned}
\end{equation*}
The first equation yields $w = u_{k,n}$, resulting in the condition
\begin{equation*}
\begin{aligned}
    \lambda_0 \langle u_{k,n},u_{k,n}\rangle_{L^2} ={}& \varepsilon_1\,\langle u_{k,n}, \mathrm{d}_1\mathrm{d}_2\mathscr{F}[\varepsilon_{k,n},0](1,u_{k,n})\rangle_{L^2} + f''(0)\,\langle u_{k,n}, u_{k,n}^2\rangle_{L^2},\\
\end{aligned}
\end{equation*}
and so
\begin{equation*}
\begin{aligned}
    \lambda_0 ={}& \frac{2\,\varepsilon_1}{\varepsilon_{k,n}}\,\frac{\|(1+\Delta)\,u_{k,n}\|^2_{L^2}}{\|u_{k,n}\|^2_{L^2}} + f''(0)\,\frac{\langle u_{k,n}, u_{k,n}^2\rangle_{L^2}}{\|u_{k,n}\|^2_{L^2}}.\\
\end{aligned}
\end{equation*}
Note the critical case when $\lambda_0 = 0$ is the condition that characterises a transcritical bifurcation in Theorem~\ref{thm:CR-pitch}. For $\lambda_0\neq0$, we see that $\lambda = s\,\Lambda(s)$ changes sign at $s=0$ for $|s|\ll1$, and so one bifurcating branch is stable and the other is unstable. 

If $\langle u_{k,n}, u_{k,n}^2\rangle_{L^2}=0$, we instead set $\varepsilon_1=0$ and obtain the leading-order condition
\begin{equation*}
\begin{aligned}
    \lambda_1 ={}& \frac{2\,\varepsilon_2}{\varepsilon_{k,n}}\frac{\|(1+\Delta)\,u_{k,n}\|_{L^2}^2}{\| u_{k,n} \|_{L^2}^2} + f''(0)\,\frac{\langle u_{k,n}, u_{k,n}\,v_1\rangle_{L^2}}{\| u_{k,n} \|_{L^2}^2} + \frac{1}{2!}\,f'''(0)\,\frac{\| u_{k,n} \|_{L^4}^4}{\| u_{k,n} \|_{L^2}^2}
\end{aligned}
\end{equation*}
where the critical case $\lambda_1 = 0$ is the condition that characterises a pitchfork bifurcation in Theorem~\ref{thm:CR-pitch}. If $f''(0)\,\langle u_{k,n}, u_{k,n}\,v_1\rangle_{L^2} + \frac{1}{2!}\,f'''(0)\,\| u_{k,n}\|_{L^4}^4 > 0$, then the pitchfork bifurcation is subcritical and the bifurcating branches are unstable. Conversely, if $f''(0)\,\langle u_{k,n}, u_{k,n}\,v_1\rangle_{L^2} + \frac{1}{2!}\,f'''(0)\,\| u_{k,n}\|_{L^4}^4 < 0$, then the pitchfork bifurcation is supercritical and the bifurcating branches are stable.
    \end{proof}

\begin{figure}[t!]
    \centering
    \includegraphics[width=\linewidth]{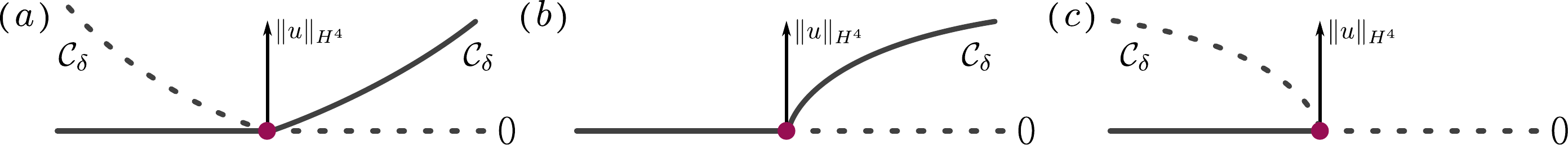}
    \caption{Stability of the primary bifurcating branch for (a) transcritical, (b) supercritical pitchfork, and (c) subcritical pitchfork bifurcations from the trivial state. Here solid and dashed lines indicate stable and unstable solutions, respectively.}
    \label{fig:Stability}
\end{figure}

Thus, we obtain the expected stability behaviour associated with transcritical and pitchfork bifurcations; the axisymmetric solution bifurcates along a stable and unstable branch when $f''(0)\langle u^{\vphantom{2}}_{k,n}, u_{k,n}^2\rangle_{L^2}\neq0$, whereas the non-axisymmetric branch (or the axisymmetric branch with $f''(0)\langle u^{\vphantom{2}}_{k,n}, u_{k,n}^2\rangle_{L^2}=0$) bifurcates along two stable or two unstable branches, depending on the criticality of the pitchfork; see Figure~\ref{fig:Stability}.

\begin{rmk}
We note that Lemma~\ref{lem:Stab} holds when restricted to any rotational subspace $X^4_k$ that $\mathcal{C}_\delta$ lies in. This provides additional information regarding the stability of bifurcating solutions with respect to different dihedral perturbations. 

As an example, suppose the first two bifurcation points are of the form $\varepsilon = \varepsilon_{1,1}, \;\varepsilon_{0,1}$, with each respective solution $u = u_{1,1}, \; u_{0,1}$ bifurcating supercritically. Then, in the full space $H^4_{\mathrm{e}}$ Lemma~\ref{lem:Stab} states that the solution $u = u_{1,1}$ is linearly stable close to its bifurcation point and $u = u_{0,1}$ is unstable. If we instead restrict to the space of axisymmetric functions, then $u= u_{1,1}$ is no longer a valid solution and so $u = u_{0,1}$ is the primary  branch and thus linearly stable by Lemma~\ref{lem:Stab}. It follows that, in this case, $u = u_{0,1}$ is linearly stable with respect to all dihedral and axisymmetric perturbations (i.e. perturbations within the $\mathbb{D}_k$ symmetry group for $k\neq1$) but is unstable in $H^4_{\mathrm{e}}$, while the solution $u = u_{1,1}$ is linearly stable for all perturbations in $H^4_{\mathrm{e}}$.
\end{rmk}

\begin{rmk}
    For particular choices of the nonlinearity $f(u)$, we can immediately predict the bifurcation structure of these localised patterns. In particular, for $f(u)=-u^3$ (or, indeed, any analytic function $f$ with $f''(0)=0$ and $f'''(0)<0$) every bifurcation will be a supercritical pitchfork, and so the primary bifurcating branches will both be stable.
\end{rmk}

We augment our analysis with numerical simulations (using codes available at \cite{Hill2025Github}) of \eqref{e:SH} with $f(u)=\nu u^2 - u^3$, which we present in Figure~\ref{fig:Bif_1}. We consider three cases $R= 2.6, 2.8, 7.5$ for the width of the potential $V$, corresponding to the three regimes highlighted in Figure~\ref{fig:Roots}, and two choices $\nu = 0, 0.4$ of the quadratic coefficient. As predicted, we observe that the primary bifurcation branch is linearly stable to the right of the bifurcation point and unstable to the left, and Figures~\ref{fig:Bif_1}(a-b) and (d-e) show that the primary branch changes from the $\mathbb{D}_0$ spot to the $\mathbb{D}_1$ dipole pattern as $R$ passes from $2.6$ to $2.8$. Furthermore, Figures~\ref{fig:Bif_1}(c) and (f) show that a $\mathbb{D}_6$ branch bifurcates for $R=7.5$, as predicted in Figure~\ref{fig:Roots}(d). 

We note that, by including the quadratic coefficient $\nu=0.4$, the $\mathbb{D}_0$ branch changes from a supercritical pitchfork to a transcritical bifurcation, while the $\mathbb{D}_1$ branch changes from a supercritical to a subcritical pitchfork when $R=2.8$; in contrast, the $\mathbb{D}_1$ branch remains unchanged for $R=2.6$, as does the $\mathbb{D}_6$ branch for $R=7.5$. While the $\mathbb{D}_1$ branch is qualitatively the same in Figures~\ref{fig:Bif_1}(a) and (d), and likewise the $\mathbb{D}_6$ branch in Figures~\ref{fig:Bif_1}(c) and (f), we emphasise that the solution profiles qualitatively change as $\nu$ increase from $0$ to $0.4$. We note that when $R=2.8$, such that the primary branch is the $\mathbb{D}_1$ dipole pattern, the $\mathbb{D}_0$ branch is initially unstable but then restabilises as a pair of unstable $\mathbb{D}_1$ branches emerge from a secondary bifurcation; see Figures~\ref{fig:Bif_1}(b) and (e).

\begin{figure}[t!]
    \centering
    \includegraphics[width=\linewidth]{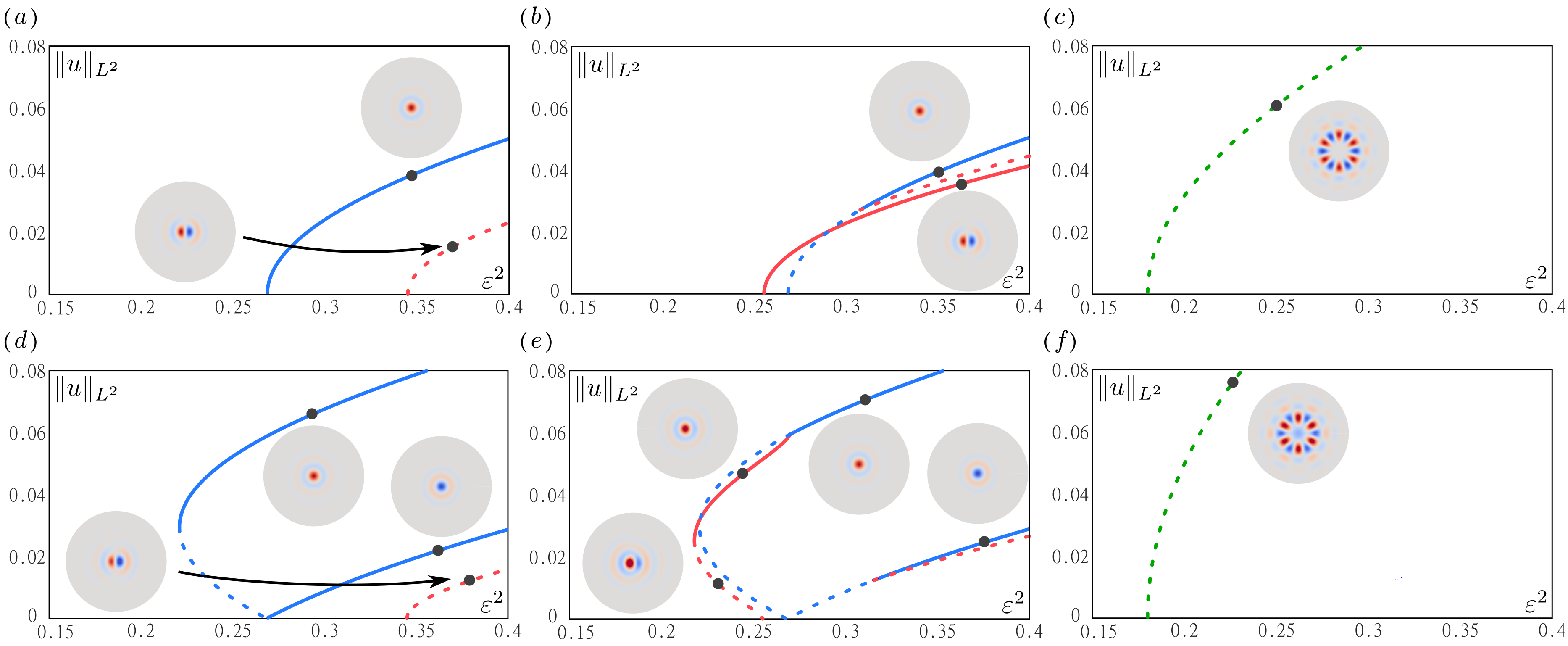}
    \caption{Numerical simulations of \eqref{e:SH} with $f(u) = \nu\, u^2 -u^3$. Columns indicate (Left) $R = 2.6$, (Centre) $R=2.8$, and (Right) $R=7.5$, while rows indicate (Top) $\nu=0$ and (Bottom) $\nu=0.4$. The blue, red and green curves are the solution branches for localised $\mathbb{D}_0$, $\mathbb{D}_1$ and $\mathbb{D}_6$ patterns, respectively, and solid/dashed lines indicate linearly stable/unstable solutions.}
    \label{fig:Bif_1}
\end{figure}   

\subsection{Global bifurcation}\label{s:bif-glob}
We now prove that the local bifurcation curves found in the previous section can be continued away from the bifurcation point; we call these \textit{global bifurcation branches}, and use the analytic global implicit function theorem of Chen, Walsh and Wheeler \cite[Theorem B.1]{Chen2024Global} (which we present as Theorem~\ref{thm:Global-cont} in the appendix) in order to prove their existence.

\begin{thm}[Global bifurcation]\label{thm:Global}
Fix $R>0$, $k\in\mathbb{N}_0$ and $n\in\mathbb{N}$ as in Theorem~\ref{thm:Local}, so that there exists a local bifurcation curve $\mathcal{C}_{\delta}$ passing through the point $(\varepsilon,u) = (\varepsilon_{k,n},0)$. Then, there exists an open set $U\subset \mathbb{R}\times H^4_{\mathrm{e}}$, with $(\varepsilon_{k,n},0)\in U$, such that $\mathcal{C}_{\delta}$ can be extended to a global bifurcation curve $\mathcal{C} = \{ (\mathcal{E}(s), v(s))\,:\; s\in\mathbb{R}\}$ of solutions to \eqref{def:Fu}, where $(\mathcal{E}, v) :\mathbb{R}\to U$ is a continuous function. Furthermore,
\begin{enumerate}[label=(\alph*)]
    \item $\mathcal{C}$ has a local analytic parametrisation around each of its points;
     \item any locally analytic bifurcation curve in $U$ which includes the point $(\varepsilon_{k,n},0)$ is contained in $\mathcal{C}$; and
     \item one of the following alternatives occurs:
     \begin{enumerate}[label=\Roman*:~]
         \item $\mathcal{C}$ is unbounded;
         \item $\mathcal{C}$ approaches the boundary of $U$;
         \item $\mathcal{C}$ forms a closed loop; or
         \item $\mathcal{C}$ loses compactness.
     \end{enumerate}
 \end{enumerate}
\end{thm}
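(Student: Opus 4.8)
The plan is to obtain $\mathcal{C}$ as the maximal real-analytic extension of the local curve $\mathcal{C}_{\delta}$ from Theorem~\ref{thm:Local} by applying the analytic global implicit function theorem of Chen, Walsh and Wheeler (Theorem~\ref{thm:Global-cont}). To invoke that result I must exhibit an open set $U\subseteq\mathbb{R}\times H^4_{\mathrm{e}}$ containing $(\varepsilon_{k,n},0)$ on which (i) $\mathscr{F}$ is real-analytic and (ii) the Fr\'echet derivative $\mathrm{d}_2\mathscr{F}[\varepsilon,u]=\mathcal{L}_u$ is Fredholm of index zero, together with the seed curve $\mathcal{C}_{\delta}$ and its nondegeneracy.

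First I would verify real-analyticity of $\mathscr{F}:U\to L^2_{\mathrm{e}}$. The linear terms $-(1+\Delta)^2 u-\varepsilon^2 u+2\varepsilon^2 V u$ are polynomial in $\varepsilon$ and bounded-linear in $u$, hence jointly analytic. For the nonlinear term I would use that, since $d=2<2m=4$, the embedding $H^4(\mathbb{R}^2)\hookrightarrow C^0(\mathbb{R}^2)\cap L^\infty$ holds and $H^4(\mathbb{R}^2)$ is a Banach algebra; combined with the real-analyticity of $f$ this makes the superposition operator $u\mapsto f(u)$ real-analytic from $H^4_{\mathrm{e}}$ into $L^2_{\mathrm{e}}$, by expanding $f$ in its convergent Taylor series and estimating each monomial in the algebra norm. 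The Fredholm index-zero property of $\mathcal{L}_u=\mathrm{d}_2\mathscr{F}[\varepsilon,u]$ at every $(\varepsilon,u)\in U$ is exactly Lemma~\ref{lem:Fredholm;L}; since that lemma requires $\varepsilon>0$ (so that $\sigma_{\mathrm{ess}}=(-\infty,-\varepsilon^2]$ is bounded away from $0$), I would take $U\subseteq\{\varepsilon>0\}\times H^4_{\mathrm{e}}$, so that the line $\varepsilon=0$ lies on $\partial U$.

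With these hypotheses in hand, Theorem~\ref{thm:Global-cont} produces the continuous curve $\mathcal{C}=\{(\mathcal{E}(s),v(s)):s\in\mathbb{R}\}$ extending $\mathcal{C}_{\delta}$. Properties (a) and (b) --- local analytic reparametrisation around each point and maximality (any locally analytic curve through $(\varepsilon_{k,n},0)$ in $U$ is contained in $\mathcal{C}$) --- are part of the conclusion of that theorem, using the simplicity of the kernel (Hypothesis~\ref{hyp:ker}) and the transversality (Lemma~\ref{lem:transversal}) to initiate the analytic continuation. The crucial structural point is that, unlike the classical Buffoni--Toland theorem, the Chen--Walsh--Wheeler version does not \emph{assume} that $\mathscr{F}$ is proper on closed bounded sets; instead it records the failure of such compactness as alternative IV. This is precisely what lets us extract a genuine global branch despite working on $\mathbb{R}^2$: the four alternatives are that $\mathcal{C}$ is unbounded in $\mathbb{R}\times H^4_{\mathrm{e}}$ (I), approaches $\partial U$ --- in particular the line $\varepsilon=0$ (II), closes into a loop (III), or loses compactness (IV).

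The main obstacle is exactly the compactness question, and it is this obstruction that forces the presence of alternative IV rather than being eliminable. On a bounded domain one combines elliptic (Schauder) estimates with Rellich--Kondrachov to show that any sequence of solutions with bounded parameter and bounded norm is precompact, ruling out (IV); on $\mathbb{R}^2$, translation invariance and concentration or escape of mass to infinity destroy this, so I would not attempt to prove properness and would instead leave (IV) open, exactly as the statement of Theorem~\ref{thm:Global-cont} permits. A secondary technical point worth handling with care is the choice of function space: since $X^4_k$ is an invariant subspace of $\mathscr{F}$ (the equation is $\mathbb{D}_k$-equivariant) and the seed curve lies in $\mathbb{R}\times X^4_k$ by Corollary~\ref{cor:Xmk}, I would run the continuation inside $X^4_k$ to guarantee that the entire global branch retains $\mathbb{D}_k$ symmetry, then identify it with the curve in $H^4_{\mathrm{e}}$ via the maximality in property (b).
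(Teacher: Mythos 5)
Your overall route is the paper's route: apply Theorem~\ref{thm:Global-cont} with $X=H^4_{\mathrm{e}}$, $Y=L^2_{\mathrm{e}}$, restrict $U$ to $\{\varepsilon>0\}\times H^4_{\mathrm{e}}$ so that Lemma~\ref{lem:Fredholm;L} gives Fredholmness of index zero at every point of $U$ (which is also what removes the loss-of-Fredholmness alternative of Theorem~\ref{thm:Global-cont} from the list in Theorem~\ref{thm:Global}), and accept that loss of compactness cannot be excluded. But there is a genuine gap at the point where you actually invoke the theorem: Theorem~\ref{thm:Global-cont} requires a seed point $(\varepsilon_*,u_*)$ at which $\mathrm{d}_2\mathscr{F}[\varepsilon_*,u_*]$ is an \emph{isomorphism} $X\to Y$, and the bifurcation point $(\varepsilon_{k,n},0)$ fails this by construction: there $\mathrm{d}_2\mathscr{F}[\varepsilon_{k,n},0]=\mathcal{L}$ has the one-dimensional kernel $\mathrm{span}\{u_{k,n}\}$ --- that is precisely why a bifurcation occurs. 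Your statement that the simplicity of the kernel (Hypothesis~\ref{hyp:ker}) and transversality (Lemma~\ref{lem:transversal}) are used ``to initiate the analytic continuation'' conflates the Crandall--Rabinowitz hypotheses, which produce the local branch, with the seed hypothesis of the global theorem, which they do not satisfy; applied literally at $(\varepsilon_{k,n},0)$, your argument invokes Theorem~\ref{thm:Global-cont} at a point where its assumptions fail.

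The missing step, which is the one nontrivial point in the paper's proof, is to seed the continuation at a nontrivial point $(\varepsilon_*,v_*)=(\mathcal{E}(s_0),v(s_0))\in\mathcal{C}_\delta$ with $s_0\neq0$. Such a point exists: along the analytic local branch the linearisation remains Fredholm of index zero (Lemma~\ref{lem:Fredholm;L}), and the transversality of Lemma~\ref{lem:transversal} guarantees that the simple critical eigenvalue does not vanish identically along the branch (cf.\ the expansion $\lambda^*=s\,\Lambda(s)$ in Lemma~\ref{lem:Stab}), so by analyticity its zeros in $s$ are isolated and for small $s_0\neq0$ the linearisation has trivial kernel, hence is an isomorphism by Fredholmness of index zero. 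Seeding there, the maximality property (part (d) of Theorem~\ref{thm:Global-cont}) gives $\mathcal{C}_\delta\subseteq\mathcal{C}$, so $\mathcal{C}$ does pass through $(\varepsilon_{k,n},0)$, and property (b) of Theorem~\ref{thm:Global} then follows by combining that maximality with the local uniqueness of Theorem~\ref{thm:Local}. With this repair your remaining points --- analyticity of $u\mapsto f(u)$ via the Banach-algebra property of $H^4(\mathbb{R}^2)$, the restriction to the invariant subspace $X^4_k$, and the necessity of retaining alternative IV --- all stand and are consistent with the paper's proof and with Remark~\ref{rmk:global}.
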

\begin{proof}
This result follows directly from Theorem~\ref{thm:Global-cont} with $X = H^{4}_{\mathrm{e}}$ and $Y = L^{2}_{\mathrm{e}}$. Upon choosing some $(\varepsilon_*,u_*)\in\mathcal{C}_\delta$ such that the linearisation of \eqref{e:SH-steady} at $(\varepsilon_*,u_*)$ is an isomorphism, one can apply the analytic implicit function theorem and use theory regarding analytic curves to globally extend the local solution branch; see, for example, the proofs of Theorem 9.1.1 in \cite{BuffoniToland2003Global}, Theorem 6.1 in \cite{Chen2017Solitary}, or Theorem 2.4 in \cite{Doak2026Piecewise}. One typically continues the local branch $\mathcal{C}_\delta$ from a starting point $(\varepsilon_*,u_*)\in\mathcal{C}_\delta$ which is distinct from the point $(\varepsilon_0,0)$; this is because either $(\varepsilon_0,0)$ is a bifurcation point and thus the linear operator is not invertible at $\varepsilon=\varepsilon_0$, such as in \cite[Proof of Theorem 9.1.1]{BuffoniToland2003Global}, or $(\varepsilon_0,0)$ lies on the boundary of the set $U$, such as in \cite[Proof of Theorem 6.1]{Chen2017Solitary}. 
 
By Lemma~\ref{lem:Fredholm;L}, $\mathcal{L}_{\phi}:=\mathrm{d}_2\mathscr{F}[\varepsilon,\phi]$ is a Fredholm operator with index zero along the local bifurcation branch $\mathcal{C}_\delta$ and by Lemma~\ref{lem:transversal} an isolated eigenvalue of $\mathcal{L}_{\phi}$ passes through zero at $(\varepsilon, \phi) = (\varepsilon_0, 0)$. It follows that there is a non-empty subset $\mathcal{C}_{*}\subseteq\mathcal{C}_{\delta}$ along which $\mathcal{L}_{\phi}$ is an isomorphism; we choose some point $(\varepsilon_*, u_*)\in\mathcal{C}_{*}$ and apply Theorem~\ref{thm:Global-cont}.

\end{proof}

We make the following remarks regarding Theorem~\ref{thm:Global}.
\begin{rmk}\label{rmk:global} \,

\begin{enumerate}[label = (\roman*)]
    \item Much like in Corollary~\ref{cor:Xmk}, Theorem~\ref{thm:Global} holds when replacing the respective spaces $H^{4}_{\mathrm{e}}$ and $L^2_{\mathrm{e}}$ with $X^4_{k}$ and $X^0_k$, and so the global bifurcation curve is also contained in the rotational subspace $\mathbb{R}\times X^4_k$.
    \item We note that $\mathcal{C}$ having an analytic parametrisation at each point does not imply that $\mathcal{C}$ is smooth. In fact, as discussed in \cite[Remark 9.1.2 (3)]{BuffoniToland2003Global}, the curve $\mathcal{C}$ can possess isolated points where $\mathcal{C}$ is not even continuously differentiable. We demonstrate an example where $\mathcal{C}$ has an isolated corner in Figure~\ref{fig:Bif_2}(iv).
    \item Different versions of the alternatives I-III are presented in the literature (see, for example \cite[Remarks 4.2 and 4.3]{boehmer2025patternformationfilmrupture}); in particular, Constantin, Strauss and V\u{a}rv\u{a}ruc\u{a}~\cite[Remark 8]{Constantin2016} highlight that the alternatives presented by Buffoni and Toland~\cite[Theorem 9.1.1]{BuffoniToland2003Global} are not exhaustive, and present a different formulation in their global bifurcation theorem~\cite[Theorem 6]{Constantin2016}. Here we use the same formulation of the alternatives as used by Chen, Walsh and Wheeler~\cite[Theorem B.1]{Chen2024Global}, which we present in more detail in Theorem~\ref{thm:Global-cont}. Note that alternatives $I$ and $II$ are often indistinguishable from one another and so we combine the two into alternative $(i)$ in Theorem~\ref{thm:Global-cont}.
    \item In many applications one can exclude the case when $\mathcal{C}$ forms a closed loop in part $(c)$, typically via bifurcation theory on cones \cite[Theorem 9.2.2]{BuffoniToland2003Global} and using nodal properties of the bifurcating solution. However, in our problem we numerically observe cases where $\mathcal{C}$ forms a closed loop (such as in Figure~\ref{fig:Bif_2}(iv)) and so we do not expect this to be possible in general.
    \item The loss of compactness alternative can sometimes be excluded through certain properties of solutions; for example, Haziot and Wheeler~\cite{Haziot2023ConstantVorticity} used the monotonicity of their solutions while Kozlov~\cite{kozlov2025nonuniqueness} utilised uniform decay rates. For particular choices of nonlinear function $f$ where one can prove that there are no non-trivial solutions to the homogeneous problem (i.e.\ with $V\equiv0$), such as $f(u)=-u^3$, one might be able to apply a concentration-compactness argument~\cite{Lions1984a,Lions1984b} to eliminate the loss of compactness. We have been unable to exclude a loss of compactness in the current setting outside of axisymmetric patterns---which we present in the following lemma and discuss further in Section~\ref{app:comp} of the Appendix---but it may be possible to further strengthen our global bifurcation result in future studies.
\end{enumerate}
\end{rmk}

In the case of axisymmetric solutions $u\in X^4_0$, we obtain the following pre-compactness result.
\begin{lem}\label{lem:Compact}
    Let $\mathcal{C}\subset X^4_{0}$ be a global bifurcation curve for a localised axisymmetric solution to \eqref{e:SH-steady}. Then, alternative IV in Theorem~\ref{thm:Global} cannot occur.
\end{lem}
We prove a more general version of this result (Lemma~\ref{lem:Compact-0}) in the Appendix, where we rely on an \textit{a priori} uniform decay rate for radial functions. Such a uniform decay rate is not known for non-radial functions, and we would not expect there to be an analogous result for all functions due to translational invariance of the space $H^4(\mathbb{R}^2)$; for further discussion of alternative $IV$, see Section~\ref{app:comp}.

We again provide numerical simulations to support our analysis, which we present in Figure~\ref{fig:Bif_2}. For fixed $R=2.8$, we can vary $\nu$ such that we obtain the three different alternatives in Theorem~\ref{thm:Global} for the primary $\mathbb{D}_1$ bifurcation. As such, we are unable to disregard any of the possible alternatives presented in Theorem~\ref{thm:Global}.

For $\nu\neq0$, the dipole pattern evolves into a combination of a radial spot and the dipole; for $\nu=0.4$ this results in a loop connecting the bifurcation point to a secondary bifurcation at the radial spot, while for $\nu=1.6$ the branch converges towards $\varepsilon=0$. In this case the solution returns to a dipole solution but with weakening localisation, as plotted in Figure~\ref{fig:Bif_2} $(vi)$. We note that each branch plotted in Figure~\ref{fig:Bif_2} is actually two branches, the solution and its half-period rotation, which is why the closed loop appears to be a single branch terminating at a point. We also note that the point $(iv)$ in Figure~\ref{fig:Bif_2} forms an isolated corner on the closed loop, which is permitted within Theorem~\ref{thm:Global} (see Remark~\ref{rmk:global}$(iii)$). 

We briefly comment on the boundary $\varepsilon=0$; this corresponds to the parameter values at which the essential spectrum of the linear operator $\mathcal{L}$ touches the imaginary axis, known as the onset of \textit{essential instability} \cite{Sandstede1999Essential-waves}. Essential instabilities are typical of homogeneous pattern-forming systems, such as the Swift-Hohenberg equation \eqref{e:SH} with $V\equiv0$, where Crandall-Rabinowitz bifurcation theory can no longer be applied due to the loss of Fredholm properties at $\varepsilon=0$. There have been several studies of localised states at the onset of essential instability in one spatial dimension, using techniques from spatial dynamics \cite{Sandstede2001Essential-Bif,Sandstede2000Essential-stability}, but extending such analysis to higher spatial dimensions remains an open problem.

\begin{figure}[t!]
    \centering
    \includegraphics[width=\linewidth]{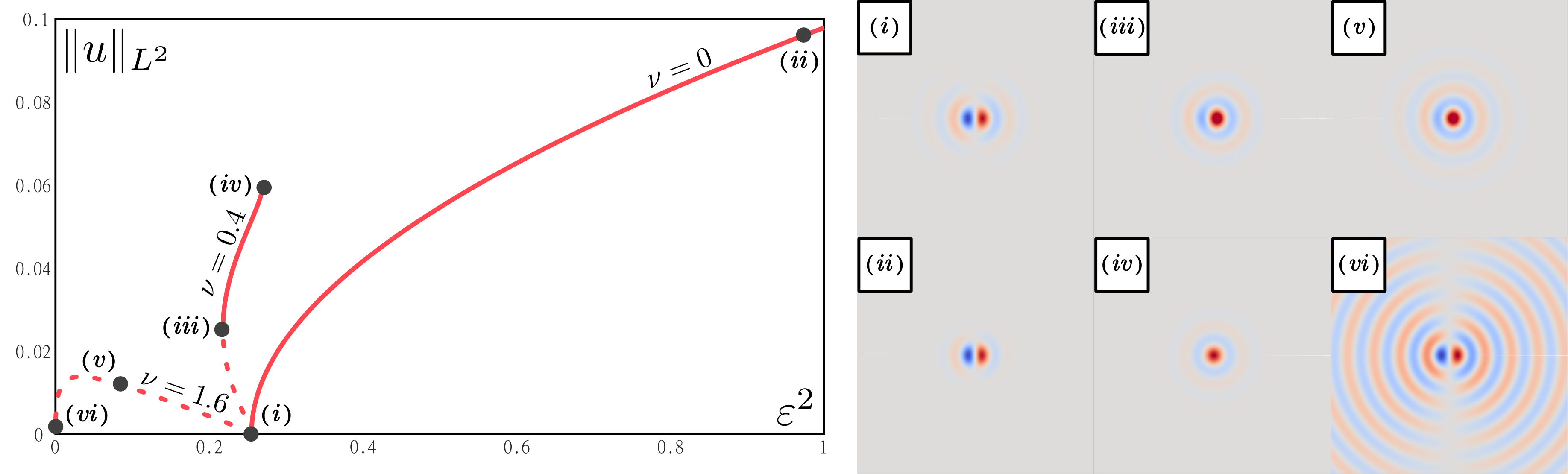}
    \caption{Numerical simulation of the primary branch, consisting of a $\mathbb{D}_1$ dipole pattern, for \eqref{e:SH} with $f(u) = \nu u^2 - u^3$, $R=2.8$, and $\nu = 0, 0.4, 1.6$. Line style and colour of branches indicates the same properties as in Figure~\ref{fig:Bif_1}. The solution $(i\to ii)$ blows up in its norm for $\nu=0$, $(i \to iii\to iv)$ forms a closed loop for $\nu=0.4$, and $(i\to v \to vi)$ approaches the boundary $\varepsilon=0$ for $\nu=1.6$.}
    \label{fig:Bif_2}
\end{figure}

\section{Discussion}\label{s:discuss}

In this paper we have proved the existence of large-amplitude, fully localised 2D patterns in the Swift--Hohenberg equation with an axisymmetric spatial heterogeneity. It is somewhat remarkable that one is able to do this despite there being no equivalent results in the spatially homogeneous case; indeed, the spatial heterogeneity considered in this paper allows us to derive almost explicit characterisations of the shape of the 2D localised pattern and its linear stability near bifurcation. Despite choosing an axisymmetric heterogeneity, non-axisymmetric localised solutions can emerge and even bifurcate as a stable localised pattern. This raises the question of whether such patterns can be realised experimentally. 

This work lays the foundation for further exploration of localised patterns induced by compact spatial heterogeneities. A straightforward extension would be to consider the three-dimensional analogue of \eqref{e:SH}, where the Fredholm properties will remain the same and the point spectrum can again be computed explicitly using spherical harmonics and spherical Bessel functions. Replacing the step function $V$ by a smooth potential would be a more realistic model for spatial heterogeneities in experiments, and one would expect much of the underlying theory of this work to persist in that case. In particular, we highlight the approach presented by Brooks, Derks and Lloyd~\cite[Section 5]{Brooks_2019} in which they prove the persistence of fronts in the sine-Gordon equation from piecewise to smooth heterogeneities; we expect a similar argument to hold in this work using the approach of van Heijster and Sandstede~\cite{Heijster_2011}. 

Different types of spatial heterogeneity can be considered, for instance, non-axisymmetric heterogeneities like those studied in~\cite{Jaramillo2015inhom-stripes,Jaramillo2023inhom-targets} or an axisymmetric heterogeneity translating uniformly in space. The most challenging part in these cases is then to explicitly compute the point spectrum of the linear operator. Another common type of spatial heterogeneity is when $u$ is no longer multiplying $V$, e.g.
\begin{equation*}
    \partial_t u = -(1+\Delta)^2 u - \varepsilon^2 u + f(u) + \delta\,V(\mathbf{x}).
\end{equation*}
This type of heterogeneity is more similar to the form of heterogeneity found in the Rayleigh--B\'{e}nard convection problem, where a non-uniform heat profile is applied to the bottom of the fluid. This problem has the additional difficulty that the trivial state is no longer an equilibrium state, but the small defect case (i.e. with $|\delta|\ll1$) may still be amenable to methods presented here. 

We recall that our approach is inspired by experimental and numerical studies, where a spatial heterogeneity is added to the system in order to induce the emergence of localised patterns. However, there is a key element of this approach which we have not considered here, namely, the persistence of these patterns. In the ferrofluid experiment, for example, the spatial heterogeneity is subsequently removed from the system and the induced localised spikes remain on the fluid surface. This represents a method by which one can find localised patterns in a spatially homogeneous system by adding and removing a compact heterogeneity, which we would like to replicate in our analysis. Mathematically, this is equivalent to finding $(\mathcal{E}(s),v(s))\in \mathcal{C}$ on the global bifurcation curve for \eqref{e:SH}, and then seeking solutions to the homogeneous Swift--Hohenberg equation (i.e. \eqref{e:SH} with $V\equiv0$) of the form $\varepsilon=\mathcal{E}(s)$, $u = v(s) + w$ with $w\in H^4_{\mathrm{e}}$. In particular, one would need to solve
\begin{equation*}
    [(1+\Delta)^2 + \mathcal{E}^2 - f'(v)] w = - 2\,\mathcal{E}^2\,V\,v + N(w;v)
\end{equation*}
for $w\in H^4_{\mathrm{e}}$, where $N(w;v) := f(v+w) - f(v) - f'(v)\,w = \mathcal{O}(|w|^2)$, possibly via a Newton--Kantorovich or implicit function argument. Numerically, we observe that the remainder function $w$ is not small relative to $v$, making it difficult to rule out the trivial solution $w\equiv-v$ (corresponding to $u\equiv0$). It may still be possible to prove the existence of a non-trivial solution---thereby solving the open problem of proving the existence of fully localised 2D patterns---but we leave this as a future avenue to explore. 

We expect the approach presented in this work to be easily extended to other types of PDE systems, and so it would be natural to consider models that are more specific to particular experiments or phenomena. As discussed, the ferrofluid experiment studied by Richter and Barashenkov~\cite{Richter2005ferro-spikes} is a key motivation for this work, and so it would be interesting to try and apply this theory to the ferrofluid problem. We note that there have already been studies of two-dimensional neural field models with spatial heterogeneity~\cite{Rankin2014}, including the case where firing rate is piecewise smooth~\cite{Avitabile2015,Bressloff2011,Owen_2007}, and so this may be a useful application to explore in greater detail.

\subsection*{Data availability statement}
The data that support the findings of this study are openly available at \cite{Hill2025Github}.

\subsection*{Acknowledgments}
The authors are grateful to Bj\"orn de Rijk, Mark D. Groves and Miles Wheeler for their assistance on certain aspects of this manuscript. DJH acknowledges support from the Alexander von Humboldt Foundation, and MT and DJBL acknowledge support from EPRSC grant UKRI070. The authors also thank Tom Bridges, Ryan Goh, Bastian Hilder, J\"org Weber, Gui-Qiang Chen, Melanie Rupflin and Sergey Zelik for helpful discussions and suggestions regarding this work. For the purpose of Open Access, the authors have applied a Creative Commons Attribution (CC BY) public copyright licence to any Author Accepted Manuscript version arising from this submission.

\appendix
\setcounter{equation}{0}
\renewcommand\theequation{\Alph{section}.\arabic{equation}}

\section{Bifurcation theorems}\label{app:bif}

There are several bifurcation theorems that we have used in this work, each of which is a standard tool in bifurcation theory. To balance accessibility and brevity, we present the full statement of each theorem that we use and provide further references for those interested in their proofs.   

In our local bifurcation theorem (Theorem~\ref{thm:Local}) we use an analytic version of the Crandall--Rabinowitz theorem (cf.\ Buffoni and Toland \cite[Theorem 8.3.1]{BuffoniToland2003Global}), which we state here with the notation presented by Groves and Horn \cite[Theorem 3.1]{groves2018ferrofluid}.

\begin{thm}[Crandall--Rabinowitz theorem]\label{thm:CR}
Let $X$ and $Y$ be Banach spaces, $V$ be an open neighbourhood of the origin in $X$ and $\mathscr{F}: \mathbb{R}\times V \to Y$ be an analytic function with $\mathscr{F}(\varepsilon, 0) = 0$ for all $\varepsilon\in\mathbb{R}$. Suppose also that
\begin{enumerate}[label=(\roman*)]
    \item $\mathcal{L}:= \mathrm{d}_2\mathscr{F}[\varepsilon_0,0] : X \to Y$ is a Fredholm operator of index zero,
    \item $\mathrm{ker}(\mathcal{L})= \mathrm{span}\{ v_0 \}$ for some $v_0\in X$, and
    \item the transversality condition $\mathcal{P}(\mathrm{d}_1 \mathrm{d}_2 \mathscr{F}[\varepsilon_0,0](1,v_0)) \neq 0$ holds, where $\mathcal{P} : Y \to Y$ is a projection with $\mathrm{Im}(\mathcal{L})= \mathrm{ker}(\mathcal{P}).$
\end{enumerate}
 The point $(\varepsilon_0,0)$ is a local bifurcation point, that is there exist $\delta>0$, an open neighbourhood $W$ of $(\varepsilon_0,0)$ in $\mathbb{R}\times X$ and analytic functions $v: (-\delta,\delta)\to V$, $\mathcal{E} : (-\delta,\delta)\to \mathbb{R}$ with $\mathcal{E}(0) = \varepsilon_0$, $v(0) = 0$, $v'(0) = v_0$ such that $\mathscr{F}(\mathcal{E}(s), v(s)) = 0$ for every $s\in(-\delta,\delta)$. Furthermore, 
 \begin{equation*}
W\cap N = \{(\mathcal{E}(s), v(s)) : 0 <|s|<\delta\}, 
 \end{equation*}
where 
\begin{equation*}
N= \{(\varepsilon,u) \in\mathbb{R}\times  V \backslash \{0\} : \mathscr{F}(\varepsilon, u) = 0\}.
\end{equation*}
\end{thm}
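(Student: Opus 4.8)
The plan is to perform a Lyapunov--Schmidt reduction and then apply the analytic implicit function theorem to the resulting scalar equation. First I would exploit the Fredholm index-zero structure: since $\mathrm{ker}(\mathcal{L}) = \mathrm{span}\{v_0\}$ is one-dimensional, $\mathrm{Im}(\mathcal{L})$ has codimension one, so I may split $X = \mathrm{span}\{v_0\}\oplus X_1$ and $Y = \mathrm{Im}(\mathcal{L})\oplus Y_1$ with $Y_1 = \mathrm{Im}(\mathcal{P})$ one-dimensional. Writing every $u$ near the origin as $u = s\,v_0 + w$ with $w\in X_1$, the equation $\mathscr{F}(\varepsilon,u)=0$ decomposes into the auxiliary equation $(\mathrm{I}-\mathcal{P})\mathscr{F}(\varepsilon, s\,v_0 + w) = 0$ and the bifurcation equation $\mathcal{P}\,\mathscr{F}(\varepsilon, s\,v_0 + w) = 0$.

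Next I would solve the auxiliary equation for $w$. Its $w$-derivative at $(\varepsilon_0,0,0)$ is $(\mathrm{I}-\mathcal{P})\mathcal{L}|_{X_1}$, which is an isomorphism of $X_1$ onto $\mathrm{Im}(\mathcal{L})$ by construction. The analytic implicit function theorem then produces a unique analytic $w = w(\varepsilon,s)$ near $(\varepsilon_0,0)$ with $w(\varepsilon,0)=0$ (forced by $\mathscr{F}(\varepsilon,0)=0$). Substituting this into the bifurcation equation and identifying $Y_1\cong\mathbb{R}$ via $\mathcal{P}$ yields a scalar analytic function $\Phi(\varepsilon,s) := \mathcal{P}\,\mathscr{F}(\varepsilon, s\,v_0 + w(\varepsilon,s))$ whose nontrivial zeros parametrise all nearby solutions; uniqueness of the branch will follow from this reduction, since $w$ is determined uniquely and nontrivial solutions correspond exactly to the zeros of $\Phi$ off the trivial axis.

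The final and most delicate step is the analysis of $\Phi$. Because $\mathscr{F}(\varepsilon,0)=0$ gives $\Phi(\varepsilon,0)=0$, analyticity lets me factor $\Phi(\varepsilon,s) = s\,\psi(\varepsilon,s)$ with $\psi$ analytic and $\psi(\varepsilon,0) = \partial_s\Phi(\varepsilon,0)$. Differentiating the auxiliary equation in $s$ at $s=0$, and using $\mathcal{L}v_0=0$ with the injectivity of $\mathcal{L}|_{X_1}$, gives $\partial_s w(\varepsilon_0,0)=0$; together with the fact that $\mathcal{L}$ maps into $\ker(\mathcal{P})$ this yields $\psi(\varepsilon_0,0)=0$. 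The crucial computation is then $\partial_\varepsilon\psi(\varepsilon_0,0) = \mathcal{P}\bigl(\mathrm{d}_1\mathrm{d}_2\mathscr{F}[\varepsilon_0,0](1,v_0)\bigr)$, where the term involving $\partial_\varepsilon\partial_s w$ drops out (again since $\mathcal{L}$ lands in $\ker(\mathcal{P})$) and the term involving $\partial_s w(\varepsilon_0,0)$ drops out by the vanishing just established, leaving precisely the transversality quantity of hypothesis (iii). Since this is nonzero, the analytic implicit function theorem furnishes a unique analytic $\mathcal{E}(s)$ with $\mathcal{E}(0)=\varepsilon_0$ solving $\psi(\mathcal{E}(s),s)=0$; setting $v(s) := s\,v_0 + w(\mathcal{E}(s),s)$ gives the claimed branch with $v(0)=0$ and $v'(0)=v_0$.

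I expect the main obstacle to be the bookkeeping in this last step: justifying the factorisation $\Phi = s\,\psi$ in the analytic category (via an integral representation of $\psi$) and verifying that $\partial_\varepsilon\psi(\varepsilon_0,0)$ collapses exactly to the transversality expression. This is the point where all three hypotheses must interlock, and where one must be careful that the two spurious contributions genuinely vanish rather than merely being lower order.
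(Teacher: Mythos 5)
The paper does not actually prove this theorem itself: it is quoted from the literature, with the analytic version attributed to Buffoni and Toland (Theorem 8.3.1 of their book) and the classical version to Crandall and Rabinowitz. Your proposal---Lyapunov--Schmidt reduction using the Fredholm splitting, solving the auxiliary equation by the analytic implicit function theorem, factoring the reduced scalar equation as $\Phi(\varepsilon,s)=s\,\psi(\varepsilon,s)$, and closing with the implicit function theorem on $\psi$ via the transversality condition---is precisely that standard proof, and your key computations (the isomorphism $\mathcal{L}|_{X_1}:X_1\to\mathrm{Im}(\mathcal{L})$, the vanishing $\partial_s w(\varepsilon_0,0)=0$, and $\partial_\varepsilon\psi(\varepsilon_0,0)=\mathcal{P}\bigl(\mathrm{d}_1\mathrm{d}_2\mathscr{F}[\varepsilon_0,0](1,v_0)\bigr)\neq 0$) are all correct, including the uniqueness statement obtained from the reduction.
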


A proof of the standard Crandall--Rabinowitz theorem can be found in the original paper by Crandall and Rabinowitz~\cite[Theorem 1]{Crandall1971Bifurcation}, while the analytic version is proved by Buffoni and Toland~\cite[Theorem 8.3.1]{BuffoniToland2003Global}. Using the analytic properties of the local bifurcation curve, we can also apply the following supplementary theorem presented by Groves and Horn \cite[Theorem 4.1]{groves2018ferrofluid} to characterise the local bifurcation curve $\mathcal{C}_{\delta}$.
\begin{thm}\label{thm:CR-pitch}
Suppose that the hypotheses of Theorem \ref{thm:CR} hold. In the notation of that theorem, let $\mathcal{Q}:X\to X$ be a projection with $\mathrm{Im}(\mathcal{Q}) = \mathrm{ker}(\mathcal{L})$ and the Taylor series of the functions $v:(-\delta, \delta)\to V$, $\mathcal{E}:(-\delta,\delta)\to\mathbb{R}$ be given by 
\begin{equation*}
v(s) = v_0\,s + \sum_{n=1}^{\infty} v_n\,s^{n+1}, \qquad \mathcal{E}(s) = \varepsilon_0 + \sum_{n=1}^{\infty} \varepsilon_n\,s^n,
\end{equation*}
where $v_1, v_2, \dots\in\mathrm{ker}(\mathcal{Q})$ and $\varepsilon_1,\varepsilon_2,\dots\in\mathbb{R}$.
\begin{enumerate}[label=(\roman*)]
\item The coefficient $\varepsilon_1$ satisfies the equation
\begin{equation*}
\mathcal{P}\left(\frac{1}{2!}\,\mathrm{d}_{2}^2\mathscr{F}[\varepsilon_0,0](v_0,v_0)\right) + \varepsilon_1\mathcal{P}\left(\mathrm{d}_1\mathrm{d}_2\mathscr{F}[\varepsilon_0,0](1,v_0)\right) = 0,
\end{equation*}
and the bifurcation is transcritical if $\varepsilon_1$ is non-zero.
\item Suppose that $\varepsilon_1=0$. The coefficient $\varepsilon_2$ satisfies the equation
\begin{equation*}
\mathcal{P}\left(\mathrm{d}_{2}^2\mathscr{F}[\varepsilon_0,0](v_0,v_1) + \frac{1}{3!}\,\mathrm{d}_{2}^3\mathscr{F}[\varepsilon_0,0](v_0,v_0,v_0)\right) + \varepsilon_2\mathcal{P}\left(\mathrm{d}_1\mathrm{d}_2\mathscr{F}[\varepsilon_0,0](1,v_0)\right) = 0,
\end{equation*}
where $v_1\in\mathrm{ker}(\mathcal{Q})$ solves the equation 
\begin{equation*}
\mathrm{d}_{2}\mathscr{F}[\varepsilon_0,0](v_1) = -\frac{1}{2!}\,\mathrm{d}_{2}^2\mathscr{F}[\varepsilon_0,0](v_0,v_0)
\end{equation*}
The bifurcation is supercritical for $\varepsilon_2>0$ and subcritical for $\varepsilon_2<0$.
\end{enumerate}
\end{thm}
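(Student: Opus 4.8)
The plan is to use the identity $\mathscr{F}(\mathcal{E}(s), v(s)) \equiv 0$ supplied by Theorem~\ref{thm:CR}, together with the analyticity (hence convergence) of the series for $\mathcal{E}$ and $v$, and simply equate the coefficient of each power of $s$ to zero. Two observations do almost all the work. First, $\mathscr{F}(\varepsilon, 0) = 0$ for all $\varepsilon$, so every pure $\varepsilon$-derivative $\mathrm{d}_1^{j}\mathscr{F}[\varepsilon_0,0]$ vanishes and may be discarded from the outset. Second, $v_0\in\mathrm{ker}(\mathcal{L})$ with $\mathcal{L}=\mathrm{d}_2\mathscr{F}[\varepsilon_0,0]$. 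Together these force the $\mathcal{O}(1)$ and $\mathcal{O}(s)$ coefficients to vanish automatically, so the first genuine information appears at $\mathcal{O}(s^2)$.

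First I would collect the $\mathcal{O}(s^2)$ terms. Expanding $\mathscr{F}$ about $(\varepsilon_0,0)$ via the multilinear chain rule (Fa\`a di Bruno) and dropping every term that carries a pure $\varepsilon$-derivative, the coefficient of $s^2$ reduces to
\[
\mathcal{L}v_1 + \tfrac{1}{2!}\,\mathrm{d}_2^2\mathscr{F}[\varepsilon_0,0](v_0,v_0) + \varepsilon_1\,\mathrm{d}_1\mathrm{d}_2\mathscr{F}[\varepsilon_0,0](1,v_0) = 0.
\]
Applying the projection $\mathcal{P}$ (with $\mathrm{ker}(\mathcal{P})=\mathrm{Im}(\mathcal{L})$) annihilates $\mathcal{L}v_1$ and yields exactly the stated equation for $\varepsilon_1$. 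The transversality condition $(iii)$ of Theorem~\ref{thm:CR} guarantees $\mathcal{P}(\mathrm{d}_1\mathrm{d}_2\mathscr{F}[\varepsilon_0,0](1,v_0))\neq 0$, so $\varepsilon_1$ is uniquely determined; when it is nonzero the branch crosses $\varepsilon=\varepsilon_0$ at nonzero slope and the bifurcation is \emph{transcritical}.

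For part $(ii)$ I would set $\varepsilon_1=0$, whereupon the $\mathcal{O}(s^2)$ identity collapses to $\mathcal{L}v_1 = -\tfrac{1}{2!}\mathrm{d}_2^2\mathscr{F}[\varepsilon_0,0](v_0,v_0)$. This is solvable precisely because $\mathcal{P}$ applied to its right-hand side vanishes (part $(i)$ with $\varepsilon_1=0$) and $\mathcal{L}$ is Fredholm of index zero with $\mathrm{Im}(\mathcal{L})=\mathrm{ker}(\mathcal{P})$; the normalisation $v_1\in\mathrm{ker}(\mathcal{Q})$ then pins down the unique solution. Advancing to $\mathcal{O}(s^3)$, again discarding $\varepsilon$-derivative terms and using $\varepsilon_1=0$, the coefficient becomes
\[
\mathcal{L}v_2 + \mathrm{d}_2^2\mathscr{F}[\varepsilon_0,0](v_0,v_1) + \tfrac{1}{3!}\,\mathrm{d}_2^3\mathscr{F}[\varepsilon_0,0](v_0,v_0,v_0) + \varepsilon_2\,\mathrm{d}_1\mathrm{d}_2\mathscr{F}[\varepsilon_0,0](1,v_0) = 0,
\]
and applying $\mathcal{P}$ gives the stated equation for $\varepsilon_2$, once more uniquely solvable by transversality; its sign distinguishes \emph{supercritical} from \emph{subcritical} pitchforks.

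Since Theorem~\ref{thm:CR} already furnishes convergent analytic expansions on which every manipulation is justified, no compactness, embedding, or regularity issue arises here, and I expect the main obstacle to be bookkeeping rather than analysis. One must track the correct combinatorial weights in the multilinear expansion---in particular the factor $2$ in front of the mixed term $\mathrm{d}_2^2\mathscr{F}(v_0,v_1)$ at third order, which arises because the monomials $v_0\,s$ and $v_1\,s^2$ can occupy either slot of the symmetric bilinear form---and one must be careful that the normalisation $v_n\in\mathrm{ker}(\mathcal{Q})$ is exactly what removes the gauge freedom of reparametrising $s$, thereby rendering each $v_n$ and $\varepsilon_n$ well-defined.
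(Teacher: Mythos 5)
Your proposal is correct and follows essentially the same route the paper takes: the paper defers this result to Groves and Horn \cite[Theorem 4.1]{groves2018ferrofluid}, noting only that it follows by Taylor-expanding the analytic branch $(\mathcal{E}(s),v(s))$ in $s$ and invoking standard normal-form characterisations, which is precisely your expansion-and-projection argument. Your bookkeeping (the vanishing of pure $\varepsilon$-derivatives, the factor of $2$ in the mixed term $\mathrm{d}_2^2\mathscr{F}[\varepsilon_0,0](v_0,v_1)$, the solvability of the $v_1$ equation via Fredholmness and the normalisation $v_1\in\mathrm{ker}(\mathcal{Q})$) is accurate.
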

This theorem uses the analytic properties of bifurcating solutions in order to take Taylor series expansions for $|s|\ll1$ and characterise the bifurcation structure by standard normal forms in bifurcation theory (see, for example Verhulst~\cite[Section 13.5]{verhulst1996nonlinear}).

To prove our global bifurcation result (Theorem~\ref{thm:Global}) we use the analytic global implicit function theorem given by Chen, Walsh and Wheeler~\cite[Theorem B.1]{Chen2024Global}, which extends the global bifurcation theorem of Buffoni and Toland~\cite[Theorem 9.1.1]{BuffoniToland2003Global} to cases where compactness and Fredholm properties of solutions are no longer assumed.
\begin{thm}[Global continuation]\label{thm:Global-cont}
Let $X$ and $Y$ be Banach spaces, $U\subseteq \mathbb{R}\times X$ an open set containing a point $(\varepsilon_*, u_*)$. Suppose that $\mathscr{F}: U\to Y$ is real analytic and satisfies
\begin{equation*}
    \mathscr{F}(\varepsilon_*, u_*)=0, \qquad\qquad \mathrm{d}_2\mathscr{F}[\varepsilon_*, u_*]\;\text{is an isomorphism $X\to Y$.}
\end{equation*}
Then there exists a curve $\mathcal{C}$ that admits the global $C^0$ parametrisation
\begin{equation*}
    \mathcal{C}:=\left\{ (\mathcal{E}(s), v(s))\;:\;\; s\in\mathbb{R}\right\}\subseteq \mathscr{F}^{-1}(0)\cap U,
\end{equation*}
and satisfies the following.
 \begin{enumerate}[label=(\alph*)]
     \item At each $s\in\mathbb{R}$, the linear operator $\mathrm{d}_{2}\mathscr{F}[\mathcal{E}(s), v(s)]:X\to Y$ is Fredholm with index zero.
     \item One of the following alternatives holds as $s\to \infty$ and $s\to-\infty$.
     \begin{enumerate}[label=(\roman*)]
         \item (Blow up) The quantity
         \begin{equation*}
             N(s) := \| v(s)\|_{X} + |\mathcal{E}(s)| + \frac{1}{\mathrm{dist}((\mathcal{E}(s),v(s)), \partial U)} \to \infty
         \end{equation*}
         \item (Loss of compactness) There exists a sequence $\{s_n\}$ with $s_n\to\pm\infty$ and $\sup_n N(s_n)<\infty$, but $(\mathcal{E}(s_n), v(s_n))$ has no convergent subsequence in $\mathbb{R}\times X$.
         \item (Loss of Fredholmness) There exists a sequence $\{s_n\}$ with $s_n\to\pm\infty$, $\sup_n N(s_n)<\infty$, and $(\mathcal{E}(s_n), v(s_n))\to (\mathcal{E}_0,v_0)\in U$ in $\mathbb{R}\times X$, however $\mathrm{d}_2\mathscr{F}[\mathcal{E}_0, v_0]$ is not Fredholm with index zero.
         \item (Closed loop) There exists $T>0$ such that $(\mathcal{E}(s+T), v(s+T)) = (\mathcal{E}(s), v(s))$ for all $s > 0$.
     \end{enumerate}
     \item Near each point $(\mathcal{E}(s_0), v(s_0))\in\mathcal{C}$, we can locally reparametrise $\mathcal{C}$ so that $s\mapsto (\mathcal{E}(s),v(s))$ is real analytic.
     \item The curve $\mathcal{C}$ is maximal in the sense that, if $\mathcal{R}\subseteq \mathscr{F}^{-1}(0)\cap U$ is a locally real-analytic curve containing $(\varepsilon_*, u_*)$ and along which $\mathrm{d}_2\mathscr{F}$ is Fredholm of index zero, then $\mathcal{R}\subseteq\mathcal{C}$.
 \end{enumerate}
\end{thm}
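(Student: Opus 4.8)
The plan is to establish this abstract analytic global continuation theorem by adapting the Dancer--Buffoni--Toland analytic continuation machinery in the form used by Chen, Walsh and Wheeler, in which local compactness of the solution set is demoted from a standing hypothesis to one of the admissible alternatives. The overall strategy has four stages: first, produce a local real-analytic solution arc through $(\varepsilon_*,u_*)$; second, show that $\mathscr{F}^{-1}(0)$ retains a one-dimensional real-analytic structure at every solution where $\mathrm{d}_2\mathscr{F}$ is Fredholm of index zero, so that the arc may be continued through both regular and singular points; third, take a maximal such continuation in each direction and reparametrize globally by a monotone, arclength-type parameter $s\in\mathbb{R}$; and fourth, analyse the obstructions to further continuation, which will yield exactly the four stated alternatives.

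For the first stage, since $\mathrm{d}_2\mathscr{F}[\varepsilon_*,u_*]$ is an isomorphism, the analytic implicit function theorem furnishes a unique local real-analytic curve $s\mapsto(\mathcal{E}(s),v(s))$ of solutions near $(\varepsilon_*,u_*)$; this anchors the continuation and gives the local statement in (c). For the second stage, at a general solution $(\varepsilon_0,u_0)$ where $\mathcal{L}_0:=\mathrm{d}_2\mathscr{F}[\varepsilon_0,u_0]$ is Fredholm of index zero I would perform a Lyapunov--Schmidt reduction: writing $X=\ker\mathcal{L}_0\oplus X_1$ and $Y=\mathrm{Im}\,\mathcal{L}_0\oplus Y_1$ with $\dim\ker\mathcal{L}_0=\dim Y_1<\infty$, the equation $\mathscr{F}=0$ is locally equivalent to a real-analytic finite-dimensional equation. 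Since $\mathscr{F}$ is real-analytic, the reduced equation is real-analytic, so its zero set is a real-analytic variety; invoking the structure theory of such varieties (Weierstrass preparation together with Puiseux expansions in the distinguished parameter) shows that, along the continuation, the zero set is a locally finite union of real-analytic arcs. At the simplest singular points, where $\dim\ker\mathcal{L}_0=1$ and the relevant transversality holds, this local picture is exactly the crossing/branch structure delivered by the Crandall--Rabinowitz theorem (Theorem~\ref{thm:CR}); the general analytic-variety argument is needed only for more degenerate points. This is the decisive step that lets the curve traverse singular points of $\mathrm{d}_2\mathscr{F}$ without leaving $\mathscr{F}^{-1}(0)$, and it supplies the local analytic reparametrization claimed in (c).

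For the third stage, I would consider the collection of locally-analytic solution arcs emanating from $(\varepsilon_*,u_*)$ along which $\mathrm{d}_2\mathscr{F}$ remains Fredholm of index zero, order them by extension, and extract a maximal one by exhaustion, following a single branch at each singular crossing so as to keep the parameter monotone. Running this in both directions and concatenating yields the continuous global map $s\mapsto(\mathcal{E}(s),v(s))$ on $\mathbb{R}$ with $\mathrm{d}_2\mathscr{F}$ Fredholm of index zero along it, which is (a); maximality gives (d), since any competing locally-analytic, Fredholm-index-zero curve through $(\varepsilon_*,u_*)$ must agree with $\mathcal{C}$ on the overlapping branch by the local uniqueness from the second stage, and hence be contained in $\mathcal{C}$.

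For the final stage I would argue by contradiction at each end, say $s\to+\infty$. Suppose none of blow up, loss of compactness, loss of Fredholmness, or closed loop occurs. Then there is a sequence $s_n\to+\infty$ with $\sup_n N(s_n)<\infty$; failure of the loss-of-compactness alternative produces a convergent subsequence $(\mathcal{E}(s_n),v(s_n))\to(\mathcal{E}_0,v_0)\in U$; failure of loss of Fredholmness makes $\mathrm{d}_2\mathscr{F}[\mathcal{E}_0,v_0]$ Fredholm of index zero; the second stage then continues the curve analytically through the limit, and the no-closed-loop assumption forces this to be a genuine prolongation, contradicting maximality. The \emph{main obstacle}, and the reason the classical Buffoni--Toland argument does not transfer verbatim, is precisely the control of accumulation \textit{without} a priori compactness: on an unbounded spatial domain the Rellich--Kondrachov compactness used on bounded domains is unavailable, so the ``convergent subsequence'' step can genuinely fail and must be retained as the loss-of-compactness alternative. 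A secondary technical difficulty is the route-following through singular points: one must use the analytic-variety structure to select a single continuing branch and verify that the global parametrization remains continuous and non-self-intersecting, except possibly as a closed loop.
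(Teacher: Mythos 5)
Your proposal is correct in approach and coincides with the proof that this paper relies on: the paper does not prove Theorem~\ref{thm:Global-cont} itself, but quotes it verbatim from Chen, Walsh and Wheeler~\cite[Theorem B.1]{Chen2024Global} (with Buffoni and Toland~\cite[Theorem 9.1.1]{BuffoniToland2003Global} for details), and their argument is exactly the Dancer--Buffoni--Toland analytic continuation machinery you sketch---analytic implicit function theorem for the initial arc, Lyapunov--Schmidt reduction to a finite-dimensional real-analytic variety with Weierstrass/Puiseux structure theory to pass through singular points, a maximal extension argument, and the end-of-curve dichotomy in which compactness and Fredholmness are demoted from standing hypotheses to failure alternatives. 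Since the paper's own ``proof'' is this citation, your reconstruction takes essentially the same route as the proof the paper points to, and I see no step in your outline that contradicts it.
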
 
For a proof of this theorem, we refer the reader to Chen, Walsh and Wheeler~\cite[Theorem B.1]{Chen2024Global}, along with Buffoni and Toland~\cite[Theorem 9.1.1]{BuffoniToland2003Global} for additional details.

\section{Additional proofs}\label{app:spec}

We present several proofs regarding the spectral properties of the linear operator $\mathcal{L}_{\phi}$ given in \eqref{def:L-phi}.

We begin with a proof that multiplication by an element in $H^m=H^m(\mathbb{R}^2)$ is a compact operator from $H^m$ to $L^2=L^2(\mathbb{R}^2)$ with $m>1$; while this result is often used in linear stability analysis (cf.\ Kapitula and Promislow \cite[Theorem 3.1.11]{kapitula2013spectral}), we present a proof here for completeness. 
\begin{lem}\label{lem:compact}
Suppose that $f\in H^m$ with $m>1$. The mapping $u \mapsto f\,u$ is a compact operator $H^m \to L^2$.
\end{lem}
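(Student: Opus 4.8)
The plan is to exploit the Sobolev embedding available in this supercritical regime together with the Rellich--Kondrachov theorem on bounded domains, combining them through a density-plus-operator-norm-limit argument. Since $d=2$ and $m>1=d/2$, the embedding $H^m(\mathbb{R}^2)\hookrightarrow L^\infty(\mathbb{R}^2)$ is continuous, so in particular $f\in L^\infty$ and the multiplication operator $T_f:u\mapsto f\,u$ is bounded from $H^m$ to $L^2$, with the estimate $\|T_f\|_{H^m\to L^2}\leq \|f\|_{L^\infty}$. This uniform $L^\infty$-control of the operator norm is the key mechanism that will make the approximation argument close.

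First I would treat the case where $f$ is bounded with compact support, say $\mathrm{supp}(f)\subseteq B_\rho$. Here $T_f$ factors as the bounded restriction $H^m(\mathbb{R}^2)\to H^m(B_\rho)$, followed by the \emph{compact} embedding $H^m(B_\rho)\hookrightarrow L^2(B_\rho)$ supplied by Rellich--Kondrachov on the bounded domain $B_\rho$, followed by the bounded multiplication $L^2(B_\rho)\to L^2(B_\rho)$, $v\mapsto f\,v$, and finally extension by zero back into $L^2(\mathbb{R}^2)$. Since the composition of a bounded operator with a compact one is compact, $T_f$ is compact in this case.

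For general $f\in H^m$, I would use the density of $C_c^\infty(\mathbb{R}^2)$ in $H^m(\mathbb{R}^2)$ to pick $f_n\in C_c^\infty(\mathbb{R}^2)$ with $f_n\to f$ in $H^m$. Each $T_{f_n}$ is compact by the previous step, and since
\begin{equation*}
\|T_f-T_{f_n}\|_{H^m\to L^2}=\|T_{f-f_n}\|_{H^m\to L^2}\leq \|f-f_n\|_{L^\infty}\leq C\,\|f-f_n\|_{H^m}\to 0,
\end{equation*}
the operator $T_f$ is an operator-norm limit of compact operators. As the compact operators form a closed subspace of the bounded operators $H^m\to L^2$, it follows that $T_f$ is compact, completing the proof.

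The main obstacle is really just the care required in the compactly supported step: one must check that the factorisation through $H^m(B_\rho)\hookrightarrow L^2(B_\rho)$ is legitimate, i.e.\ that the restriction is bounded (because the local $H^m(B_\rho)$ norm is dominated by the global $H^m(\mathbb{R}^2)$ norm) and that pointwise multiplication by the bounded function $f$ is continuous on $L^2(B_\rho)$. Everything else is routine once the supercritical embedding $H^m\hookrightarrow L^\infty$ is in hand, since that embedding is precisely what converts $H^m$-convergence of the $f_n$ into operator-norm convergence of the $T_{f_n}$.
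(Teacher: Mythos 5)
Your proof is correct, but it takes a genuinely different route from the paper. The paper argues sequentially: given a bounded sequence $\{u_n\}$ in $H^m$, it extracts a weakly convergent subsequence of $\{f\,u_n\}$, obtains strong $L^2$ convergence on each ball $B_C$ from Rellich--Kondrachov, and controls the tail $|\mathbf{x}|>C$ by combining the uniform bound $\sup_n\|u_n\|_{\infty}<\infty$ (Sobolev embedding applied to the \emph{arguments} $u_n$) with the smallness of $\|f\|_{L^2(\mathbb{R}^2\backslash B_C)}$ for large $C$ (i.e.\ the $L^2$-decay of the multiplier). You instead argue at the level of operators: compactness for compactly supported multipliers via factorisation through the compact embedding $H^m(B_\rho)\hookrightarrow L^2(B_\rho)$, then density of $C_c^\infty$ in $H^m$ plus the estimate $\|T_{f-f_n}\|_{H^m\to L^2}\leq\|f-f_n\|_{L^\infty}\lesssim\|f-f_n\|_{H^m}$ (Sobolev embedding applied to the \emph{multiplier}), and finally closedness of the compact operators under operator-norm limits. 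Your version is more modular and avoids the weak-compactness and subsequence bookkeeping entirely; it also makes visible that what is really needed is that $f$ be approximable in $L^\infty$ by compactly supported bounded functions, a conceptually weaker requirement than $f\in H^m$. The paper's version is more self-contained (it invokes no abstract fact about limits of compact operators) and exploits a different feature of $f$, namely its $L^2$ tail decay rather than its $L^\infty$ approximability. Both are complete; your factorisation step is legitimate exactly for the reason you note, that $\mathrm{supp}(f)\subseteq B_\rho$ ensures extension by zero recovers $f\,u$ on all of $\mathbb{R}^2$.
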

\begin{proof}
Let $\{u_n\}$ be bounded in $H^m$, so that $\{f\, u_n\}$ is weakly convergent in $H^m$. Denote its weak limit by $g$ and note that $f\,u_n \to g$ (up to subsequences) in $L^2(B_{C})$ for any $C>0$, where $B_C$ denotes the unit ball of radius $C$ about the origin. Then, we obtain
\begin{equation*}
\begin{split}
\| f\,u_n - g \|_{L^2}^2 ={}& \int_{|\mathbf{x}|\leq C} |f(\mathbf{x})\,u_n(\mathbf{x}) - g(\mathbf{x})|^2\,\mathrm{d}\mathbf{x} + \int_{|\mathbf{x}|> C} |f(\mathbf{x})\,u_n(\mathbf{x}) - g(\mathbf{x})|^2\,\mathrm{d}\mathbf{x},\\
\leq{}& \| f\,u_n - g \|_{L^2(B_C)}^2 + \| u_n\|_{\infty}^2\,\| f\|_{L^2(\mathbb{R}^2\backslash B_C)}^2 + \| g \|_{L^2(\mathbb{R}^2\backslash B_C)}^2,\\
\leq{}& \varepsilon^2,\\
\end{split}
\end{equation*}
for sufficiently large $n$. First choose $C$ sufficiently large such that the second and third terms are each less than $\frac{\varepsilon^2}{4}$, using the fact that $\| u_n\|_{\infty}$ is uniformly bounded since $H^m$ is continuously embedded in $L^\infty$. It follows that there exists some constant $N = N(C)>0$ such that the first term is less than $\frac{\varepsilon^2}{2}$ for all $n>N$, since $f\,u_n \to g$ in $L^2(B_{C})$.
\end{proof}


Additionally, we     prove Lemma~\ref{lem:eig}, characterising the eigenvalues and eigenfunctions of the linear eigenvalue problem \eqref{e:Stab}. For the sake of readability, we first restate the lemma as originally presented in Section~\ref{s:spec-point}.

\begin{replem}{lem:eig}
    Fix $R>0$ and $\varepsilon>0$. The eigenvalue problem \eqref{e:Stab} possesses a solution $u = u_k \in H^{4}_{\mathrm{e}}$ given by $u_k(r\cos\theta,r\sin\theta) = v_k(r)\,\cos(k\theta)$ with
\begin{equation*}
v_k(r) = \begin{cases}
\displaystyle \cos(\phi + \psi)\frac{J_{k}(\alpha_{+}\,r)}{J_{k}(\alpha_{+}\,R)} + \cos(\phi - \psi)\frac{J_{k}(\alpha_{-}\,r)}{J_{k}(\alpha_{-}\,R)}, & r<R,\\
\displaystyle \sqrt{2}\,\sqrt{1 - \tilde{\lambda}}\, \mathrm{Re}\left(\mathrm{e}^{\mathrm{i}\phi}\, \frac{H^{(1)}_{k}( \beta r)}{H^{(1)}_{k}(\beta R)}\right), & r>R,\\
\end{cases}  
\end{equation*}
where $\alpha_{\pm} := \sqrt{1 \pm \varepsilon\,\sqrt{1 - \tilde{\lambda}}}$, $\beta := \sqrt{1 + \mathrm{i}\,\varepsilon\,\sqrt{1 + \tilde{\lambda}}}$; $J_k, H_k^{(1)}$ are the $k$'th order Bessel functions of the first and third kind; $\psi$ is given by $\psi = \frac{1}{2}\sin^{-1}(\tilde{\lambda}) + \frac{\pi}{4}$; $\phi$ is given by
    \begin{equation*}
\mathrm{e}^{2\mathrm{i}\phi } = \mathrm{i}\,\mathrm{e}^{-\mathrm{i}\sin^{-1}(\tilde{\lambda})}\frac{W[J_{k}(\alpha_{+}r), \overline{H^{(1)}_{k}( \beta r)}](R)\,H^{(1)}_{k}( \beta R)}{W[J_{k}(\alpha_{+}r), H^{(1)}_{k}( \beta r)](R)\,\overline{H^{(1)}_{k}( \beta R)}};
    \end{equation*}
    $W[\cdot,\cdot]$ denotes the weighted Wronskian function given by
\begin{equation*}
    W[u,v](r) := r\left( u(r)\,v'(r) - u'(r)\,v(r)\right);
\end{equation*}
    and $\tilde{\lambda} = \tilde{\lambda}(R, \varepsilon)$ satisfies
\begin{equation*}
F_k(R,\varepsilon,\tilde{\lambda}) = \mathrm{Re}\left[\mathrm{e}^{\mathrm{i}\sin^{-1}(\tilde{\lambda})}\, W[J_{k}(\alpha_{+}r), H^{(1)}_{k}( \beta r)](R)\,W[J_{k}(\alpha_{-}r), \overline{H^{(1)}_{k}( \beta r)}](R)\right]=0 
\end{equation*}
with $\tilde{\lambda}\notin(-\infty,-1]$.
\end{replem}
\begin{proof}[Proof of Lemma~\ref{lem:eig}]
We first decompose $u$ into an angular Fourier expansion
\begin{equation*}
u(r\cos\theta,r\sin\theta) = u_0(r) + \sum_{k=1}^{\infty} u_{k}(r)\,\cos(k\theta),
\end{equation*}
with $u_k\in H^4_{(k)}((0,\infty))$ as defined by Groves and Hill~\cite{groves2024function} for each $k\in\mathbb{N}_0$. The linear problem \eqref{e:Stab} then becomes
\begin{equation*}
\tilde{\lambda}\,\varepsilon^2 u_k = -(1 + \mathcal{D}_{1-k}\mathcal{D}_{k})^2 u_{k} - \varepsilon^2 u_k + 2\,\varepsilon^2\,V(r)\,u_k, \qquad\qquad \forall\; k\in\mathbb{N}_{0},
\end{equation*}
where we use the Bessel differential operators $\mathcal{D}_{k} := \partial_{r} + \frac{k}{r}$ also introduced in \cite{groves2024function}. For notational simplicity, we also define the $k$-index Laplace operator as $\Delta_k := \mathcal{D}_{1-k}\mathcal{D}_{k} = \partial_r^2 + \frac{1}{r}\partial_r - (\frac{k}{r})^2$. We seek solutions to \eqref{e:Stab} for an arbitrary $k\in\mathbb{N}_{0}$; to do this, we construct real bounded solutions on the disjoint regions $r<R$ and $r>R$ and match these functions at the discontinuity point $r=R$.

In both regions $r<R$ and $r>R$, the linear problem \eqref{e:Stab} can be written as
\begin{equation*}
0 = -(1 + \Delta_{k})^2 u_{k} + \mu u_{k}
\end{equation*}
for some $\mu\in\mathbb{R}$, which has a general solution of the form
\begin{equation*}
u_k = \tilde{A}_1 J_{k}(\nu_{+}\,r) + \tilde{A}_2 J_{k}(\nu_{-}\,r) + \tilde{A}_3 Y_{k}(\nu_{+}\,r) + \tilde{A}_4 Y_{k}(\nu_{-}\,r), \qquad\qquad \nu_\pm:= \sqrt{1 \pm\sqrt{\mu}},
\end{equation*}
where $J_k, Y_k$ are $k$'th order Bessel functions of the first and second kind. Hence, any real bounded solution $u_k$ of \eqref{e:Stab} takes the form
\begin{equation*}
u_k = \begin{cases}
A_1 J_{k}(\alpha_{+}\,r) + A_2 J_{k}(\alpha_{-}\,r), & r<R,\\
A_3 H^{(1)}_{k}( \beta r) +\overline{A_3}\, \overline{H^{(1)}_{k}( \beta r)}, & r>R,\\
\end{cases}  
\end{equation*}
with $A_1, A_2\in\mathbb{R}$ and $A_3\in\mathbb{C}$, where $\alpha_{\pm}=\sqrt{1\pm\varepsilon\sqrt{1-\tilde{\lambda}}}$, $\beta =\sqrt{1+\mathrm{i}\varepsilon\sqrt{1+\tilde{\lambda}}}$, and $H^{(1)}_{k}$ is the $k$'th order Hankel function, given by
\begin{equation*}
H^{(1)}_{k}(x) = J_{k}(x) + \mathrm{i}Y_{k}(x).
\end{equation*}
We decompose $\tilde{\lambda}$ and $\beta$ into their real and imaginary parts, which we denote as $\lambda_R, \lambda_I$ and $\beta_R, \beta_I$, respectively. 
The Hankel function $H_k^{(1)}$ then satisfies the following asymptotic behaviour
\begin{equation*}
    \begin{split}
        H^{(1)}_{k}(\beta\,r) ={}& \sqrt{\frac{2}{\pi\beta\,r}}\,\mathrm{e}^{\mathrm{i}(\beta\,r - \frac{(2k+1)\pi}{4})}\left(1 + \mathcal{O}(r^{-1})\right),\\
        ={}& \sqrt{\frac{2}{\pi\beta\,r}}\,\mathrm{e}^{-\beta_{I}\,r}\,\mathrm{e}^{\mathrm{i}\left(\beta_{R}\,r - \frac{(2k+1)\pi}{4}\right)}\left(1 + \mathcal{O}(r^{-1})\right),\\
    \end{split}
\end{equation*}
for $r\gg1$ and, using the formula $\sqrt{z} = \sqrt{|z|}\,\frac{z + |z|}{|z + |z||}$ for any $z\in\mathbb{C}$, we note that
\begin{equation*}
\begin{split}
    \beta_I ={}& \frac{\varepsilon\,\sqrt{(1 + \lambda_R)\, + \sqrt{(1 + \lambda_R)^2 + \lambda_I^2}}}{2\,\sqrt{1 -\frac{\varepsilon\,\lambda_I}{\sqrt{2}\,\sqrt{(1 + \lambda_R)\, + \sqrt{(1 + \lambda_R)^2 + \lambda_I^2}}} + \sqrt{1 + \varepsilon^2\,\sqrt{(1 + \lambda_R)^2 + \lambda_I^2} - \frac{\sqrt{2}\,\varepsilon\,\lambda_I}{\sqrt{(1 + \lambda_R)\, + \sqrt{(1 + \lambda_R)^2 + \lambda_I^2}}}}}}.
\end{split}
\end{equation*}
Hence, we observe that $\beta_I>0$ if and only if $1+\lambda_R + \sqrt{(1+\lambda_R)^2 + \lambda_I^2}>0$, and so $\tilde{\lambda} \notin(-\infty,-1]$ implies that bounded solutions $u_k$ to the linear problem \eqref{e:Stab} exhibit exponential radial decay and thus lie in $H^{4}_{(k)}(0,\infty)$. In order that a piecewise solution consisting of $u_{-}$ for $r<R$ and $u_{+}$ for $r>R$ is at least $C^3$ at $r=R$, the following matching conditions
\begin{equation*}
\begin{aligned}
u_{-}(R)={}&u_{+}(R), & \qquad
\mathcal{D}_{k}u_{-}(R)={}&\mathcal{D}_{k}u_{+}(R),\\
\Delta_{k} u_{-}(R)={}&\Delta_{k}u_{+}(R), &\qquad
\mathcal{D}_{k}\Delta_{k}u_{-}(R)={}&\mathcal{D}_{k}\Delta_{k}u_{+}(R),
\end{aligned}
\end{equation*}
must be satisfied. Hence, the constants $A_1, A_2, A_3$ must satisfy
\begin{equation*}
\begin{aligned}
A_1 u_1(R) + A_2 u_2(R) ={}& A_3 u_3(R) +\overline{A_3}\, \overline{u_3}(R), \\
A_1 \mathcal{D}_{k}u_1(R) + A_2 \mathcal{D}_{k}u_2(R) ={}& A_3 \mathcal{D}_{k}u_3(R) +\overline{A_3}\, \mathcal{D}_{k}\overline{u_3}(R), \\
A_1 \Delta_{k}u_1(R) + A_2 \Delta_{k}u_2(R) ={}& A_3 \Delta_{k}u_3(R) +\overline{A_3}\, \Delta_{k}\overline{u_3}(R), \\
A_1 \mathcal{D}_{k}\Delta_{k}u_1(R) + A_2 \mathcal{D}_{k}\Delta_{k}u_2(R) ={}& A_3 \mathcal{D}_{k}\Delta_{k}u_3(R) +\overline{A_3}\, \mathcal{D}_{k}\Delta_{k}\overline{u_3}(R), \\
\end{aligned}
\end{equation*}
where we have defined $u_1(r) = J_{k}(\alpha_{+}\,r)$, $u_2(r)= J_{k}(\alpha_{-}\,r)$ and $u_3(r) = H^{(1)}_{k}( \beta r)$. Since the Helmholtz equation $\Delta_k Z_k(\mu r) = - \mu^2 Z_{k}(\mu r)$ is satisfied for $Z_k \in\{J_k, Y_k, H_k^{(1)}\}$,
we obtain
\begin{equation*}
\begin{aligned}
A_1 u_1(R) + A_2 u_2(R) ={}& A_3 u_3(R) +\overline{A_3}\, \overline{u_3}(R), \\
A_1 \mathcal{D}_{k}u_1(R) + A_2 \mathcal{D}_{k}u_2(R) ={}& A_3 \mathcal{D}_{k}u_3(R) +\overline{A_3}\, \mathcal{D}_{k}\overline{u_3}(R), \\
A_1 u_1(R) -  A_2 u_2(R) ={}& \mathrm{i}\sqrt{\tfrac{1+\tilde{\lambda}}{1 - \tilde{\lambda}}} \, \left(A_3 u_3(R) - \overline{A_3}\, \overline{u_3}(R) \right), \\
 A_1 \mathcal{D}_{k}u_1(R) - A_2 \mathcal{D}_{k} u_2(R) ={}& \mathrm{i}\sqrt{\tfrac{1+\tilde{\lambda}}{1 - \tilde{\lambda}}} \, \left(A_3 \mathcal{D}_{k}u_3(R) - \overline{A_3}\, \mathcal{D}_{k}\overline{u_3}(R)\right), \\
\end{aligned}
\end{equation*}
which simplifies to
\begin{equation*}
\begin{aligned}
A_1 ={}& \frac{\rho\,\mathrm{e}^{\mathrm{i}\psi}\,A_3 u_3(R) + \rho\,\mathrm{e}^{-\mathrm{i}\psi}\, \overline{A_3}\, \overline{u_3}(R)}{u_1(R)}, \qquad\qquad
A_2 =\frac{\rho\,\mathrm{e}^{-\mathrm{i}\psi}\,A_3 u_3(R) + \rho\,\mathrm{e}^{\mathrm{i}\psi}\,\overline{A_3}\, \overline{u_3}(R)}{u_2(R)}, \\
\end{aligned}
\end{equation*}
and
\begin{equation*}
\begin{aligned}
0 ={}& \mathrm{e}^{\mathrm{i}\psi} A_3  W[u_1, u_3](R) + \mathrm{e}^{-\mathrm{i}\psi} \overline{A_3}\, W[u_1, \overline{u_3}](R), \\
0 ={}& \mathrm{e}^{-\mathrm{i}\psi} A_3 W[u_2, u_3](R) + \mathrm{e}^{\mathrm{i}\psi} \overline{A_3}\, W[u_2, \overline{u_3}](R),\\
\end{aligned}
\end{equation*}
where we recall that
\begin{equation*}
W[u, v](r) = r\left(u(r)\,v'(r) - v(r)\,u'(r)\right) = r\left(u(r)\,\mathcal{D}_{k}v(r) - v(r)\,\mathcal{D}_{k}u(r)\right),
\end{equation*}
and we have defined
\begin{equation*}
    \rho = \frac{1}{\sqrt{2}\,\sqrt{1-\tilde{\lambda}}},\qquad \qquad \mathrm{e}^{\mathrm{i}\psi} = \frac{1}{\sqrt{2}}\left(\sqrt{1-\tilde{\lambda}} + \mathrm{i}\,\sqrt{1+\tilde{\lambda}}\right).
\end{equation*}
Since the linear solution is defined up to a constant, we set $A_3 = \mathrm{e}^{\mathrm{i}\phi}\frac{1}{2\rho\,u_3(R)}$ and obtain
\begin{equation*}
u_k = \begin{cases}
\displaystyle \cos(\phi + \psi)\frac{J_{k}(\alpha_{+}\,r)}{J_{k}(\alpha_{+}\,R)} + \cos(\phi - \psi)\frac{J_{k}(\alpha_{-}\,r)}{J_{k}(\alpha_{-}\,R)}, & r<R,\\
\displaystyle \frac{1}{\rho}\, \mathrm{Re}\left(\mathrm{e}^{\mathrm{i}\phi}\, \frac{H^{(1)}_{k}( \beta r)}{H^{(1)}_{k}(\beta R)}\right), & r>R,\\
\end{cases}  
\end{equation*}
with $\phi$ given by
\begin{equation*}
\begin{aligned}
\mathrm{e}^{2\mathrm{i}\phi } ={}&-\mathrm{e}^{-2\mathrm{i}\psi}\frac{W[u_1, \overline{u_3}](R)u_3(R)}{W[u_1, u_3](R)\overline{u_3}(R)} = -\mathrm{e}^{2\mathrm{i}\psi}\frac{W[u_2, \overline{u_3}](R)u_3(R)}{W[u_2, u_3](R)\overline{u_3}(R)},\\
\end{aligned}
\end{equation*}
and where $R, \varepsilon, \tilde{\lambda}$ satisfy
\begin{equation*}
\begin{aligned}
F_{k}(R,\varepsilon,\tilde{\lambda}) ={}& \tfrac{1}{2\mathrm{i}} \left[\mathrm{e}^{\mathrm{i}2\psi} W[u_1, u_3](R)\,W[u_2, \overline{u_3}](R)- \mathrm{e}^{-\mathrm{i}2 \psi} W[u_1, \overline{u_3}](R)\,W[u_2, u_3](R) \right], \\
={}& \mathrm{Im}\left[\mathrm{e}^{\mathrm{i}2\psi}\, W[u_1, u_3](R)\,W[u_2, \overline{u_3}](R)\right]\\
={}& 0.
\end{aligned}
\end{equation*}
It follows from explicit calculations that 
\begin{equation*}
    \begin{split}
        \mathrm{e}^{\mathrm{i}2\psi} ={}& \frac{1}{2}\left(\sqrt{1-\tilde{\lambda}} + \mathrm{i}\,\sqrt{1+\tilde{\lambda}}\right)^2 =\mathrm{i}\left(\sqrt{1-\tilde{\lambda}^2} + \mathrm{i}\,\tilde{\lambda}\right) =\mathrm{i}\,\mathrm{e}^{\mathrm{i}\sin^{-1}(\tilde{\lambda})}
    \end{split}
\end{equation*}
and so
\begin{equation*}
\begin{aligned}
F_{k}(R,\varepsilon,\tilde{\lambda}) ={}& \mathrm{Re}\left[\mathrm{e}^{\mathrm{i}\sin^{-1}(\tilde{\lambda})}\, W[u_1, u_3](R)\,W[u_2, \overline{u_3}](R)\right].\\
\end{aligned}
\end{equation*}
\end{proof}

\subsection{Pre-compactness of solution branches}\label{app:comp}

In this section we attempt to prove that alternative $IV$ in Theorem~\ref{thm:Global} does not occur; that is, branches of solutions to \eqref{e:SH-steady} do not lose compactness for $\varepsilon>0$. For the remainder of this section we consider the following equation
\begin{equation}\label{e:1}
    \mathcal{L}_\varepsilon\,u = V_\varepsilon \, u + f_\varepsilon(u),
\end{equation}
where $\varepsilon>0$ is a fixed parameter, $u\in H^{2m}(\mathbb{R}^d)$ with $m>d/4$, and $f_\varepsilon$ is some real-analytic function that satisfies $f_\varepsilon(0)=f_\varepsilon'(0)=0$ and may depend on the parameter $\varepsilon$. The linear operator $\mathcal{L}_\varepsilon:H^{2m}(\mathbb{R}^d) \to L^2(\mathbb{R}^d)$ and potential function $V_\varepsilon = V_\varepsilon(\mathbf{x})$ are assumed to satisfy the following properties:
\begin{itemize}
    \item $\mathcal{L}_\varepsilon:H^{2m}(\mathbb{R}^d) \to L^2(\mathbb{R}^d)$ is defined as a Fourier multiplier, such that $\mathcal{F}[\mathcal{L}_\varepsilon u](\mathbf{k}) =\ell_\varepsilon(\mathbf{k})\,\mathcal{F}[u](\mathbf{k})$, where there exists some $d_\varepsilon>0$ such that the Fourier symbols $\ell_\varepsilon$ satisfies
    \begin{equation*}
        \ell_\varepsilon(\mathbf{k}) \geq d_\varepsilon \quad \text{for all $\mathbf{k}\in\mathbb{R}^d$,} \qquad\qquad|\ell_\varepsilon(\mathbf{k})|= \mathcal{O}(|\mathbf{k}|^{2m}) \quad \text{as $|\mathbf{k}|\to\infty$.}
    \end{equation*}
    \item $V_\varepsilon\in L^\infty(\mathbb{R}^d)$ and satisfies the following decay condition; for any $\delta>0$ there is some $T>0$ such that
    \begin{equation*}
        |V_\varepsilon(\mathbf{x})| \leq \delta \qquad \text{for all $|\mathbf{x}|>T$.}
    \end{equation*}
\end{itemize}
We emphasise that both the Swift-Hohenberg equation \eqref{e:SH} and the more general PDE \eqref{e:geneq} satisfy the above assumptions for $\mathcal{L}_\varepsilon$ and $V_\varepsilon$, and so we can apply any results for \eqref{e:1} directly to \eqref{e:SH-steady}.

We now prove that the global solution curve of an axisymmetric solution to \eqref{e:1} does not lose compactness. Note that in the following proof, with slight abuse of notation, we use $X^{2m}_{0}$ to denote the subspace of radially symmetric functions in $H^{2m}(\mathbb{R}^d)$.
\begin{lem}\label{lem:Compact-0}
    Let $\mathcal{C}\subset X^{2m}_{0}$ be a global bifurcation curve for a localised axisymmetric solution to \eqref{e:1}. Then, alternative IV in Theorem~\ref{thm:Global} cannot occur.
\end{lem}
\begin{proof}
Let $\{u_n\}$ be a bounded sequence in $X^{2m}_0$ of solutions to \eqref{e:1} and hence weakly convergent to some $u\in X^{2m}_0$. We note that $u_n\to u$ (up to subsequences) in $L^2(B_{T})$ for any $T>0$, where $B_{T}$ denotes the unit ball of radius $T$ about the origin. 

By Strauss' radial lemma~\cite[Lemma 1]{Strauss1977Radial}, there exists a constant $c_1>0$ such that
\begin{equation}\label{ineq:Strauss-0}
    |u(\mathbf{x})| \leq c_1\,|\mathbf{x}|^{-\frac{(d-1)}{2}}\,\| u\|_{H^1(\mathbb{R}^d)}
\end{equation}
holds for all $u\in X^1_0$. The condition that $f_\varepsilon$ satisfies $f_\varepsilon(0) = f_\varepsilon'(0)=0$ also implies that
\begin{equation*}
    |f_\varepsilon(u)| \leq c_2 |u|^{\alpha+1}
\end{equation*}
for all $u\in H^{2m}(\mathbb{R}^d)$ and some $\alpha>0$, where the constant $c_2>0$ is uniformly bounded in $H^{2m}(\mathbb{R}^d)$ (since $\|u\|_{L^\infty}$ is uniformly bounded). In particular, coupled with the uniform decay estimate \eqref{ineq:Strauss-0}, it follows that
\begin{equation*}
    |f_\varepsilon(u(\mathbf{x}))| \leq c_2 |u(\mathbf{x})|^{\alpha}\,|u(\mathbf{x})| \leq c_1^{\alpha}\,c_2 \,\| u\|_{H^1(\mathbb{R}^d)}^{\alpha} T^{-\frac{(d-1)\alpha}{2}} |u(\mathbf{x})|, \qquad\qquad \text{for all $|\mathbf{x}|\geq T$}
\end{equation*}
and so
\begin{equation*}
    \|f_\varepsilon(u)\|^2_{L^2(\mathbb{R}^d\backslash B_T)} \leq c_1^{2\alpha}\,c_2^2 \,\| u\|_{H^1(\mathbb{R}^d)}^{2\alpha} \|u\|^2_{L^2(\mathbb{R}^d)}\,  T^{-(d-1)\alpha}.
\end{equation*}
To prove that $u_n \to u$ strongly in $H^{2m}(\mathbb{R}^d)$, we note that $\mathcal{L}_\varepsilon:H^{2m}(\mathbb{R}^d) \to L^2(\mathbb{R}^d)$ is an isomorphism with
\begin{equation*}
    \| u\|_{H^{2m}(\mathbb{R}^d)} \leq \sup_{\mathbf{k}\in\mathbb{R}^d}\left(\frac{(1+|\mathbf{k}|^2)^m}{\ell_\varepsilon(\mathbf{k})}\right)\,\| \mathcal{L}_{\varepsilon}u\|_{L^2(\mathbb{R}^d)} =: C_\varepsilon\,\| \mathcal{L}_{\varepsilon}u\|_{L^2(\mathbb{R}^d)}
\end{equation*}
for all $u\in H^{2m}(\mathbb{R}^d)$. It thus follows that
\begin{equation*}
    \begin{split}
        \| u - u_n \|_{H^{2m}(\mathbb{R}^d)}^2 \leq{}& C_\varepsilon^2\,\|\mathcal{L}_\varepsilon (u - u_n) \|^2_{L^2(\mathbb{R}^d)},\\
        ={}& C_\varepsilon^2\,\left[\|\mathcal{L}_\varepsilon (u - u_n) \|^2_{L^2(B_T)} + \|\mathcal{L}_\varepsilon (u - u_n) \|^2_{L^2(\mathbb{R}^d\backslash B_T)}\right],\\
        \leq{}& C_\varepsilon^2\,\left[\|V_\varepsilon\|^2_{L^\infty}\,\| u - u_n\|_{L^2(B_T)}^2 + \|f_\varepsilon(u) - f_\varepsilon(u_n)\|^2_{L^2(B_T)}\right]\\
        &\qquad\qquad + 2C_\varepsilon^2\,\left[\|V_\varepsilon u\|^2_{L^2(\mathbb{R}^d\backslash B_T)} + \|V_\varepsilon  u_n\|^2_{L^2(\mathbb{R}^d\backslash B_T)}\right]\\
        &\qquad\qquad + 2C_\varepsilon^2\,\left[\|f_\varepsilon(u)\|^2_{L^2(\mathbb{R}^d\backslash B_T)} + \|f_\varepsilon(u_n)\|^2_{L^2(\mathbb{R}^d\backslash B_T)}\right],\\
        \leq{}& C_\varepsilon^2\,\left[\|V_\varepsilon\|^2_{L^\infty}\,\| u - u_n\|_{L^2(B_T)}^2 + \|f_\varepsilon(u) - f_\varepsilon(u_n)\|^2_{L^2(B_T)}\right]\\
        &\qquad\qquad + 2C_\varepsilon^2\,\left[\|u\|^2_{L^2(\mathbb{R}^d)} + \| u_n\|^2_{L^2(\mathbb{R}^d)}\right]\,\delta^2\\
        &\qquad\qquad + 2 c_1^{2\alpha}\,c_2^2 C_\varepsilon^2\,\left[\| u\|_{H^1(\mathbb{R}^d)}^{2\alpha} \|u\|^2_{L^2(\mathbb{R}^d)} + \| u_n\|_{H^1(\mathbb{R}^d)}^{2\alpha} \|u_n\|^2_{L^2(\mathbb{R}^d)}\right]\,T^{-(d-1)\alpha},\\
        \leq{}& \kappa,
    \end{split}
\end{equation*}
for any $\kappa>0$. The final step follows by choosing $T$ sufficiently large so that $T^{-(d-1)\alpha}$ and $\delta^2$ are both arbitrarily small, and then choosing $N = N(T)>0$ so that the first two terms are arbitrarily small for all $n>N$ (where we use that the each coefficient is uniformly bounded in $H^{2m}(\mathbb{R}^d)$, $u_n\to u$ strongly in $L^2(B_T)$ for any $T>0$, and $f_\varepsilon$ is at least continuous).
\end{proof}

The main difficulty in extending Lemma~\ref{lem:Compact-0} beyond axisymmetric solutions lies in reproducing the uniform decay estimate provided by Strauss' radial lemma. Categorising the uniform decay in general for $u\in H^1(\mathbb{R}^d)$ is significantly more difficult, as such functions are invariant under spatial translations and so we cannot expect that an estimate of the form \eqref{ineq:Strauss-0} holds for all $u\in H^1(\mathbb{R}^d)$. 

We present a slightly different formulation of Strauss' radial lemma for the special case $d=2$, which now holds for all $u\in H^1(\mathbb{R}^2)$.
\begin{lem}\label{lem:Strauss}
    Fix $k\in\mathbb{Z}$, suppose $u \in H^1(\mathbb{R}^2)$, and define $u_k\in H^1(\mathbb{R}^2)$ to be the mode-$k$ component of $u$, given by
    \begin{equation*}
        u_k(\mathbf{x}) = \frac{1}{2\pi}\int_{0}^{2\pi} u(\mathcal{R}_\theta\mathbf{x})\,\mathrm{e}^{-\mathrm{i}k\theta}\,\mathrm{d}\theta,
    \end{equation*}
    where $\mathcal{R}_\theta$ denotes the standard rotation matrix in $\mathbb{R}^2$ about an angle $\theta\in[0,2\pi)$. Then, there exists a constant $C>0$ such that the inequalities
    \begin{equation*}
    \begin{split}
        |u_k(\mathbf{x})| \leq{}& (2\pi)^{-\frac{1}{2}}\,C\,|\mathbf{x}|^{-\frac{1}{2}}\,\| u_k\|_{H^{1}(\mathbb{R}^2)},\qquad \qquad  \int_{0}^{2\pi} |u(\mathcal{R}_\theta\mathbf{x})|^2\,\mathrm{d}\theta \leq C^2\,|\mathbf{x}|^{-1}\,\| u\|^2_{H^{1}(\mathbb{R}^2)},
    \end{split}
    \end{equation*}
    hold for all $\mathbf{x}\in\mathbb{R}^2$. In particular, the constant $C$ is given by $C = \left(\max_{s\geq0} \{s\,I_0(s)\,K_0(s)\}\right)^{\frac{1}{2}}$.
\end{lem}
\begin{proof}
We begin by noting that any $u\in H^1(\mathbb{R}^2)$ can be expressed via the modal decomposition
\begin{equation}\label{id:u-fourier}
    u = \sum_{k\in\mathbb{Z}} u_k, 
\end{equation}
where the infinite sum converges in $H^1(\mathbb{R}^2)$ such that we can write
\begin{equation*}
    \sum_{k\in\mathbb{Z}} \| u_k \|_{H^1(\mathbb{R}^2)}^2 = \| u\|_{H^1(\mathbb{R}^2)}^2.
\end{equation*}
We note that the \textit{mode-$k$ function} $u_k$ satisfies the following identities
\begin{equation*}
    u_k(r\cos\theta,r\sin\theta) = \mathrm{e}^{\mathrm{i}k\theta}\hat{u}_k(r), \qquad\qquad \mathcal{F}[u_k](\rho\cos\omega,\rho\sin\omega) = \mathrm{i}^{-k}\mathrm{e}^{\mathrm{i}k\omega}\mathcal{H}_{k}[\hat{u}_k](\rho), 
\end{equation*}
where $\hat{u}_k$ is termed the \textit{radial coefficient} of $u_k$ (cf.\ Groves and Hill \cite{groves2024function}), and $\mathcal{H}_{k}$ denotes the $k$-index Hankel transform given by
\begin{equation*}
    \mathcal{H}_k[\hat{u}_k](\rho) = \int_{0}^{\infty} J_{k}(\rho\,r)\,\hat{u}_k(r)\,r\,\mathrm{d}r.
\end{equation*}
The Hankel transform is self-inverse, and so we obtain the following estimate
\begin{equation*}
    \begin{split}
        |u_k(\mathbf{x})| ={}& |\hat{u}_k(|\mathbf{x}|)|,\\
        ={}& \left|\int_{0}^{\infty} J_{k}(|\mathbf{x}|\,\rho)\,\mathcal{H}_{k}[\hat{u}_k](\rho)\,\rho\,\mathrm{d}\rho\right|,\\
        \leq{}& \left(\frac{1}{2\pi}\int_{0}^{\infty} \frac{|J_{k}(|\mathbf{x}|\,\rho)|^2}{(1+\rho^2)}\,\rho\,\mathrm{d}\rho\right)^{\frac{1}{2}}\left(2\pi\int_{0}^{\infty} (1+\rho^2)\,|\mathcal{H}_{k}[\hat{u}_k](\rho)|^2\,\rho\,\mathrm{d}\rho\right)^{\frac{1}{2}},\\
        ={}& \left(\frac{1}{2\pi} I_{k}(|\mathbf{x}|)\,K_{k}(|\mathbf{x}|)\right)^{\frac{1}{2}}\left(\int_{\mathbb{R}^2}(1+|\boldsymbol{\xi}|^2)\,|\mathcal{F}[u_k](\boldsymbol{\xi})|^2\,\mathrm{d}\boldsymbol{\xi}\right)^{\frac{1}{2}},\\
        \leq{}& (2\pi)^{-\frac{1}{2}}\left(\max_{s\geq0} \{s\,I_{0}(s)\,K_{0}(s)\}\right)^{\frac{1}{2}}\,|\mathbf{x}|^{-\frac{1}{2}}\,\| u_k\|_{H^1(\mathbb{R}^2)}.\\
    \end{split}
\end{equation*}
The last step follows from the fact that the function $r\mapsto r I_k(r)\,K_k(r)$ converges to $\frac{1}{2}$ as $r\to\infty$ for all $k\in\mathbb{Z}$, is monotone increasing for all $k \in \mathbb{Z}\backslash\{0\}$, and is continuous and converges to $0$ as $r\to0$ for $k=0$. Hence we have derived the Strauss-type decay estimate
\begin{equation}\label{id:Strauss;k}
    \begin{split}
        |u_k(\mathbf{x})|  \leq{}& (2\pi)^{-\frac{1}{2}}\,C\,|\mathbf{x}|^{-\frac{1}{2}}\,\| u_k\|_{H^1(\mathbb{R}^2)}.\\
    \end{split}
\end{equation}
We note that the modal decomposition \eqref{id:u-fourier} yields the following Fourier expansion for $u(\mathcal{R}_\theta\,\cdot)$ 
\begin{equation*}
    u(\mathcal{R}_\theta\,\cdot) = \sum_{k\in\mathbb{Z}} \mathrm{e}^{\mathrm{i}k\theta}\,u_k,
\end{equation*}
and so applying Parseval's identity yields
\begin{equation*}
    \frac{1}{2\pi}\int_{0}^{2\pi} |u(\mathcal{R}_\theta\mathbf{x})|^2\,\mathrm{d}\theta = \sum_{k\in\mathbb{Z}} |u_k(\mathbf{x})|^2.
\end{equation*}
Applying the Strauss-type estimate \eqref{id:Strauss;k} for each $k\in\mathbb{Z}$, we obtain
\begin{equation*}
    \int_{0}^{2\pi} |u(\mathcal{R}_\theta\mathbf{x})|^2\,\mathrm{d}\theta \leq \sum_{k\in\mathbb{Z}} C^2\,|\mathbf{x}|^{-1}\|u_k\|^2_{H^1(\mathbb{R}^2)} = C^2\,|\mathbf{x}|^{-1}\|u\|^2_{H^1(\mathbb{R}^2)}
\end{equation*}
where the last step follows from the $H^1$-convergence of \eqref{id:u-fourier} (which implies that the infinite sum in Parseval's identity converges uniformly, due to the Weierstrass M-test).
\end{proof}

While this lemma provides an equivalent decay rate to that found by Strauss in the axisymmetric case, the result does not provide a uniform decay estimate for $u\in H^1(\mathbb{R}^2)$ but rather on the spherical mean of $u$. This avoids the problem of translational invariance in $H^1(\mathbb{R}^2)$, but means that we are unable to simply apply the same argument as in the proof of Lemma~\ref{lem:Compact-0}. In particular, we require an estimate of the form
\begin{equation*}
    \int_{0}^{2\pi} |u(\mathcal{R}_\theta\mathbf{x})|^{2(1+\alpha)}\,\mathrm{d}\theta \leq c_1^2\,\rho(T)\,\int_{0}^{2\pi} |u(\mathcal{R}_\theta\mathbf{x})|^2\,\mathrm{d}\theta \qquad \text{for all $|\mathbf{x}|>T$,}
\end{equation*}
where $\rho$ is some function that satisfies $\rho(T)\to0$ as $T\to\infty$. If such an estimate can be obtained, then the pre-compactness of the global branches would follow identically to the axisymmetric case presented in Lemma~\ref{lem:Compact-0}.

\begin{rmk}\label{rmk:Compactness-extend}
    We expect that all bifurcating solutions $u\in H^{2m}(\mathbb{R}^d)$ to \eqref{e:1} decay at an exponential rate as $|\mathbf{x}|\to\infty$. One can characterise this decay property via radial exponential dichotomies (as recently developed by Beck, Goh \& Haslam-Hyde~\cite{beck2025}) or by studying the associated Green's function of $\mathcal{L}_\varepsilon$ (where one can use Paley--Wiener theory to characterise the decay of a general Fourier symbol $\ell_\varepsilon$, such as in the work of Arnesen~\cite{Arnesen2022Decay} for one-dimensional Green's functions). However, in both of these approaches it remains difficult to obtain the \emph{uniform} exponential decay rates required in order to prove compactness. 
\end{rmk}

\bibliographystyle{abbrv}
\bibliography{Bibliography.bib}

\end{document}